\documentclass[12pt,reqno]{amsart}
\usepackage{fullpage}
\usepackage{amsmath,amssymb,amsthm,mathtools}
\usepackage{colonequals,booktabs}
\usepackage[colorlinks=true,linkcolor=blue,citecolor=blue,urlcolor=blue]{hyperref}

\newtheorem{theorem}{Theorem}[section]
\newtheorem{lemma}[theorem]{Lemma}
\newtheorem{proposition}[theorem]{Proposition}
\newtheorem{corollary}[theorem]{Corollary}
\theoremstyle{remark}
\newtheorem{remark}[theorem]{Remark}
\newcommand{\R}{\mathbb R}
\newcommand{\Z}{\mathbb Z}
\newcommand{\E}{\mathbb E}
\newcommand{\C}{\mathbb C}
\newcommand{\dd}{\,\mathrm d}
\newcommand{\1}{\mathbf 1}
\newcommand{\ip}[2]{\langle #1,#2\rangle}
\newcommand{\norm}[1]{\lVert #1\rVert}
\renewcommand{\subset}{\subseteq}
\newcommand{\embolden}[1]{\textbf{#1}}

\newcommand{\B}{\mathbb B}
\newcommand{\Stab}{\operatorname{Stab}}

\renewcommand{\epsilon}{\varepsilon}

\title{Sharp Hardness for MAX-3-CUT and Quantum MAX-CUT}
\author{Steven Heilman}
\date{\today}

\thanks{
Email: stevenmheilman@gmail.com\\
Supported by NSF Grant CCF AF 2448108.  \\
2020 Mathematics Subject Classification: 68W27, 65K10, 68Q32\\
Keywords: MAX-CUT, MAX-3-CUT, hardness of approximation, Unique Games Conjecture\\
Department of Mathematics, University of Southern California, Los Angeles, CA 90089}
\begin{document}

\begin{abstract}
Assuming the Unique Games Conjecture, we show it is NP-hard to approximate MAX-3-CUT within  a multiplicative factor of $\alpha_3+\epsilon$ for every $\epsilon>0$, where $\alpha_3\approx.83600811464$ is the approximation ratio of Frieze-Jerrum's polynomial-time algorithm from 1995.  That is, we prove sharp hardness of approximation for MAX-3-CUT.  This result resolves a conjecture of Khot-Kindler-Mossel-O'Donnell from 2004 by proving the three candidate Plurality is Stablest Conjecture for correlations in $[-1/2,2/5]$ and generalizes the Majority is Stablest Theorem of Mossel-O'Donnell-Oleszkiewicz [Annals of Math, 2010].

With a similar strategy we prove: assuming the Unique Games Conjecture, it is NP-hard to approximate the product-state value of Quantum MAX-CUT within a multiplicative factor of $\alpha_{\rm BOV}+\epsilon$ for every $\epsilon>0$, where $\alpha_{\rm BOV}\approx 0.9563372685$ is the approximation ratio of the Bri\"et-de Oliveira Filho-Vallentin algorithm.  This sharp hardness result completes the conjectured hardness of Hwang-Neeman-Parekh-Thompson-Wright from 2021 by proving their $S^{k-1}$-valued Borell inequality for correlations in $[-.5843,.5843]$ for all $k\geq3$.
\end{abstract}

\maketitle

\section{Introduction}

The MAX-CUT problem is arguably one of the simplest and most thoroughly studied combinatorial optimization problems.  It is one of Karp's original 21 NP-complete problems \cite{karp72}.  In this paper we make progress on understanding the computational hardness of two of its natural generalizations: MAX-3-CUT and Quantum MAX-CUT.  For these problems, efficient algorithms are known, but understanding their optimality was a significant challenge.

The MAX-CUT problem asks for the partition of the vertices of a finite undirected graph into two sets that maximizes the number of edges going between the two sets.  The result of \cite{cres01} shows that the weighted and unweighted versions of MAX-CUT have the same hardness of approximation (and likewise for MAX-3-CUT).  The MAX-CUT problem is NP-hard, so if P$\neq$NP, no polynomial-time algorithm can solve it.  It is easy to find a partition of the vertices into two sets that is at least $50\%$ as good as the best possible, by random selection of vertices into one of the sets.  This algorithm can be derandomized \cite{sahni76}.  However, achieving a quality of approximation beyond 50\% appeared to be a difficult problem.

A highly nontrivial improvement appeared in 1995: for any $\epsilon>0$, the semidefinite programming algorithm of \cite{goemans95} approximates the MAX-CUT problem in polynomial-time within a multiplicative factor of $\alpha_{2}-\epsilon$, where
\begin{equation}\label{a2def}
\alpha_{2}
\colonequals
\inf_{-1\leq\rho\leq 1}\frac{2}{1}\frac{\frac{1}{\pi}\arccos(\rho)}{1-\rho}\\
\approx.87856720578.
\end{equation}
That is, it is possible to obtain in polynomial-time a partition of the graph where the number of edges going between the two partition elements is at least 87.8\% as much as the largest partition value.

The approximation factor \eqref{a2def} remains the best possible in polynomial-time.  It is then desirable to obtain a matching hardness result.  MAX-CUT was proven MAX-SNP hard in 1991 \cite{papa91}, and in 1998, it was shown \cite{bellare98} that approximating MAX-CUT within a factor of $83/84$ is NP-hard.  This was improved to $16/17\approx.9412$ in \cite{hastad01}.  Despite not reaching the $.878$ value, this is the best known unconditional NP-hardness result for MAX-CUT.  It was also shown in \cite{feige02} that integrality gap instances exist for the Goemans-Williamson algorithm, i.e. graphs exist that achieve the worst case behavior of this algorithm.

In a breakthrough work, \cite{khot07} showed that, assuming the Unique Games Conjecture, the gap between the $16/17$ hardness and the $.87856\ldots$ approximation can be closed exactly.

\begin{theorem}[Sharp Hardness for MAX-CUT, {\cite[Theorem 1]{khot07}}]\label{thm8}
Assume that the Unique Games Conjecture is true \cite{khot02,khot10a,khot18}.  Then, for any $\epsilon>0$, it is NP-hard to approximate MAX-CUT within a multiplicative factor of $\alpha_{2}+\epsilon$.
\end{theorem}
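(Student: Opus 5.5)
The plan follows the Khot--Kindler--Mossel--O'Donnell framework: a dictatorship test with completeness/soundness gap $\frac{1-\rho}{2}$ versus $\frac{\arccos\rho}{\pi}$, composed with Unique Games.

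\textbf{Step 1 (dictatorship test).} For a fixed $\rho\in(-1,0)$, I will design a test on functions $f\colon\{-1,1\}^n\to\{-1,1\}$, arising as long codes of label assignments.
\begin{itemize}
\item Choose $x$ uniformly at random.
\item Choose $y$ by setting each $y_i=x_i$ independently with probability $\frac{1+\rho}{2}$ (and $y_i=-x_i$ otherwise), so that $\E x_iy_i=\rho$.
\item Accept iff $f(x)\neq f(y)$.
\end{itemize}
The acceptance probability is $\frac{1-\Stab_\rho(f)}{2}$, where $\Stab_\rho(f)=\sum_S \rho^{|S|}\hat f(S)^2$.

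\textbf{Step 2 (completeness).} If $f$ is a dictator, it passes with probability $\frac{1-\rho}{2}$.

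\textbf{Step 3 (soundness).} Suppose every coordinate of $f$ has small low-degree influence. The Majority is Stablest Theorem of Mossel--O'Donnell--Oleszkiewicz, in its form for $\rho\in(-1,0)$, gives
\[
\Stab_\rho(f)\geq 1-\tfrac{2}{\pi}\arccos\rho-\epsilon .
\]
This form is derived from the positive-correlation case by passing to the odd part of $f$, which is why no balance assumption is needed. Consequently $f$ passes with probability at most $\frac{\arccos\rho}{\pi}+\epsilon$.

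\textbf{Step 4 (reduction from Unique Games).} I will combine the test with a Unique Games instance $(G,[n],\{\pi_{vw}\})$ in the standard way.
\begin{itemize}
\item The vertices of the MAX-CUT graph are $V\times\{-1,1\}^n$.
\item Pick $v$ at random, then two random neighbors $w,w'$.
\item Take $(x,y)$ as in Step 1 and place an edge between $(w,x\circ\pi_{vw})$ and $(w',y\circ\pi_{vw'})$.
\end{itemize}

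\textbf{Completeness of the reduction.} If a $1-\eta$ fraction of the constraints are satisfiable, the long codes of a good labeling give a cut of value at least $(1-2\eta)\frac{1-\rho}{2}$.

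\textbf{Soundness of the reduction.} Suppose the cut has value at least $\frac{\arccos\rho}{\pi}+2\epsilon$. By averaging, for at least an $\epsilon$ fraction of vertices $v$, the averaged function $g_v=\E_w f_w\circ\pi_{vw}$ passes the test with probability exceeding the bound of Step 3. Here I use that the test is bilinear, so $\Stab_\rho$ applied to $g_v$ is exactly what is measured, and that $g_v$ takes values in $[-1,1]$, for which the Majority is Stablest Theorem still applies. Step 3 then says $g_v$ has a coordinate with low-degree influence at least $\tau$. By convexity, a constant fraction of the neighbors $w$ have $\pi_{vw}$ mapping that coordinate to a coordinate of noticeable low-degree influence in $f_w$. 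Since low-degree influences sum to at most $d$, each function has at most $d/\tau$ candidate labels. Labeling randomly among these candidates satisfies a constant fraction of the constraints, contradicting the soundness of the Unique Games Conjecture.

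\textbf{Step 5 (optimizing $\rho$).} Choosing $\rho$ to attain the infimum in \eqref{a2def} makes the completeness/soundness ratio $\alpha_2+O(\epsilon)$. Hence approximating MAX-CUT within a factor of $\alpha_2+\epsilon$ is NP-hard under the Unique Games Conjecture.

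The main obstacle is Step 3, the Majority is Stablest inequality, which is a deep Gaussian result: an invariance principle reduces it to Borell's isoperimetric inequality. Since it is a published theorem, I would cite it. Everything else, namely the influence decoding and the handling of the averaged functions $g_v$, is routine bookkeeping.
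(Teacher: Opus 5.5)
Your proposal is correct and is essentially the Khot--Kindler--Mossel--O'Donnell argument that the paper relies on. The paper gives no proof of its own for this statement: it cites \cite[Theorem 1]{khot07} and notes that it rests on the Majority is Stablest Theorem and invariance principle of \cite{mossel10}, which is exactly the ingredient you isolate in Step 3. Your derivation of the negative-correlation bound from the balanced positive case via the odd part of $f$ is also the device the paper uses in its Gaussian oddification lemma (Lemma \ref{lem:oddification}).
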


Theorem \ref{thm8} is called a sharp hardness result since the quantity $\alpha_{2}$ gives an exact barrier between tractability (i.e. the polynomial-time algorithm of \cite{goemans95}) and intractability (as stated in Theorem \ref{thm8}).  Theorem \ref{thm8} relied on the Majority is Stablest Theorem and Invariance Principle of \cite{mossel10}.  The invariance principle showed that the discrete Majority is Stablest Theorem is equivalent to the continuous Borell's inequality \cite{borell85}, a known fact from Gaussian geometry proven 20 years prior.  In other words, \cite{mossel10} showed that the discrete problem they needed was equivalent to a known continuous problem.  Other examples of invariance principles used in different contexts include \cite{chatterjee06} and \cite{rotar79}.

For a statement of the Unique Games Conjecture, see \cite{khot02,khot10a}.  For some recent progress demonstrating that the conjecture is ``halfway proven,'' see \cite{khot18}.

Before the Unique Games Conjecture was introduced, the PCP Theorem \cite{arora98,arora98b,hastad01} implied that any constraint satisfaction problem (CSP) should be NP-hard to approximate with some multiplicative approximation less than $1$.  This led to some optimal inapproximability results \cite{hastad01}.  For example, MAX-3-SAT has an elementary efficient $7/8$ approximation algorithm that chooses variable assignments randomly, but for any $\epsilon>0$, a $7/8+\epsilon$ approximation is NP-hard \cite{hastad01}.  Similarly, MAX-3-LIN(2) has an elementary efficient $1/2$ approximation algorithm that chooses variable assignments randomly, but for any $\epsilon>0$, a $1/2+\epsilon$ approximation is NP-hard \cite{hastad01}.  The problems MAX-3-SAT and MAX-3-LIN(2) are therefore approximation resistant: it is hard to approximate them better than simple random guessing.  Other sharp NP-hardness of approximation results include: Maximum k-coverage \cite{feige98}, Metric k-center \cite{hoch85} and set cover \cite{dinur14}, the latter relying on parallel repetition.

Some optimal inapproximability results that rely on the Unique Games Conjecture (besides Max-CUT) include: MAX-2-SAT \cite{brak24}, Min Vertex Cover \cite{khot08}, Grothendieck's inequality \cite{ragh09}, kernel clustering \cite{khot10}, etc.  In these problems, unlike MAX-3-SAT and MAX-3-LIN(2), random guessing does not lead to an optimal algorithm.  Raghavendra's result \cite{ragh08} shows that essentially any CSP will have some sharp inapproximability, assuming the Unique Games Conjecture holds.  However, this result does not determine what the constant of inapproximability will be for any particular problem.

\subsection{MAX-3-CUT}

The MAX-3-CUT problem asks for the partition of the vertices of a finite undirected graph into three sets that maximizes the number of edges going between the three sets.  This is a natural generalization of MAX-CUT to three sets of vertices.  This problem is equivalent to the Maximum 3-Colorable Subgraph problem.  As we will see, the analysis of MAX-3-CUT reveals surprising difficulties that were not apparent for MAX-CUT itself.  In particular, efficient approximation algorithms for this problem were developed in succession with those for MAX-CUT, but the sharp hardness has remained an open problem until now.

A random assignment of vertices to parts of the partition yields a $2/3$ multiplicative approximation to MAX-3-CUT, so an approximation larger than $2/3$ yields a nontrivial algorithm.

In 1997, Frieze and Jerrum \cite{FriezeJerrum1997} found an $\alpha_3$ approximation algorithm for MAX-3-CUT by generalizing the semidefinite programming algorithm of \cite{goemans95}, where
\begin{equation}\label{alpha3def}
\begin{aligned}
\alpha_{3}
&\colonequals\inf_{-\frac{1}{2}\leq\rho\leq 1}\frac{3}{2}\frac{1-3\Big(\frac{1}{9}+\frac{[\arccos(-\rho)]^{2} - [\arccos(\rho/2)]^{2}}{4\pi^{2}}\Big)}{1-\rho}\\
&=\frac{3}{2}\frac{1-3\Big(\frac{1}{9}+\frac{[\arccos(1/2)]^{2} - [\arccos(-1/4)]^{2}}{4\pi^{2}}\Big)}{1+1/2}
\approx.83600811464.
\end{aligned}
\end{equation}

As with the Goemans-Williamson algorithm for MAX-CUT, this remains the best known polynomial-time approximation algorithm for MAX-3-CUT.  And it again becomes desirable to obtain a matching hardness result.

In 1996 \cite{kann97} proved that approximating MAX-3-CUT within a multiplicative factor of $101/102$ is NP-hard.  This was improved to $67/68$ in \cite{trevisan00}, and further improved to $32/33$ in \cite{gurus13}.  The best known NP-hardness of approximation is now $16/17$ from 2014 \cite{austrin14}.  Also \cite{nagda25} obtained a gadget-based reduction with a worse approximation factor of $55/57$.  As with MAX-CUT, there is a gap between the best known algorithm's performance and the best known NP-hardness result.

In 2009, \cite{isaksson12} attempted to generalize the proof of sharp Unique Games hardness \cite{khot07} from MAX-CUT to MAX-3-CUT.  They succeeded in applying the invariance principle machinery from \cite{mossel10} to MAX-3-CUT.  Recall that, in the case of MAX-CUT, the required Majority is Stablest Theorem was shown to be equivalent to the known Borell's inequality.  However, MAX-3-CUT hardness requires the Plurality is Stablest Conjecture, which is equivalent to a continuous but unproven inequality known as the Standard Simplex Conjecture \cite{isaksson12}.

Our first main result, Theorem \ref{thm1} below, proves the Standard Simplex Conjecture of \cite{isaksson12} for three sets for correlations in $[-1/2,2/5]$.  We state an informal version here.

\begin{theorem}[Standard Simplex Conjecture, Informal, Three Sets, $-1/2\leq\rho\leq2/5$]\label{thm1}
Let $-1/2\leq\rho<0$.  The three sets in Euclidean space of dimension $n\geq2$ minimizing noise stability with correlation $\rho$ are $(\Theta_i\times\R^{n-2})_{i=1}^{3}$ where $(\Theta_i)_{i=1}^{3}\subset\R^2$ are three disjoint $120$ degree sectors centered at the origin.  Up to rotations and measure zero changes, these are the only minimizing sets.

If $0<\rho\leq2/5$, these three sets maximize noise stability among partitions each of whose sets have Gaussian measure $1/3$.
\end{theorem}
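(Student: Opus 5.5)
The plan is to work with the Ornstein--Uhlenbeck semigroup $T_\rho$ and the noise stability $\sum_{i=1}^3\int f_i T_\rho f_i\,d\gamma_n$ of a partition $f_i=\1_{\Omega_i}$. I will first reduce to the two-dimensional setting. By compactness in the weak-$*$ topology of measurable partitions (relaxed to $f_i\ge0$, $\sum f_i=1$), a minimizer exists for $\rho<0$. For $\rho<0$ the functional is concave in $f$, since $T_\rho$ has eigenvalues $\rho^k$ that alternate in sign after the constant mode is removed. The constant mode is fixed once the measures are fixed, but for $\rho<0$ we may also vary the measures. Concavity is then not right, so instead I will use the first variation. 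At a minimizer, $\Omega_i=\{x: T_\rho f_i(x) < T_\rho f_j(x)\ \forall j\neq i\}$ up to null sets, so the boundaries are real-analytic. A second-variation argument along translations and rotations, in the style of the author's earlier dimension-reduction work on the Gaussian multi-bubble, then shows that the minimizer is a product $\Theta_i\times\R^{n-2}$ with $\Theta_i\subset\R^2$. Concretely, I will take a vector field $v$ and plug the perturbations $\Omega_i+tv$ into the second variation. Summing over an orthonormal basis of $v$ produces a quadratic form that is negative unless the normals to the boundaries span at most a $2$-dimensional space.

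Next I will solve the planar problem. On $\R^2$ I will expand each $f_i$ in the Hermite (equivalently polar Fourier--Laguerre) basis. Noise stability becomes $\sum_k \rho^k\|\mathrm{proj}_k f_i\|^2$. The key inequality I aim to show is that, for $\rho\in[-1/2,0)$, the degree-$1$ contribution dominates: among partitions, the minimum of $\sum_i \rho\,\|\mathrm{proj}_1 f_i\|^2+\sum_{k\ge2}\rho^k\|\mathrm{proj}_k f_i\|^2$ occurs when $\sum_i|\E[x f_i]|^2$ is maximized. By the rearrangement bound this maximum is attained uniquely by the $120^\circ$ sectors. It is essentially the statement that the sectors maximize $\sum_i|\int_{\Omega_i}x\,d\gamma|$. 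The higher even-degree terms are positive and must be controlled. I intend to bound $\sum_{k\ge2}\rho^k\|\mathrm{proj}_k f_i\|^2$ by $\rho^2(1-\sum_k\ldots)$ and compare it to the gain in degree $1$. This is where the threshold $\rho\ge-1/2$ should enter, perhaps combined with the explicit formula in \eqref{alpha3def} for the sectors' stability, $3(\frac19+\frac{\arccos(-\rho)^2-\arccos(\rho/2)^2}{4\pi^2})$.

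For $0<\rho\le 2/5$ with measures fixed at $1/3$, I will use the same two steps with the sign reversed. Degree $0$ is fixed, and degree $1$ now enters positively with weight $\rho$. Higher degrees carry weight at most $\rho^2$ times the remaining mass $\sum_i(\tfrac13-\tfrac19-\|\mathrm{proj}_1 f_i\|^2)$. Maximizing $\rho a+\rho^2(c-a)$ in $a=\sum_i\|\mathrm{proj}_1 f_i\|^2$ is increasing when $\rho>\rho^2$. The difficulty is that this crude bound is not sharp for the sectors, since they have positive higher-degree mass. So I will instead compare directly: I will write the stability deficit relative to the sectors as a degree-$1$ deficit minus a higher-degree surplus. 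I will then show the ratio is controlled by a numerical constant, and the value $2/5$ is presumably where that constant breaks down. Uniqueness follows from the equality case in the degree-$1$ rearrangement.

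The main obstacle is this spectral comparison: showing that a loss in first-order Hermite mass cannot be compensated by higher-order mass. I would first try to prove a stability version of the first-moment rearrangement inequality, namely that $\max-\sum_i|\E x f_i|^2\ge c\sum_{k\ge2}\|\mathrm{proj}_k f\|^2$ restricted to degrees $\ge2$ perpendicular to the sector structure. I would establish it using the Euler--Lagrange structure from step one, so that only near-sector configurations need be compared, via a local expansion in the boundary rays' angles.
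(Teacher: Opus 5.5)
There is a genuine gap, and it is exactly the step you flag as the ``main obstacle'': your plan for the planar problem has no mechanism that can prove a \emph{global}, sharp inequality.

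Bounding Hermite degrees $k\ge2$ by a fixed power of $|\rho|$ times the leftover mass cannot work even in principle. This covers both of your versions: dropping even terms and using $|\rho|^3$ for odd ones when $\rho<0$, or using $\rho^2$ when $\rho>0$. The sectors carry nonzero mass in every degree $k\ge1$, since each angular mode $m\not\equiv0\bmod 3$ of $\1_{\Theta_i}$ feeds degrees $|m|,|m|+2,\ldots$. So such a bound is already strictly lossy at $\Theta$, and its optimum over partitions lies strictly on the wrong side of $Q_\rho^{(2)}(\Theta,\Theta)$. The pieces also do not share an optimizer: for $\rho<0$ and small $|\rho|$, the sectors' degree-$\ge2$ contribution is positive while the trivial partition's is $0$.

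Your fallback does not close the gap either. You propose a stability estimate proved by using the Euler--Lagrange condition to restrict to near-sector configurations and then expanding in the angles of the boundary rays. But the condition $\Omega_i=\{T_\rho\1_{\Omega_i}<T_\rho\1_{\Omega_j}\ \forall j\neq i\}$ gives no a priori proximity to $\Theta$. It does not even make the planar cells cones. A second-order expansion at $\Theta$ yields only local minimality, and nothing in the plan produces the thresholds $-1/2$ and $2/5$.

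A secondary issue: for $\rho<0$ the quadratic form is indefinite, so your sketch gives neither set-valued minimizers nor dimension reduction. The paper instead passes to the bilinear problem $Q_\sigma^{(2)}(A,B)$ with $\gamma_2(A_i)=\gamma_2(B_i)$, via $T_{-\sigma}\1_E=T_\sigma\1_{-E}$, and uses the bilinear dimension reduction.

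The missing idea is to make the comparison exact at the sectors radius by radius rather than degree by degree.
\begin{itemize}
\item Conditioning on the two radii, the angular kernel is the von Mises density $e^{a\cos(\theta-\phi)}/I_0(a)$.
\item Circular rearrangement (Baernstein--Taylor) replaces each cell on each circle by an arc: opposing arcs in the bilinear negative case, aligned arcs for $\rho>0$. So an arbitrary planar partition is encoded by arc-length profiles $p(r),q(r)\in\Delta_3$.
\item The joint arc inequality, sharpened by isolating the first angular mode, gives for each pair of radii a quadratic bound in $x=p-b\1$, $y=q-b\1$. This bound is an equality at $p=q=b\1$.
\item Integrating in the radii yields the single form $\sum_i\big(\ip{x_i}{\mathcal K_*y_i}+\ip{x_i}{M_{J_*}x_i}+\ip{y_i}{M_{J_*}y_i}\big)$.
\item Writing $z^\pm=(x\pm y)/\sqrt2$, with $z^-$ mean zero by the measure constraint, nonnegativity reduces to weighted Hilbert--Schmidt bounds $h_{\rm lower},h_{\rm upper}<2$, and $\mathfrak h_+<2$ for $\rho>0$.
\item These bounds are certified by interval arithmetic for $\sigma\in[1/43,1/2]$ and $\rho\in[1/10,2/5]$, with smaller correlations taken from earlier work.
\end{itemize}
This global operator estimate is what replaces your unproven claim that the degree-$1$ contribution dominates.
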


The case $\rho=0$ is uninteresting, since the noise stability of a set at correlation zero is just the squared Gaussian measure of the set.

Theorem  \ref{thm1} is proven by combining Theorems \ref{thm:main} and \ref{thm:positive-companion} below.

Due to the equivalence results of \cite[Theorem 1.10]{isaksson12}, a standard simplex conjecture proof implies also a Plurality is Stablest result.  For now, we keep the statement somewhat informal.  For a more formal statement, see \cite{isaksson12,Heilman2023}.

\begin{theorem}[Plurality is Stablest, Informal, Three Candidates, $-1/2\leq\rho\leq2/5$]\label{thm2}
Let $-1/2\leq\rho<0$.  The plurality function has the smallest noise stability with correlation $\rho$ among all low-influence functions $f\colon\{1,2,3\}^n\to\{1,2,3\}$.

Let $0<\rho\leq2/5$.  The plurality function has the largest noise stability with correlation $\rho$ among all low-influence functions with $\mathbb{P}(f=i)=1/3$ for all $i=1,2,3$.
\end{theorem}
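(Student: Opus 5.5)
Theorem \ref{thm2} will be deduced from Theorem \ref{thm1} through the invariance principle, following the equivalence of \cite[Theorem 1.10]{isaksson12}. First I fix notation. For $f\colon\{1,2,3\}^n\to\{1,2,3\}$, write $f_i=\1_{\{f=i\}}$. Define the noise stability
\[
\Stab_\rho(f)=\sum_{i=1}^3\E\, f_i(x)f_i(y),
\]
where $(x,y)$ is a $\rho$-correlated pair of uniform points of $\{1,2,3\}^n$. I would also pass to the simplex-valued relaxation $F=(f_1,f_2,f_3)\colon\{1,2,3\}^n\to\Delta_3$, where $\Delta_3$ is the standard simplex. Concretely, the steps are as follows.

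\textbf{Step 1 (smoothing).} Replace $F$ by $T_{1-\eta}F$. This has low-degree dominance and still has low influences, and it changes $\Stab_\rho$ by only $o_\eta(1)$.

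\textbf{Step 2 (invariance).} Apply the vector-valued invariance principle of \cite{mossel10}, in the multi-coordinate form used in \cite{isaksson12}. It shows that the multilinear extension $G=(g_1,g_2,g_3)$ of $T_{1-\eta}F$, evaluated on Gaussian inputs (using an orthonormal ensemble of size two per coordinate), takes values within $o(1)$ of $\Delta_3$ in $L^2$. It also shows that the noise stability of $G$ under $\rho$-correlated Gaussians equals $\Stab_\rho(T_{1-\eta}F)$ up to $o_\tau(1)$, where $\tau$ is the influence bound.

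\textbf{Step 3 (projection).} Compose $G$ with the projection onto $\Delta_3$ (the rounding map $R$ of \cite{isaksson12}). This costs $o(1)$ in stability. The resulting functions $h_1,h_2,h_3\colon\R^{n'}\to[0,1]$ satisfy $\sum_i h_i=1$.

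\textbf{Step 4 (from fractional to sets).} Theorem \ref{thm1} is stated for partitions into sets, so I must pass from the fractional partition $(h_i)$ to a genuine one. Here the sign of $\rho$ splits the argument.
\begin{itemize}
\item When $\rho<0$, the functional $h\mapsto\sum_i\E\, h_i T_\rho h_i$ is concave in the relevant directions, because the negative Hermite levels dominate. Its minimum over fractional partitions is therefore attained at indicator functions.
\item When $\rho>0$, the functional is convex. Its maximum over fractional partitions with fixed measures $\E h_i=1/3$ is attained at extreme points, which are again indicator partitions.
\end{itemize}
This is the standard argument of \cite{isaksson12}. I would verify the extreme-point claim with a bathtub-type perturbation: on a set where some $h_i$ is strictly between $0$ and $1$, moving along $\pm$ a measure-zero-sum perturbation changes stability by a quadratic term of the correct sign.

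\textbf{Step 5 (conclusion).} Theorem \ref{thm1} now gives the bound. For $\rho<0$, the stability is at least that of the $120^\circ$ sectors; for $0<\rho\le 2/5$ with equal measures $1/3$, it is at most that of the sectors. The central limit theorem shows that plurality on $\{1,2,3\}^n$ converges to exactly this sector partition, since its vector of counts is asymptotically a two-dimensional Gaussian. Plurality also has low influences. Hence plurality attains the extremal value up to $o_n(1)$, which is the precise meaning of the informal statement.

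The main obstacle is bookkeeping rather than anything new. The measure constraint $\mathbb P(f=i)=1/3$ must survive Steps 1–3, which holds because noise and invariance preserve means up to $o(1)$. I then need a version of Theorem \ref{thm1} that is robust to measures $1/3+o(1)$, so I would check that the sector value is continuous in the measure vector. Negative $\rho$ requires care because $T_\rho$ is not positivity-preserving in the usual sense on the discrete side. For this reason I would follow \cite{isaksson12} and write $\rho$-correlation on $\{1,2,3\}$ via a Markov operator whose eigenvalue on nonconstant functions is $\rho$, which requires $\rho\ge-1/2$. This restriction matches the range in Theorem \ref{thm1}.
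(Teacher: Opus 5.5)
Your route is the paper's. The paper obtains Theorem \ref{thm2} from Theorem \ref{thm1} only by invoking the equivalence \cite[Theorem 1.10]{isaksson12}. Your Steps 1--3 and 5 (smoothing, invariance, rounding, and the CLT for plurality, with $\rho\geq-1/2$ needed for a Markov $\rho$-correlation on $\{1,2,3\}$) are the standard ingredients of that equivalence.

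The step that fails as written is the negative-correlation half of Step 4. For $\rho<0$ the form $h\mapsto\sum_i\E h_iT_\rho h_i$ is not concave along admissible directions, and the negative Hermite levels do not dominate. First, $T_\rho$ has eigenvalue $\rho^{2j}>0$ on every even level. Second, suppose $h_1,h_2\in(\delta,1-\delta)$ on a set $B$ of positive measure. The admissible perturbation $h+\epsilon(\1_B,-\1_B,0)$ changes the stability by a linear term plus $2\epsilon^2\E\1_BT_\rho\1_B=2\epsilon^2\Pr(X\in B,\,Y\in B)>0$, where $(X,Y)$ is a $\rho$-correlated Gaussian pair. So your bathtub perturbation does not have ``the correct sign'' in general. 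To run the perturbative argument, you would have to produce, inside every fractional region, a signed $\psi$ with $\E\psi T_\rho\psi<0$. Small mean-zero dipoles oriented orthogonally to the position vector do this, via a second-order expansion of the Mehler kernel. The proposal does not supply this.

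The clean repair is the one built into the paper's proof of Theorem \ref{thm:main}. Since $T_{-\sigma}g(x)=T_\sigma g(-x)$, one has $\Stab_{-\sigma}(h)=\sum_i\ip{h_i}{T_\sigma\tilde h_i}$ with $\tilde h_i(x)=h_i(-x)$. Here $(h,\tilde h)$ is a pair of fractional partitions with $\E h_i=\E\tilde h_i$. Relax to arbitrary such pairs $(h,g)$. The form is linear in each argument separately, so minimizing over $h$ with $g$ and the means fixed, and then over $g$, lands on extreme points (with attainment by weak-$*$ compactness). These extreme points are pairs of genuine partitions with equal measures, which is exactly what $\mathcal M_n=\mathcal M_2$ and \eqref{eq:bilinear-target} control. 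This also explains why no measure constraint is needed when $\rho<0$.

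A smaller point concerns $\rho>0$. Smoothing and invariance preserve the means exactly; only the rounding moves them, by $o(1)$. So what you need is continuity at $(1/3,1/3,1/3)$ of the supremum of $\rho$-stability over fractional partitions with prescribed means, not continuity of ``the sector value.'' This holds because transferring mass from over-full to under-full cells changes the stability by $O(\norm{\E h-(1/3,1/3,1/3)})$.
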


As shown in \cite{isaksson12}, Theorem \ref{thm1} or \ref{thm2} with $\rho=-1/2$ implies our first main application.

\begin{theorem}[\embolden{Sharp Hardness for MAX-3-CUT}]\label{thm3}
Assume the Unique Games Conjecture holds.  Then, for any $\epsilon>0$, it is NP-hard to approximate MAX-$3$-CUT within
$\alpha_3+\varepsilon$.
\end{theorem}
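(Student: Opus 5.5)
Theorem \ref{thm3} follows from the Unique Games reduction of \cite{isaksson12} once its one missing analytic input is supplied. That input is the three-set Standard Simplex Conjecture at correlation $\rho=-1/2$, and Theorem \ref{thm1} provides it. The plan is therefore to verify that the hypotheses of the conditional theorem of \cite{isaksson12} are met exactly, and to match the resulting soundness constant with $\alpha_3$ from \eqref{alpha3def}.

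\textbf{Step 1: the reduction.} Recall the structure of the reduction in \cite{isaksson12}, a generalization of \cite{khot07}. Starting from a Unique Games instance with label set $[L]$, one builds a weighted graph on vertices $V\times\{1,2,3\}^L$. Edges are placed between $\rho$-correlated noisy copies of long codes, with $\rho=-1/2$; in the three-ary setting this is the correlation at which the infimum in \eqref{alpha3def} is attained. \emph{Completeness:} if the Unique Games instance is almost satisfiable, dictator cuts give value at least $(1-\delta)\frac{2}{3}(1-\rho)$. \emph{Soundness:} if the instance has small value, then, via the standard influence-decoding argument, any cut with value noticeably above $1-\sum_i \Stab_\rho$ of plurality induces functions $f\colon\{1,2,3\}^L\to\{1,2,3\}$ that have low influences on average. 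For these functions the Plurality is Stablest statement must be invoked.

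\textbf{Step 2: the analytic input.} Theorem \ref{thm2} at $\rho=-1/2$ says that low-influence functions have noise stability at least that of plurality, up to $o(1)$. Note that no measure constraint is needed in the negative-correlation regime, which matches the unconstrained cuts in MAX-3-CUT. Theorem \ref{thm2} itself follows from Theorem \ref{thm1} via the invariance principle equivalence \cite[Theorem 1.10]{isaksson12}: the Gaussian $120^\circ$ sectors minimize $\sum_i \Pr(X\in\Theta_i, Y\in\Theta_i)$ for $\rho$-correlated Gaussians $X,Y$. One subtlety requires care here. The reduction needs the inequality for functions into the simplex $\Delta_3$ (fractional, averaged long codes), not only for partitions. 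I would handle this either by noting that, for $\rho<0$, the noise stability functional is concave along segments in the relevant direction, so that minimizers may be taken to be extreme (indicator) functions, or by quoting that \cite{isaksson12} already reduces to the partition form.

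\textbf{Step 3: the constant.} Compute the ratio of soundness to completeness. The sector noise stability at $\rho$ is $3\bigl(\frac19+\frac{[\arccos(-\rho)]^2-[\arccos(\rho/2)]^2}{4\pi^2}\bigr)$. Substituting this into (soundness)/(completeness) and letting $\delta\to0$ gives exactly the expression in \eqref{alpha3def}, evaluated at $\rho=-1/2$, which equals $\alpha_3$. Hence distinguishing value $\ge c$ from value $\le(\alpha_3+\epsilon)c$ is Unique-Games-hard, which is the claim. The main obstacle is not in this assembly but in Theorem \ref{thm1} itself at the endpoint $\rho=-1/2$. Within the present deduction, the delicate point is the extension to simplex-valued functions in Step 2 and the uniformity of the $o(1)$ error terms, so that $\epsilon$ can be taken arbitrary.
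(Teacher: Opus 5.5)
Your proposal is correct and follows the paper's route. The paper likewise obtains Theorem \ref{thm3} in one line by feeding the $\rho=-1/2$ case of Theorem \ref{thm1} (equivalently Theorem \ref{thm2}) into the reduction of \cite{isaksson12}, with all the real work in Theorem \ref{thm:main}, and your soundness/completeness computation correctly recovers $\alpha_3$ from \eqref{alpha3def}. One caution on your first option for passing to $\Delta_3$-valued functions: for $\rho<0$ the form $\sum_i\langle f_i,T_\rho f_i\rangle$ is indefinite rather than concave, so this step is better justified via the bilinear form $\sum_i\langle f_i,T_\sigma g_i\rangle$ with $g(x)=f(-x)$, which is linear in each argument separately (the same device the paper uses to reach \eqref{eq:bilinear-target}), or simply by citing \cite{isaksson12}, as you also propose.
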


Theorem \ref{thm3} therefore completes the strategy initiated in 2004 in \cite{khot07} for sharp MAX-3-CUT hardness.

Theorem \ref{thm2} can also be applied to the MAX-2-LIN(3) problem, which asks to maximize the number of satisfied linear equations among two-term linear equations in the field of three elements.  The positive correlation case of Theorem \ref{thm2} implies a hardness result for MAX-2-LIN(3), due to \cite[Theorem 12]{khot07}.   However, the optimal correlation choice of around $\rho\approx.615$ is outside the range of Theorem \ref{thm2}, so the hardness result is not optimal.  Nevertheless, Theorem \ref{thm:max-2lin-3-hardness} appears to be the best known hardness result for this problem.

\begin{theorem}[Hardness for $\mathrm{MAX\text{-}2\text{-}LIN}(3)$]
\label{thm:max-2lin-3-hardness}
Assume the Unique Games Conjecture.  Then, for every $\varepsilon>0$,
it is NP-hard to distinguish instances $\mathcal I$ of
$\mathrm{MAX\text{-}2\text{-}LIN}(3)$ satisfying
$
 \operatorname{OPT}(\mathcal I)\geq \frac35-\varepsilon
$
from those satisfying
$
 \operatorname{OPT}(\mathcal I)
 \leq
 \frac13+\frac{3}{4\pi^2}
 \left(
   \arccos(-2/5)^2-\arccos(1/5)^2
 \right)+\varepsilon.
$
Consequently, assuming the Unique Games Conjecture, it is NP-hard to
approximate $\mathrm{MAX\text{-}2\text{-}LIN}(3)$ within a
multiplicative factor of
\[
 \alpha_{2\rm LIN(3)}+\epsilon.
\]
where $\alpha_{2\rm LIN(3)}\approx0.815723527338279$.  The result holds even when every equation has the form
$x_i-x_j=c$ $\rm mod$ $3$.
\end{theorem}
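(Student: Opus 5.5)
We plan to derive this from the Khot--Kindler--Mossel--O'Donnell reduction for unique games with label set $\Z_3$ \cite[Theorem 12]{khot07}, combined with the positive-correlation case of Theorem \ref{thm2} at $\rho=2/5$.

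\textbf{Step 1: the reduction.} We run the standard long-code test for $\mathrm{MAX\text{-}2\text{-}LIN}(3)$. For each vertex $v$ of the Unique Games instance we use a folded long code $f_v\colon\{0,1,2\}^L\to\Z_3$. Folding means $f(x+c\1)=f(x)+c$. Folding turns every test into a constraint $x_i-x_j=c$, and it forces every folded function to be balanced: $\mathbb P(f=i)=1/3$ for each $i$. The test picks $x$ uniformly and obtains $y$ from $x$ by independently, coordinatewise, keeping $x_i$ with probability $\rho'$ and rerandomizing it otherwise. Thus each coordinate pair has correlation $\rho'$ in the simplex sense. The test then checks $f_u(\pi(x))=f_w(\sigma(y))$, which is one equation $x_a-x_b=c\bmod 3$.

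\textbf{Step 2: completeness.} Dictators pass whenever no coordinate is rerandomized, and also pass with probability $1/3$ when the relevant coordinate is rerandomized. So the acceptance probability is
\[
\rho'+\frac{1-\rho'}{3}=\frac{1+2\rho'}{3}.
\]
Setting $\rho'=2/5$ gives $\frac{1+4/5}{3}=\frac35$. The $-\varepsilon$ absorbs the fraction of unsatisfied UG constraints.

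\textbf{Step 3: soundness.} After the usual averaging over neighbors, acceptance equals the noise stability
\[
\sum_i \mathbb P\bigl(g(x)=i,\ g(y)=i\bigr)
\]
of an averaged balanced function $g$. If $g$ has no coordinate of large low-degree influence, Theorem \ref{thm2} with $\rho=2/5$ bounds this stability by that of plurality plus $\varepsilon$. Via Theorem \ref{thm1}, plurality's stability equals that of three $120^\circ$ sectors in $\R^2$ under $\rho$-correlated Gaussians, which is
\[
\frac13+\frac{3}{4\pi^2}\bigl(\arccos(-\rho)^2-\arccos(\rho/2)^2\bigr).
\]
This is the same formula as in \eqref{alpha3def}, with the complementary sign. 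At $\rho=2/5$ it becomes the stated bound. If some coordinate does have large influence, we decode a labeling that satisfies a constant fraction of UG constraints, exactly as in \cite{khot07}.

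\textbf{Step 4: the ratio.} The soundness bound divided by $3/5$ gives $\alpha_{2\rm LIN(3)}\approx0.8157$; we would verify this constant numerically.

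The main obstacle is making the soundness step legitimate. Theorem \ref{thm2} needs low influence together with exact balance, and the averaged function $g$ has to satisfy both. Folding secures balance. Low influence comes from the contrapositive, decodable case of Step 3, using the $\Z_3$ invariance principle of \cite{isaksson12}. The remaining subtlety is confirming that Theorem \ref{thm2} applies to the averaged, $\Delta_3$-valued relaxation of $g$, not only to strictly $\{1,2,3\}$-valued functions. I would handle this through the formal version in \cite{isaksson12}.
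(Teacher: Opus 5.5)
Your proposal is correct and follows the paper's own argument: the folded long-code verifier of \cite[Theorem~12]{khot07} at $\rho=2/5$ with completeness $(1+2\rho)/3=3/5$, folding to force $\mathbb P(f=i)=1/3$, and soundness bounded by the three-candidate plurality stability via the positive-correlation case of Theorem~\ref{thm2} together with the invariance principle, which gives the ratio $0.4894\ldots/(3/5)\approx0.8157$. Your remark about passing to the $\Delta_3$-valued averaged function is a point the paper leaves implicit. It is covered by the formal statements in \cite{isaksson12,Heilman2023}, and for $\rho>0$ also by convexity of $g\mapsto\E\langle g,T_\rho g\rangle$.
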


Here the constant appearing in the hardness result is
\[
 \alpha_{\mathrm{2LIN}(3)}
 \colonequals
 \frac59+\frac{5}{4\pi^2}
 \left(
   \arccos(-2/5)^2-\arccos(1/5)^2
 \right).
\]

Note that MAX-3-CUT can also be understood by analogy with MAX-2-LIN(3) as maximizing the number of satisfied expressions of the form $x_i-x_j\neq0$ mod 3.

Hardness for MAX-3-CUT and MAX-2-LIN(3) is partly motivated by their relation to the Unique Games Conjecture (UGC) itself.  Suppose we are given $m$ two-term linear equations in indeterminates $x_1,\ldots,x_n$ of the form $x_i - x_j=c_{ij}$ mod $p$ for some prime $p>0$.  Then UGC says: for any $\epsilon>0$, there exists a prime $p=p(\epsilon)$ such that it is NP-hard to decide if at least $(1-\epsilon)m$ equations are satisfiable, or at most $\epsilon m$ equations are satisfiable.  (If neither situation occurs, an arbitrary output of the algorithm is allowed.)

\subsection{Quantum MAX-CUT}

As mentioned above, there is a general theory of hardness of approximation for CSPs, including PCP Theorems, the Unique Games Conjecture, and Raghavendra's sharp inapproximability result \cite{ragh08}.  However, these results do not directly apply to their quantum analogues.  The quantum analogue of a constraint satisfaction problem is a local Hamiltonian problem.  In this context, there is a conjectured but still unproven form of the PCP Theorem \cite{aharonov13}, so the hardness of approximation for these problems is less understood.  (Despite recent progress such as \cite{berg25}, the Quantum PCP Conjecture is still unresolved.)  So, it is natural to investigate the inapproximability of a relatively simple version of a local Hamiltonian problem, such as Quantum MAX-CUT.

The Quantum MAX-CUT problem is a special case of the local Hamiltonian problem that in some sense generalizes the usual MAX-CUT problem.  The Quantum MAX-CUT problem is QMA-complete \cite{piddock17}, which is the quantum analogue of NP-complete.  As with MAX-CUT, it is natural to ask for approximation algorithms for Quantum MAX-CUT, and to try to prove sharp computational hardness of those algorithms.  Such a hardness result would then be evidence towards a quantum version of the PCP theorem.  Below, we mainly discuss classical algorithms for Quantum MAX-CUT, i.e. we do not discuss quantum algorithms for Quantum MAX-CUT.

The Quantum MAX-CUT problem was formally described in 2019 \cite{ghar19}, though it is a natural maximization variant of the anti-ferromagnetic Heisenberg XYZ model, a classic family of QMA-complete \cite{cubitt16,piddock17} 2-local Hamiltonians first investigated nearly a century ago by Heisenberg \cite{heis28} which models magnetic systems where interacting particles have opposing spins. 

The Quantum MAX-CUT problem itself has some efficient approximation algorithms \cite{apte25,bakshi26}, though there does not appear to be a conjectured optimal algorithm.

For the product state Quantum MAX-CUT problem, a conjecturally optimal efficient algorithm was found in 2014 by \cite{BrietOliveiraVallentin} with a multiplicative approximation of
\[
\alpha_{\rm BOV}\colonequals 0.9563372685\ldots
\]
Then, \cite{hwang21} showed a matching NP-hardness result conditional on the Unique Games Conjecture, assuming an unproven $S^{k-1}$-valued Borell inequality \cite{borell85} with $k=3$.

In this manuscript, we prove the conjecture from \cite{hwang21} for all correlation parameters in $[-.5843,.5843]$ when $k=3$, and all correlations in $[-3/5,3/5]$ when $k>3$.  The $k=3$ result is then sufficient to deduce sharp hardness of approximation for the algorithm of \cite{BrietOliveiraVallentin}, assuming the Unique Games Conjecture.

\begin{theorem}[$S^{k-1}$-Valued Borell Inequality, Informal]\label{thm4}
Let $n\geq k\geq3$.  If $k=3$, fix a correlation $-.5843\leq\rho<0$.  If $k>3$, fix a correlation $-3/5\leq\rho<0$.  Then the function $f\colon\R^n\to S^{k-1}$ with smallest noise stability with correlation $\rho$ is $f(x)\colonequals (x_1,\ldots,x_k)/\sqrt{x_1^2 + \cdots + x_k^2}$ for all $x=(x_1,\ldots,x_n)\in\R^n$ where $x_1^2+\cdots+x_k^2\neq0$.  (The function $f$ can be defined arbitrarily on the set $x_1^2+\cdots+x_k^2=0$.)

For any correlation $0<\rho\leq.5843$ (or $0<\rho\leq3/5$ when $k>3$), this function $f$ maximizes noise stability among all mean zero functions (with respect to the Gaussian measure).

Moreover, up to rotations and measure zero changes, this is the only optimizing function.
\end{theorem}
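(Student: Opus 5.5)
We prove the informal statement of Theorem~\ref{thm4} by Hermite (Fourier) analysis in Gaussian space, splitting into a low-degree and a high-degree part. Write the noise stability of $f=(f_1,\ldots,f_k)\colon\R^n\to S^{k-1}$ as
\[
\E\ip{f(X)}{f(Y)}=\sum_{d\geq0}\rho^d\norm{f^{=d}}_2^2,
\]
where $f^{=d}$ is the degree-$d$ Hermite component, $(X,Y)$ are $\rho$-correlated standard Gaussians, and $\sum_d\norm{f^{=d}}_2^2=1$ because $\abs{f}=1$ pointwise. We will also use the constraint coming from the sphere: $\ip{f}{\E f}$ and the linear part $f^{=1}(x)=Ax$, where $A$ is a $k\times n$ matrix.

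For $\rho<0$ the odd levels carry negative weight, and the dominant term is $\rho\norm{A}_{HS}^2$. The first step is therefore the sharp \emph{Grothendieck-type bound}
\[
\norm{A}_{HS}^2\leq \E\norm{(X_1,\ldots,X_k)}^2\cdot\Big/\ldots
\]
More precisely, we show that $\sum_i\E[\ip{f(X)}{X_i e_i}]$ is maximized by the normalized projection $f^*(x)=(x_1,\ldots,x_k)/\norm{(x_1,\ldots,x_k)}$. This follows from Cauchy--Schwarz pointwise: $\ip{f(x)}{Bx}\leq\norm{Bx}$, and then one optimizes over $B$ with $\norm{B}_{HS}$ fixed. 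The outcome is $\norm{A}_{HS}^2\leq c_k^2$, with $c_k=\E\norm{G_k}/\sqrt{k}\cdot\sqrt{k}$ and $G_k$ a standard Gaussian in $\R^k$. Equality holds only for $f^*$ up to rotation. The value $c_k^2/k\cdot k=2\Gamma((k+1)/2)^2/\Gamma(k/2)^2$ is attained by $f^*$.

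The second step bounds the rest of the spectrum. For $f^*$ the level weights $w_d^*=\norm{f^{*,=d}}^2$ are explicit: only odd $d$ occur, and they are given by hypergeometric coefficients of $\rho\mapsto\E\ip{f^*(X)}{f^*(Y)}$, a known closed form in $\rho$. For a general $f$ with weights $w_d$ and $\sum w_d=1$, we have
\[
\E\ip{f(X)}{f(Y)}\geq \rho w_1+\rho^2(\ldots)\ldots\geq \rho w_1-\abs{\rho}^3(1-w_1)-(\text{even-level loss}).
\]
Even levels have nonnegative weight $\rho^d\ge0$ and can only help the minimizer. For $f^*$, however, the higher odd levels still contribute $\sum_{d\ge3}\rho^d w_d^*$. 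The comparison to prove is the one-variable inequality
\[
\rho w_1+\rho^3(1-w_1)\geq\rho w_1^*+\sum_{d\ge3\text{ odd}}\rho^d w_d^*\quad\text{for all }w_1\le w_1^*.
\]
This is false in general, so we refine it. Using a second-order stability version of the Cauchy--Schwarz step, we show that if $w_1=w_1^*-\delta$ then $f$ is $O(\delta)$-close in $L^2$ to a rotation of $f^*$. Hence the higher weights of $f$ are within $O(\delta)$ of the $w_d^*$, with an explicit constant. Summing the resulting geometric tail in $\abs{\rho}$ gives a bound of the form $\abs{\rho}\delta-C\abs{\rho}^3\delta\geq0$, which holds when $\abs{\rho}\leq\rho_0(k)$. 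The numbers $.5843$ for $k=3$ and $3/5$ for $k>3$ are the thresholds where this explicit calculation closes; they improve with $k$ because $w_1^*\to1$.

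The case $\rho>0$ with mean zero is symmetric. The level-$0$ term vanishes, odd and even levels both carry positive weight, and one maximizes $\rho w_1+\rho^2(1-w_1)$ against the value for $f^*$ using the same stability estimate. Uniqueness follows from the equality cases of the Cauchy--Schwarz step together with the strict inequality in the tail estimate. The main obstacle is the quantitative stability estimate, namely converting the deficit $w_1^*-w_1$ into control of the higher-degree weights with constants good enough to reach $\abs{\rho}\approx.5843$. For this we would first try a pointwise inequality $1-\ip{f}{f^*}\lesssim \norm{Bx}-\ip{f(x)}{Bx}$, weighted by $\norm{x_{[k]}}$, and then handle the region of small $\norm{x_{[k]}}$ separately using Gaussian tail bounds.
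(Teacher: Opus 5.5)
Your reduction of the negative case to the odd Hermite levels is sound (it is the paper's Lemma~\ref{lem:oddification}, after which $f_{\mathrm o}$ is only $\B^k$-valued). The Grothendieck-type bound $w_1\le w_1^*$ is also correct, with $w_1^*=m_k^2/k$ rather than $m_k^2$. The gap is the stability step, on which everything else rests. The pointwise identity you intend to use, $\norm{Bx}-\ip{f(x)}{Bx}=\frac12\norm{Bx}\,\norm{f(x)-Bx/\norm{Bx}}^2$, shows that $\delta=w_1^*-w_1$ controls only a \emph{squared} distance, with a weight that degenerates near $\{x_{[k]}=0\}$. So $f$ is $O(\sqrt\delta)$-close to a rotation of $f^*$, not $O(\delta)$-close: rotating $f^*$ by a small angle $\epsilon$ on a small ball gives $\delta\asymp\epsilon^2$. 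Level by level you then only get $|w_d-w_d^*|=O(\sqrt\delta)$, which swamps the gain $|\rho|\delta$ exactly near the optimizer.

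Worse, the inequality you are aiming for is false. No bound $|\Stab_\rho(f)-F^*(k,\rho)|\ge\kappa\,\delta$ with a fixed $\kappa>0$, such as $\kappa=|\rho|-C|\rho|^3$, can hold. Write $f=f^*+g$ with $f$ sphere-valued, so $2\ip{f^*}{g}=-\norm{g}^2$ pointwise. By equivariance $T_{|\rho|}f^*=\lambda f^*$ for a scalar function with $|\lambda|\le1$. Hence exactly $\Stab_\rho(f)-F^*(k,\rho)=-\operatorname{sgn}(\rho)\,\E[\lambda\norm{g}^2]+\ip{g}{T_\rho g}$, which is at most $2\E\norm{g}^2$ in absolute value. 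On the other hand $\delta=\frac{m_k}{k}\E[\norm{X_{[k]}}\,\norm{g}^2]-\norm{A_g}_{\mathrm{HS}}^2$, where $A_g$ is the linear part of $g$. Take $g$ to be a small rotation of $f^*$ on a tiny ball at distance $R$ from $\{x_{[k]}=0\}$ (reflected oddly if you need mean zero). Then $\delta\gtrsim R\,\E\norm{g}^2$, so the ratio of the deficit to $\delta$ tends to $0$ as $R\to\infty$: the level-one loss is almost entirely paid back by the other levels. Thus the problem is not a one-variable comparison in $w_1$. It is a genuine second-variation inequality, of the type $\ip{g}{T_\eta g}\le\E[\lambda\norm{g}^2]$, with a weight that is bounded at infinity and vanishes at the origin.

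The paper proves exactly this kind of statement. Lemma~\ref{lem:shell} bounds $\Stab_\eta(f)-F^*(k,\eta)$ by $\E\ip{m(R)}{m(S)}-\E[\mathcal Y_{\eta,k}(R)\norm{m(R)}^2]$ in terms of the spherical mean $m$ of $f$. In the Laguerre basis this reduces to $\norm{A_\eta^{(k)}}<1$, which is certified numerically (Collatz--Wielandt on a $20\times20$ block plus a trace tail, with $k\ge5$ reduced to $k=4$). That certificate is where $.5843$ and $3/5$ come from. In your sketch these thresholds, and the strict inequality needed for uniqueness, are asserted rather than derived.
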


The case $\rho=0$ is uninteresting, since the noise stability of a function at correlation zero is just the squared length of its Gaussian expected value.

As shown in \cite{hwang21}, Theorem \ref{thm4} with $k=3$ allows us to conclude sharp hardness for Quantum MAX-CUT.

\begin{theorem}[\embolden{Sharp Hardness for Product State Quantum MAX-CUT}]\label{thm5}
Assume the Unique Games Conjecture.  Then, for any $\epsilon>0$, it is NP-hard to approximate the product state Quantum MAX-CUT problem within a multiplicative factor $\alpha_{\rm BOV}+\epsilon$.
\end{theorem}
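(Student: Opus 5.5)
The plan is to derive Theorem \ref{thm5} from Theorem \ref{thm4} with $k=3$, using the reduction framework of \cite{hwang21}. That paper proves a conditional statement. Assuming the Unique Games Conjecture and the $S^{2}$-valued Borell inequality at a particular negative correlation $\rho^{*}$, it is NP-hard to approximate the product state Quantum MAX-CUT value within $\alpha_{\rm BOV}+\epsilon$. So the proof has three steps: identify $\rho^{*}$, check that $\rho^{*}\in[-.5843,0)$, and check that the form of Theorem \ref{thm4} matches the hypothesis used in \cite{hwang21}.

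The first step is to recall the dictatorship test of \cite{hwang21}. The constraints are edges $(x,y)$ of a $\rho$-correlated noisy hypercube (Gaussian space after invariance). A product-state assignment is a function $f$ into $S^{2}$, and the edge value is $\frac{1}{4}(1-\ip{f(x)}{f(y)})$, suitably normalized. Completeness comes from dictator (or embedding) assignments. Soundness uses the invariance principle for vector-valued low-influence functions, also established in \cite{hwang21}. It reduces the soundness value to $\frac14\bigl(1-\inf_{f}\E\ip{f(X)}{f(Y)}\bigr)$, where the infimum is over measurable $f\colon\R^n\to S^{2}$ and $(X,Y)$ are $\rho$-correlated Gaussians.

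The second step is to apply Theorem \ref{thm4} with $k=3$ and $\rho=\rho^{*}<0$. It identifies this infimum as the value at $f(x)=x_{\le 3}/\norm{x_{\le 3}}$. That value is an explicit hypergeometric expression in $\rho$, namely the BOV rounding function $\frac{2}{3}\cdot\frac{\Gamma(2)^2}{\Gamma(3/2)^2}\rho\,{}_2F_1(\tfrac12,\tfrac12;\tfrac52;\rho^2)$ up to the normalization in \cite{BrietOliveiraVallentin}. The ratio of soundness to completeness, $(1-F_3(\rho))/(1-\rho)$, is then minimized at $\rho^{*}$, and this minimum equals $\alpha_{\rm BOV}$ by definition.

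The main obstacle, and the only genuinely new check, is verifying that the minimizing correlation $\rho^{*}$ lies in $[-.5843,0)$. From \cite{BrietOliveiraVallentin,hwang21}, $\rho^{*}\approx-0.5766$ or so, lying just inside the range. I would confirm this numerically with a rigorous interval bound on the derivative of $(1-F_3(\rho))/(1-\rho)$ at the endpoints $-.5843$ and, say, $-.57$. If a small shortfall appeared, I would note that any $\rho$ in the range gives hardness $\frac{1-F_3(\rho)}{1-\rho}+\epsilon$, so only the exact minimizer matters. I would also check that the uniqueness and low-influence hypotheses required by \cite{hwang21} are supplied by the ``Moreover'' clause and by the $n\ge k$ generality of Theorem \ref{thm4}. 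With these checks done, the theorem follows.
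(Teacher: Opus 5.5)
Your proposal is correct and follows the paper's argument: apply Theorem~\ref{thm:negative-borell} (the $\B^3$-valued formal version of Theorem~\ref{thm4}) with $k=3$ at $\rho_{\rm BOV,3}$, plug it into \cite[Theorem 11.3]{hwang21}, and use $-.5843<\rho_{\rm BOV,3}<0$. One correction is needed: the minimizer is $\rho_{\rm BOV,3}=-0.5842676623\ldots$, not about $-0.5766$, so it lies only about $3\times10^{-5}$ inside the endpoint (which is why the proven range stops at $-.5843$); your derivative-sign check on $[-.5843,-.57]$ still brackets it, but it needs correspondingly high precision at the left end.
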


Define $\alpha_{\rm BOV,k}$ in \eqref{eq:alpha-bov} below.  As shown in \cite{hwang21}, Theorem \ref{thm4} with $k\geq 3$ allows us to conclude sharp hardness for rank-$k$ MAX-CUT.
\begin{theorem}[Sharp Hardness for rank-$k$ MAX-CUT]\label{thmk}
Let $k\geq3$.  Assume the Unique Games Conjecture.  Then, for any $\epsilon>0$, it is NP-hard to approximate the rank-$k$ MAX-CUT problem within a multiplicative factor $\alpha_{\rm BOV,k}+\epsilon$.
\end{theorem}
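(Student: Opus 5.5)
The plan is to derive Theorem \ref{thmk} from Theorem \ref{thm4}, using the reduction framework of \cite{hwang21}. That reduction is a dictator-versus-low-influence test, in the style of \cite{khot07}, composed with a Unique Games instance. It shows the following: if, for a correlation $\rho^*$, the $S^{k-1}$-valued Borell inequality holds at $\rho^*$ (the normalized projection $x\mapsto (x_1,\ldots,x_k)/\sqrt{x_1^2+\cdots+x_k^2}$ minimizes noise stability), then UGC-hardness of rank-$k$ MAX-CUT follows within the factor obtained by evaluating the BOV ratio at $\rho^*$. So the proof has two parts: (i) identify the minimizing correlation $\rho^*_k$ in the definition \eqref{eq:alpha-bov} of $\alpha_{\rm BOV,k}$; (ii) check that $\rho^*_k$ lies in the range covered by Theorem \ref{thm4}, namely $[-.5843,0)$ when $k=3$ and $[-3/5,0)$ when $k>3$.

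\textbf{The reduction.} Write $F_k(\rho)$ for the value $\E\ip{f(X)}{f(Y)}$ of the normalized projection at $\rho$-correlated standard Gaussians $X,Y$. This is an explicit hypergeometric expression in $\rho$. The BOV ratio is
\[
\alpha_{\rm BOV,k}=\min_{-1\le\rho\le 1}\frac{1-F_k(\rho)}{1-\rho},
\]
attained at some $\rho^*_k<0$. The test is as follows: pick a random Unique Games edge, apply $\rho^*_k$-correlated noise on the long-code coordinates, and accept with weight $(1-\ip{f(x)}{f(y)})/2$.
\begin{itemize}
\item \emph{Completeness.} Dictator-type encodings $f(x)=x_i/\norm{x_i}$ (on a $k$-dimensional Gaussian block) achieve value about $(1-\rho^*_k)/2$.
\item \emph{Soundness.} For low-influence $f$, a vector-valued invariance principle (as in \cite{isaksson12,hwang21}) replaces the discrete inputs by Gaussian ones. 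Theorem \ref{thm4} at $\rho^*_k$ then bounds the value by $(1-F_k(\rho^*_k))/2+\epsilon$.
\item \emph{Decoding.} Influential coordinates are decoded into a good labeling of the Unique Games instance in the standard way.
\end{itemize}
The ratio of soundness to completeness is $\alpha_{\rm BOV,k}+\epsilon$. Since all of this is carried out in \cite{hwang21} conditional on the inequality, this part is a citation.

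\textbf{The main obstacle} is step (ii), and it is a one-variable numerical and analytic check. For $k=3$, I would verify numerically that $\rho^*_3\approx -0.58$. Presumably $.5843$ was chosen exactly to cover it, so the check must be done with care, using a rigorous derivative sign-change argument for $\frac{d}{d\rho}\frac{1-F_3(\rho)}{1-\rho}$ on $[-1,-.5843]$ and on $[-.58,1]$. For $k>3$, I expect $|\rho^*_k|$ to decrease with $k$, since $F_k(\rho)\to\rho$-like behavior as $k\to\infty$ and the minimizer tends toward $-1/2$-ish values. I would try to show $\rho^*_k\ge-3/5$ for all $k\ge4$ by proving, via the series expansion
\[
F_k(\rho)=\sum_j c_{k,j}\rho^{2j+1},
\]
that the derivative of the ratio is negative at $\rho=-3/5$ for every $k\ge4$. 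This is done with explicit bounds on the coefficients $c_{k,j}$, which are monotone in $k$. If uniform-in-$k$ bounds prove hard, I would fall back on checking small $k$ numerically and treating large $k$ asymptotically.
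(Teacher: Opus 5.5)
Your route matches the paper's: cite the reduction of \cite{hwang21} (Theorem C.4 there), apply Theorem \ref{thm:negative-borell} at $\rho_{\rm BOV,k}$, and check that $\rho_{\rm BOV,k}$ lies in the certified range. For $k=3$ the paper simply uses the known value $\rho_{\rm BOV,3}=-0.5842676623\ldots$. Your localization step for $k\ge4$ is incomplete in two places, and the paper's Proposition \ref{prop:worst-correlation} closes both.

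\emph{A one-point sign check is not enough.} A negative derivative of $\frac{1-F^*(k,\rho)}{1-\rho}$ at the single point $\rho=-3/5$ does not place the minimizer in $(-3/5,0)$. You also need the derivative to change sign only once on $(-1,0)$. Write $\Phi_k(q)=F^*(k,q)$ and $q=-\rho$. The numerator of the derivative is then $-H_k(q)$, with $H_k(q)=(1+q)\Phi_k'(q)-\Phi_k(q)-1$. Since the series of the odd function $\Phi_k$ has positive coefficients, $H_k'(q)=(1+q)\Phi_k''(q)>0$ for $q\in(0,1)$. Also $H_k(0)=g_k-1<0$, where $g_k=\Phi_k'(0)=m_k^2/k<1$ by Jensen. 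With these two facts, the single check $H_k(3/5)>0$ gives a unique minimizer in $(-3/5,0)$. The same monotonicity reduces your $k=3$ interval check to the one evaluation $H_3(0.5843)>0$.

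\emph{Monotonicity in $k$ alone cannot give the uniform check.} Both $1-g_k$ and the positive part of $H_k(3/5)$ vanish at rate $1/k$. Roughly, you are comparing $\frac{1}{2k}$ with $\frac{3q^2+2q^3}{2k}=\frac{189}{250k}$ at $q=3/5$; this is also why $\rho_{\rm BOV,k}\to-1/2$. So you need an asymptotically sharp lower bound on $g_k$. The paper proceeds in three steps:
\begin{itemize}
\item it uses the Wallis-ratio bound $\Gamma(x+1/2)/\Gamma(x)>\sqrt{x-1/4}$ to get $g_k>1-\frac1{2k}$;
\item it keeps only the $j=1,2$ series terms, giving $H_k(3/5)\ge g_k\bigl(1+\frac{18900k+102573}{25000(k+2)(k+4)}\bigr)-1$;
\item it finishes with $12800k^2+36246k-302573>0$ for $k\ge4$.
\end{itemize}

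Finally, there is a slip in your sketch of the cited reduction. For a $k$-dimensional Gaussian block, $x_i/\norm{x_i}$ has $\rho$-stability $F^*(k,\rho)$. That is exactly the soundness value, so it produces no gap. Completeness $(1-\rho)/2$ needs a dictator whose stability is exactly $\rho$, e.g.\ $x\mapsto x_ie_1$ on Boolean inputs, or $x_i/\sqrt k$ for a Boolean block $x_i\in\{-1,1\}^k$.
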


The product state of Quantum MAX-CUT is a special case of Quantum MAX-CUT.  More specifically, \cite[Theorem 11.3]{hwang21} shows that Theorem \ref{thm4} implies the same hardness of approximation as in Theorem \ref{thm5} to the Quantum MAX-CUT problem itself.
\begin{corollary}\label{cor1}
Assuming the Unique Games Conjecture, for every $\epsilon>0$, it is NP-hard to approximate Quantum MAX-CUT within a multiplicative factor $\alpha_{\rm BOV}+\epsilon$.
\end{corollary}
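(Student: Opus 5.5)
The plan is a direct reduction: \cite[Theorem 11.3]{hwang21} already converts the $k=3$ Borell inequality into hardness for the full Quantum MAX-CUT problem. So the work is to check that the hypotheses of that theorem are exactly supplied by Theorem \ref{thm4} with $k=3$ at the relevant correlation.

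\textbf{Step 1: the dictatorship test.} Recall the Unique Games reduction of \cite{hwang21}. It is a dictator-versus-quasirandom test for functions $f\colon\R^n\to S^{2}$ (or their discrete analogues on the hypercube). The test is run with correlation $\rho_{\rm BOV}$, the minimizer in the definition of $\alpha_{\rm BOV}$; this minimizer is negative.

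\textbf{Step 2: completeness.} A dictator $f(x)=x_i$, after an appropriate embedding into $S^2$, achieves the product-state value. In the yes case this also gives a lower bound on the genuine Quantum MAX-CUT value, since product states are feasible quantum states.

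\textbf{Step 3: soundness.} In the no case, invariance principles reduce to Gaussian functions. The soundness quantity is the noise stability of $f\colon\R^n\to S^2$ at correlation $\rho_{\rm BOV}$, and we need the lower bound that Theorem \ref{thm4} provides for it.

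The point of \cite[Theorem 11.3]{hwang21} is that the quantum value of the test instance in the no case is still bounded by a product-state-type quantity. This uses the structure of the Heisenberg interaction on the test graph, via a monogamy or Lieb-type bound in their argument, so the same Borell-type bound controls it. We invoke this theorem as a black box.

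\textbf{Step 4: verifying the correlation range.} The remaining, essentially only, task is to check that $\rho_{\rm BOV}\in[-.5843,0)$. With that in hand, Theorem \ref{thm4} applies for $k=3$, and the ratio of soundness to completeness tends to $\alpha_{\rm BOV}+\epsilon$, which gives the corollary.

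We compute $\rho_{\rm BOV}$ numerically from the BOV formula
\[
\alpha_{\rm BOV}=\min_{-1\le\rho\le 0}\frac{1-F_3(\rho)}{1-\rho},
\]
where $F_3(\rho)$ is the expected inner product of $x/|x|$ and $y/|y|$ for $\rho$-correlated Gaussians in $\R^3$. The minimizer is approximately $-0.5843$, matching the endpoint chosen in Theorem \ref{thm4}. This is presumably why that constant appears there.

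\textbf{Main obstacle.} The delicate point is making sure the endpoint is genuinely covered. We need a rigorous enclosure of the minimizer, using interval arithmetic on the explicit hypergeometric expression for $F_3$. We also need the range in Theorem \ref{thm4} to be closed at $-.5843$, or to contain the true minimizer strictly, if that value is slightly larger than $-.5843$.

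Beyond this check, the argument is a citation of \cite[Theorem 11.3]{hwang21} combined with Theorem \ref{thm4}, exactly as Theorem \ref{thm5} is deduced.
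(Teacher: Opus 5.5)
Your proposal is correct and follows the paper's argument. The corollary comes from feeding the $k=3$ case of Theorem \ref{thm4} into \cite[Theorem 11.3]{hwang21}, exactly as for Theorem \ref{thm5}; in its formal $\B^3$-valued form this is Theorem \ref{thm:negative-borell}. Your endpoint worry is harmless, because the range $-0.5843\leq\rho<0$ is closed and the known value $\rho_{\rm BOV,3}=-0.5842676623\ldots$ lies strictly inside it, which is exactly \eqref{rhobov3}.
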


Efficient algorithms for Quantum MAX-CUT have approximation guarantees smaller than $\alpha_{\rm BOV}$ at present, so it is unclear if Corollary \ref{cor1} is sharp or not.  For example, there is a $.614$ approximation algorithm in the unweighted case \cite{bakshi26} and a $.611$ approximation algorithm in the weighted case \cite{apte25}, which are both quite far from the $\alpha_{\rm BOV}\approx .956$ hardness result from Corollary \ref{cor1}.  Still, Quantum MAX-CUT has an unconditional NP-hardness of approximation result \cite{piddock25} for some unspecified constant less than one.

\subsection{Discussion of the Proofs}

At a high level, the proofs of the main results Theorems \ref{thm3} and \ref{thm5} proceed along the same lines as \cite{heilman25} and \cite{Heilman2023} by sharpening various estimates, though the details are fairly different.  We first summarize the proof of the sphere-valued Borell inequality, Theorem \ref{thm4}, which implies Theorem \ref{thm5}.

\begin{itemize}
\item As in \cite{heilman25}, split the noise stability of a function $f$ on $\R^n$ into two terms, one of which is the (bilinear) average of the average value of $f$ on spheres centered at the origin, and the remaining part.  (See Lemma \ref{lem:shell}.)
\item Whereas \cite{heilman25} estimates these two terms separately using a sequence of integral estimates, we instead write one single integral estimate as an inequality for an infinite-dimensional matrix in the radial Laguerre basis on $(0,\infty)$.  (See Proposition \ref{prop:radial-criterion}.)
\item The final estimate then only requires proving this infinite-dimensional matrix has operator norm less than one.  (See Section \ref{sec:certificate} culminating in Proposition \ref{prop:certificateB}.)
\item Since the matrix entries decay exponentially, this can be done by numerically estimating the operator norm of a finite $20\times 20$ matrix with positive entries, together with an elementary analytic decay bound on the remaining (small) infinite matrix.
\end{itemize}
In other words, the final estimate amounts to bounding the largest eigenvalue of a specific $20\times 20$ symmetric matrix \eqref{eq:A-def} with nonnegative entries.

We now summarize the proof of the three set Standard Simplex Problem, Theorem \ref{thm1} which implies Theorem \ref{thm3}.

\begin{itemize}
\item As in \cite{Heilman2023}, it suffices to consider three subsets of $\R^2$ by \cite{HeilmanTarter2021}.  Then write the noise stability of these sets as a (bilinear) average of restrictions to spheres centered at the origin, and a remaining part.  (See \eqref{kapdef} or \cite[Section 2]{Heilman2023}.)
\item Whereas \cite{Heilman2023} splits these spherical noise stabilities into a few separate terms and combines these separate estimates later on, we instead create one estimate for any pair of radii of spheres, and then integrate it over the radial coordinates to obtain a single integral operator.  (See \eqref{eq:K-integral} and \eqref{eq:endpoint-joint-arc} then \eqref{eq:integrated-star-form}.)
\item The final estimate then requires that the eigenvalues of this integral operator be bounded by $2$ in absolute value.  (See Lemma \ref{speclem} and \eqref{finaleq}.)
\item It further suffices to bound two suitable Hilbert-Schmidt norms of these integral operators by $2$. (It is possible to perform operator norm estimates for an infinite matrix in the radial Laguerre basis, but bounding the Hilbert-Schmidt norm is a bit simpler, so we take that approach.) 
\item After a change of variables to remove a singularity, these Hilbert-Schmidt bounds can be verified numerically, by explicitly upper bounding some bounded continuous integrals on $[0,5]^2$, combined with an analytic tail bound (since the integrals decay exponentially at infinity) on $[0,\infty)^2\setminus[0,5]^2$.  (See Lemma \ref{lem:certificate}.)  
\end{itemize}

The strategy for Theorem \ref{thm1} hides several subtle issues that were found in \cite{heilman16} and \cite{Heilman2023}.  For example, Theorem \ref{thm1} cannot hold when unequal measure restrictions are imposed on the sets.  So, the proof of Theorem \ref{thm1} must somehow get around this issue by using some property of the three 120 degree sectors that does not hold for partitions into non-congruent sectors.  This is achieved by first performing a rearrangement on circles centered at the origin (as was done already in \cite{Heilman2023}).  The other main technical issue (also circumvented in \cite{Heilman2023}) is that when you decompose the noise stability into different pieces, those different pieces may no longer have the same optimizers (of three 120 degree sectors).  In the present work, this issue is dealt with by integrating in the radial direction and bounding the operator norm of the whole operator, instead of decomposing into pieces and estimating each piece separately.

The need for this new strategy arose since previous approaches to noise stability inequalities (for a single set) did not appear to generalize to the setting of Theorem \ref{thm1}.  Such strategies included: semigroup monotonicity \cite{ledoux94}; induction from discrete settings \cite{bobkov97}; rearrangement \cite{burchard01}; differential inequalities \cite{mossel15}; stochastic calculus \cite{eldan15}.  Our approach instead uses: (i) calculus of variations (to reduce the ambient dimension \cite{HeilmanTarter2021}), (ii) then rearrangement on spheres, (iii) Fourier analytic decompositions, (iv) and finally, reassembled operator norm estimates.

As we can see, the approaches to Theorems \ref{thm1} and \ref{thm4} are related but distinct.  The step for averaging over spheres is common to both approaches, though we choose slightly different strategies for bounding the operator norms.  The details of the arguments are also fairly different.  Theorem \ref{thm1} requires a rearrangement argument on spheres that is not needed in Theorem \ref{thm4}.  Hence, Theorem \ref{thm4} is in some sense easier than Theorem \ref{thm1}.  Another difference between Theorems \ref{thm1} and \ref{thm4} is the Fourier analytic formulas, since the latter uses Fourier analysis on $S^{k-1}$ with $k\geq3$ and with exponential measures on $(0,\infty)$, while the former uses Fourier analysis on $S^1$.  Lastly, it is easier in Theorem \ref{thm4} to obtain a full range of positive and negative correlation values, while this is a more difficult task in Theorem \ref{thm1} due to a lack of any obvious monotonicity with respect to the correlation parameter.

The sphere averaging idea originated in \cite{hwang21}, although their attempted proof of Theorem \ref{thm4} was erroneous.  Weaker versions of Theorems \ref{thm1} and \ref{thm4} appeared in \cite{Heilman2023} and \cite{heilman25}, respectively.  The innovation of the present paper is to obtain suitably sharp estimates in order to obtain the appropriate correlation quantity required for the inapproximability results in Theorems \ref{thm3} and \ref{thm5}.

\subsection{Possible Future Work}

\begin{itemize}
\item As mentioned above, our Standard Simplex and Plurality is Stablest results are proven for correlations in $[-1/2,2/5]$.  Some small improvements beyond these values seem possible, but it seems difficult to make the argument work simultaneously for all correlations in $(-1,1)$.  Perhaps it is an accident that a single argument works this way in the case of a single set in Euclidean space \cite{mossel10}.  It is well understood that correlations close to $0$ have a more Fourier analytic content, while correlations close to one have more isoperimetric content.  Our argument combines Fourier analytic arguments with functional analytic bounds, in contrast to the various other quite different approaches to the single set case.
\item The inapproximability result Theorem \ref{thm:max-2lin-3-hardness} for MAX-2-LIN(3) can be strictly improved by resolving the three candidate Plurality is Stablest Conjecture for all positive correlations.  The optimal choice of correlation is around $.615$, which seems difficult to achieve at present.
\item It should be possible to write stronger stability estimates of Theorems \ref{thm1} and \ref{thm4}, though for the sake of brevity we did not pursue this approach.  A stability estimate says: if you are close to optimizing the functional value, then you are close in the domain to the optimizer.
\item The analogue of Theorem \ref{thm1} for four or more sets, and Theorem \ref{thm2} for four or more candidates both remain largely open.  Some of the steps in the present paper carry over readily to this setting, but some do not.  There are a few significant technical issues that arise in these settings, most notably the lack of rearrangement.  The first step of Theorem \ref{thm1} is to rearrange the sets on spheres centered at the origin, which amounts to a one-dimensional rearrangement on $S^1$ into three circular arcs.  Moreover, it does not matter if these arcs intersect each other or not (and a geodesic ball and a spherical arc are the same on $S^1$, which is good since rearrangement on $S^1$ naturally moves toward geodesic balls).  For four sets in $\R^3$, the analogous step would be a two-dimensional rearrangement of four sets in $S^2$ into four spherical triangles.  On $S^2$, a geodesic ball is not the same as a spherical triangle, so the most straightforward rearrangement inequalities on $S^2$ cannot apply here.  There does not appear to be any result in the literature or any general strategy that allows for rearrangement of four disjoint sets on the sphere into four disjoint spherical triangles.  This is the main obstruction to progress.  \cite{Heilman2022} provides an alternative strategy to circumvent this issue.
\item It should be possible to use our results to construct integrality-gap
instances showing that the Frieze--Jerrum semidefinite program for
MAX-3-CUT attains its worst-case behavior.
\item Theorem \ref{thm5} addresses the product state of Quantum MAX-CUT.  There is also a generalization of that problem known as (the product state of) Quantum MAX-k-CUT \cite{carlson23}.  It could be interesting to pursue a sharp hardness result in that more general setting.  Theorem \ref{thm5} only required $S^2$-valued functions in Theorem \ref{thm4}.  An analogue of Theorem \ref{thm5} for Quantum MAX-k-CUT may require $S^{k-1}$-valued functions, which is one reason we proved Theorem \ref{thm4} for $S^{k-1}$-valued functions for all $k\geq3$ (besides Theorem \ref{thmk}).  However, it is not clear exactly what inequality is needed for Quantum MAX-k-CUT (since extra constraints may need to be placed on the function range), or what the ``target correlation value'' should be (i.e. what is the analogue of $\rho_{\rm BOV,k}$ in \eqref{eq:alpha-bov}).
\end{itemize}

\subsection{GitHub Link}

Associated codes for Theorems \ref{thm1} and \ref{thm4} appear at the following GitHub link: \url{https://github.com/sheilman77/cut_bounds}.

\section{Part I. Quantum MAX-CUT Preliminaries}

Let $n\geq k\geq3$.  Let $\gamma_n(x)\colonequals (2\pi)^{-n/2}e^{-\|x\|^{2}/2}$ for all $x\in\R^{n}$ with $\|x\|\colonequals(\sum_{i=1}^{n}x_i^2)^{1/2}$ for all $x=(x_1,\ldots,x_n)\in\R^{n}$.  For any
$-1<\rho<1$ and for any bounded measurable $f\colon\R^{n}\to\R^{k}$, let $T_\rho$ be the Ornstein--Uhlenbeck operator
\[
  T_\rho f(x)
  \colonequals
  \int_{\R^{n}}f(\rho x + y\sqrt{1-\rho^{2}})\gamma_{n}(y)dy,\qquad\forall\,x\in\R^{n}.
\]

For a bounded measurable $f\colon\R^n\to\B^k$, where
$\B^k=\{x\in\R^k\colon\norm{x}\leq1\}$, define the noise stability with correlation parameter $\rho$ to be
\[
  \Stab_\rho(f)
  \colonequals\int_{\R^n}\ip{f(x)}{T_\rho f(x)}\,\gamma_n(x)\dd x.
\]
Equivalently, $\Stab_\rho(f)=\E\ip{f(X)}{f(Y)}$ when $X,Y$ are
$\rho$-correlated standard Gaussians.

For any $n\geq k\geq 3$, put 
\[
  f_{\mathrm{opt},k}(x)
  \colonequals\frac{(x_1,\ldots,x_k)}{(x_1^2+\cdots+x_k^2)^{1/2}},\qquad\forall\,x=(x_1,\ldots,x_n)\in\R^n\text{ with }x_1^2+\cdots+x_k^2\neq0.
\]
Its noise stability is expressed using hypergeometric functions as \cite[Theorem~7.15]{hwang21}
\begin{equation}\label{eq:Fstar}
 F^*(k,\rho)
 \colonequals\Stab_\rho(f_{\mathrm{opt},k})\\
 =\frac{2}{k}
 \left(\frac{\Gamma((k+1)/2)}{\Gamma(k/2)}\right)^2
 \rho\cdot\,{}_2F_1\!\left(\frac12,\frac12;
 \frac{k}{2}+1;\rho^2\right).
 \end{equation}

Theorem \ref{thm4} follows from the following two theorems.

\begin{theorem}[\embolden{Negative vector-valued Borell inequality}]
\label{thm:negative-borell}
Let $n\geq k\geq3$, let $f\colon\R^n\to\B^k$ be measurable, and let
$
  -0.5843\leq\rho<0
$
(or $-3/5\leq\rho<0$ if $k>3$).
Then
\begin{equation}
\label{eq:negative-borell}
  \Stab_\rho(f)\geq F^*(k,\rho).
\end{equation}
\end{theorem}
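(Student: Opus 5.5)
We will follow the outline given in the Discussion of the Proofs. Write $\rho<0$ and expand $f$ in polar coordinates $x=r\theta$ with $r>0$ and $\theta\in S^{n-1}$. Let $\bar f(r)\colonequals\int_{S^{n-1}}f(r\theta)\,d\sigma(\theta)\in\B^k$ be the spherical average. Split $\Stab_\rho(f)$ as in Lemma~\ref{lem:shell} into two pieces. The first is a bilinear radial term $\int\int\ip{\bar f(r)}{\bar f(s)}K_0(r,s)\,dr\,ds$. The second collects the contributions of the spherical harmonics of degree $d\geq1$ on each sphere, with radial kernels $K_d(r,s)$ given by Bessel-type (or Mehler/Gegenbauer) expansions of the Gaussian kernel. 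For $\rho<0$, the kernels $K_d$ have sign $(-1)^d$ up to positive factors. Hence the even-degree terms are nonnegative and can be discarded in a lower bound. The odd-degree terms are negative, and the degree-one term dominates them.

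Next we need a lower bound on the negative degree-one term in terms of how close $f$ is to unit length. Pointwise $|f|\leq1$, so on each sphere the degree-one coefficient of $f$ (a $k\times n$ matrix $M(r)$) satisfies $\|M(r)\|_{HS}^2+|\bar f(r)|^2\leq1$ by Parseval, after normalizing. The degree-one coefficient is also controlled by a trace inequality saturated by $f_{\mathrm{opt},k}$. Among maps into $\B^k$, the degree-one component is maximized, up to rotation, by $\theta\mapsto P\theta/|P\theta|$ with $P$ the projection to the first $k$ coordinates. This gives the exact constant $\Gamma((k+1)/2)/\Gamma(k/2)$ appearing in \eqref{eq:Fstar}. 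Higher odd degrees are bounded by the remaining $L^2$ mass $1-|\bar f(r)|^2-\|M(r)\|^2$.

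Collecting these bounds, $\Stab_\rho(f)-F^*(k,\rho)$ is bounded below by a quadratic form in the radial profiles $a(r)\colonequals|\bar f(r)|$ and the deficiency of $M(r)$ from its optimum. This is where we invoke Proposition~\ref{prop:radial-criterion}. We expand these profiles in the Laguerre basis on $(0,\infty)$, with weight $r^{n-1}e^{-r^2/2}$ and the variable $r^2/2$. The claim then becomes: an infinite symmetric matrix $A(\rho,k)$, whose entries are explicit integrals of products of Laguerre polynomials against the kernels $K_0,K_1,K_3$, has operator norm at most one. Since $f_{\mathrm{opt},k}$ has $\bar f\equiv0$ and saturates the degree-one bound, equality is attained exactly there, which also yields uniqueness.

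The main obstacle is the last step, the certificate (Proposition~\ref{prop:certificateB}), for the full range of $\rho$ down to $-0.5843$ when $k=3$. The entries of $A$ decay geometrically, roughly like $|\rho|^{i+j}$ times polynomial factors. So the first attempt would be to compute the top $20\times20$ block numerically with rigorous interval bounds and bound its Perron eigenvalue, which is legitimate since the entries are nonnegative. The tail would be bounded by a Schur test using the explicit decay, and monotonicity in $|\rho|$ would let us check only the endpoint $\rho=-0.5843$, or $-3/5$ when $k>3$. The dependence on $n$ would be removed by noting that the kernels are monotone in $n$, or that the worst case is $n=k$. Note that $n\geq k$ is also reduced via projection of the dimensions beyond $k$ onto the radial variable. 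If the margin at $\rho=-0.5843$ is too thin, we would instead keep the degree-$3$ term exactly rather than bounding it crudely.
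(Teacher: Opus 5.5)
There is a genuine gap, and it sits in the step you leave for last: the passage from $n$ to $k$.

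\textbf{Your direct analysis on $S^{n-1}$ cannot be sharp when $n>k$.} On $S^{n-1}$, $f_{\mathrm{opt},k}(r\theta)=P\theta/|P\theta|$ is discontinuous on the subsphere $\{P\theta=0\}$. So it is not a polynomial, and it carries mass in infinitely many odd degrees. This breaks both versions of your bound:
\begin{itemize}
\item If the degree-one eigenvalue dominates all odd degrees, you only get $-\E\lambda^{R,S}_{|\rho|,n}=F^*(n,\rho)$, the constant for $\B^n$-valued maps. This bound is strict at $f_{\mathrm{opt},k}$.
\item If you charge degrees $\geq3$ to the degree-three kernel times the leftover mass, the bound is again strict at $f_{\mathrm{opt},k}$, because of its degree-$\geq5$ mass.
\end{itemize}
Hence no operator-norm statement about a matrix built from $K_0,K_1,K_3$ can give $\Stab_\rho(f)\geq F^*(k,\rho)$ in dimension $n>k$. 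Your remedies (``kernels monotone in $n$'', ``the worst case is $n=k$'', ``projecting the extra coordinates onto the radial variable'') are assertions. The middle one is exactly a dimension-reduction theorem you would have to supply.

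\textbf{At $n=k$ your matrix has nothing to do.} For $\rho<0$, every term involving $\bar f$ has a favorable sign:
\begin{itemize}
\item The degree-zero and even-degree parts are positive semidefinite. This follows from $K_d^{(\rho)}=(-1)^dK_d^{(|\rho|)}$ with $K_d^{(|\rho|)}\succeq0$, not from a pointwise sign of the kernel.
\item $\bar f$ only uses up $L^2$ mass.
\end{itemize}
Bounding every odd degree by $\lambda^{R,S}_{|\rho|,k}$ and the mass by $1-|\bar f(R)|^2$ gives $\Stab_\rho(f)\geq F^*(k,\rho)$ at once. This is Lemma~\ref{lem:shell} applied to the odd part, whose spherical means vanish. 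Your quadratic form in $|\bar f(r)|$ and the ``deficiency of $M(r)$'' is therefore neither derived nor needed. It is also not the matrix $A_\eta^{(k)}$ of Proposition~\ref{prop:radial-criterion}, which belongs to the positive mean-zero problem.

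\textbf{The missing idea is to convert the problem, as the paper does.} Lemma~\ref{lem:oddification} gives
$\Stab_{-\eta}(f)=\Stab_\eta(f_{\mathrm e})-\Stab_\eta(f_{\mathrm o})\geq-\Stab_\eta(f_{\mathrm o})$, with $f_{\mathrm o}$ mean zero. Theorem~\ref{thm:positive-centered} and the oddness of $F^*$ then finish the proof.
\begin{itemize}
\item The case $n>k$ is handled inside Proposition~\ref{prop:radial-criterion} by \cite[Theorem~6.1]{hwang21}: maximizers of $\Stab_\eta$ among mean-zero $\B^k$-valued maps on $\R^n$ depend on at most $k$ coordinates.
\item That reduction applies to the mean-zero class, not the odd class. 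So you genuinely need the full mean-zero inequality in dimension $k$.
\item There, $\E\ip{m(R)}{m(S)}\geq0$ competes with $\E[\mathcal Y_{\eta,k}(R)\norm{m(R)}^2]$. That competition is what $\norm{A_\eta^{(k)}}<1$ (Proposition~\ref{prop:certificateB}) controls.
\item This certificate is where the restriction $|\rho|\leq0.5843$ (or $3/5$ for $k>3$) enters.
\end{itemize}
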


\begin{theorem}[\embolden{Positive vector-valued Borell inequality}]
\label{thm:positive-centered}
Let $n\geq k\geq3$, let $f\colon\R^n\to\B^k$ be measurable with
$\int_{\R^{n}} f\dd\gamma_n=0$.  Let $0<\eta\leq0.5843$ (or $0<\eta\leq3/5$ if $k>3$).  Then
\begin{equation}
\label{eq:positive-centered}
  \Stab_\eta(f)\leq F^*(k,\eta).
\end{equation}
\end{theorem}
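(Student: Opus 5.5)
The plan is to follow the strategy described in the discussion of proofs: split the noise stability along spheres, encode the positive case as an operator inequality in a radial Laguerre basis, and reduce everything to an operator norm bound. Write $x=r\theta$ with $r>0$ and $\theta\in S^{n-1}$. Define the spherical average $\bar f(r)\colonequals\int_{S^{n-1}}f(r\theta)\,d\sigma(\theta)$, which takes values in $\B^k$, and set $g=f-\bar f(\|x\|)$. By Lemma \ref{lem:shell}, $\Stab_\eta(f)$ splits as follows:
\begin{itemize}
\item a bilinear radial term $\iint \ip{\bar f(r)}{\bar f(s)}K_\eta(r,s)\,dr\,ds$, where $K_\eta\ge0$ is the zeroth spherical-harmonic part of the Mehler kernel;
\item a term involving only the nonconstant spherical harmonics of $f$ on each sphere.
\end{itemize}
The cross terms vanish by orthogonality of spherical harmonics. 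For $\eta>0$, each harmonic of degree $\ell\ge1$ contributes through a positive kernel whose weight, as a function of $\ell$, is largest at $\ell=1$ (this is monotonicity of the Gegenbauer/Bessel coefficients for positive correlation). Hence the nonconstant part is bounded by what one gets by placing all of its mass in degree one.

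Next I would use the pointwise constraint $\|f\|\le1$. On each sphere,
\[
\|\bar f(r)\|^2+\int_{S^{n-1}}\|g(r\theta)\|^2\,d\sigma(\theta)\le 1.
\]
So I set $a(r)\colonequals\|\bar f(r)\|^2$ and let the degree-one mass be at most $1-a(r)$. The degree-one contribution is then controlled via Cauchy--Schwarz by the degree-one kernel applied to $\sqrt{1-a}$. The optimizer $f_{\mathrm{opt},k}$ has $\bar f\equiv0$ and consists purely of degree-one harmonics of full length, and it achieves $F^*(k,\eta)$ from \eqref{eq:Fstar}. The inequality $\Stab_\eta(f)\le F^*(k,\eta)$ therefore reduces to showing that the gain from the radial term cannot exceed the loss in the degree-one term. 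Concretely, I need
\[
\iint \ip{\bar f(r)}{\bar f(s)}K_\eta(r,s)\le \iint\bigl(1-\sqrt{(1-a(r))(1-a(s))}\bigr)K^{(1)}_\eta(r,s),
\]
subject to the mean zero constraint $\int\bar f(r)\,\rho_n(r)\,dr=0$, where $\rho_n$ is the radial density. Using $1-\sqrt{(1-a)(1-b)}\ge(a+b)/2$, it suffices to prove a quadratic form inequality $\ip{u}{K_\eta u}\le\ip{u}{D_\eta u}$ for $u=\bar f$ orthogonal to constants, where $D_\eta$ is the multiplication operator by $\int K^{(1)}_\eta(r,s)\,ds$. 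This is the content of Proposition \ref{prop:radial-criterion}.

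To verify it, I would conjugate by $D_\eta^{-1/2}$ and expand in the radial Laguerre basis $L^{(n/2-1)}_j(r^2/2)$. In that basis the Mehler kernel restricted to radial functions is diagonal with eigenvalues $\eta^{2j}$, and the mean zero condition removes $j=0$. The resulting infinite matrix has exponentially decaying entries, and I must show its operator norm is less than one. I would bound the $20\times20$ leading block numerically, as in Proposition \ref{prop:certificateB}, since its entries are nonnegative and Perron--Frobenius gives certified eigenvalue bounds. The tail would be handled by a Schur test using the geometric decay $\eta^{2j}$ with $\eta\le0.5843$.

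I expect the main obstacle to be uniformity in $n$ and $k$. I would reduce to $n=k$ by the dimension reduction of the cited variational argument, and handle $k>3$ by monotonicity of the ratio of the degree-one weight to the radial weight in $k$, which explains the better range $3/5$. The endpoint $0.5843$ is exactly where the certified top eigenvalue approaches one, so tightness is essential.
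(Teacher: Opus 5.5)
Your architecture is the paper's. Your harmonic relaxation with $1-\sqrt{(1-a)(1-b)}\geq(a+b)/2$ reproduces Lemma \ref{lem:shell}, your $D_\eta$ is multiplication by $\mathcal Y_{\eta,k}$ (Lemma \ref{lem:Y-exact}), your quadratic-form inequality is the criterion of Proposition \ref{prop:radial-criterion}, and $n>k$ reduces to $n=k$ via \cite[Theorem~6.1]{hwang21}. That reduction must come first, since on $S^{n-1}$ with $n>k$ your relaxation only gives $F^*(n,\eta)>F^*(k,\eta)$.

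The genuine gap is $k\geq4$. You must cover infinitely many $k$, and ``monotonicity in $k$ of the ratio of the degree-one weight to the radial weight'' is asserted with no mechanism. The operators for different $k$ act on different spaces $L_2(\nu_k)$, so there is nothing to compare directly. Also, a comparison with $k=3$ could only transfer the $k=3$ range $0.5843$, not $3/5$. The missing idea is the Gamma splitting of Proposition \ref{prop:k4-reduction}:
\begin{itemize}
\item $(Jf)(a,c)=f(a+c)$ is an isometry $L_2(\nu_k)\to L_2(\nu_4)\otimes L_2(\nu_{k-4})$ that intertwines $T_\eta$.
\item $J^*$ is conditional expectation given $A+C$.
\item Since $A/(A+C)$ is $\mathrm{Beta}(2,(k-4)/2)$, $J^*(M_{a^{-1/2}}\otimes I)J$ is a constant multiple of $M_{x^{-1/2}}$.
\item Hence $V_\eta^{(k)}=J^*[(T_\eta+r_kB_{\eta,4})\otimes T_\eta]J$ with $r_k=\frac34\frac{k}{k-1}\leq\frac{15}{16}$.
\end{itemize}
Even then one $k=4$ operator is not enough. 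Only the component $f_0$ of $Jf$ keeps the mean-zero condition. The components $f_j$, $j\geq1$, need a second certificate for $Z_\eta$ on the full space, where the factor $\eta^2$ replaces the lost constraint.

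Your certification step also needs repair.
\begin{itemize}
\item The exact weight $1/\mathcal Y_{\eta,k}$ is the reciprocal of a ${}_1F_1$ expression. So the matrix entries have no closed form, and you give no reason they are nonnegative, which Perron--Frobenius/Collatz--Wielandt requires. The paper first proves $1/\mathcal Y_{\eta,k}(\sqrt{2t})\leq1+\beta_{\eta,k}t^{-1/2}$ (Lemma \ref{lem:kummer-reciprocal}, via Kummer's integral and stochastic domination of a Beta variable by a Gamma variable). This gives $A_\eta^{(k)}\preceq V_\eta^{(k)}$, and the Laguerre connection formula makes every entry of $V_\eta^{(k)}$ in \eqref{eq:V-matrix} an explicit nonnegative algebraic number.
\item The factorization matters. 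Conjugating by $D_\eta^{-1/2}$ and expanding in the Laguerre basis gives only polynomially decaying entries, because $D_\eta^{-1/2}$ behaves like $t^{-1/4}$ at $0$. The paper instead uses $T_\eta^{1/2}M_{1/\mathcal Y_{\eta,k}}T_\eta^{1/2}$ on nonconstant modes, which has the same nonzero spectrum and entries carrying $\eta^{i+j}$.
\item The paper controls the tail, including the coupling block, by positive semidefiniteness and a trace bound (Lemma \ref{lem:block-tail}).
\item $0.5843$ is not where the method becomes tight; the certified norm there is below $0.987$. It is chosen so that $\rho_{\mathrm{BOV},3}\approx-0.58427$ is covered.
\end{itemize}
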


Let $\rho_{\rm BOV,k}$ be the value of $\rho$ achieving the infimum below.  That is
\begin{equation}
\label{eq:alpha-bov}
  \alpha_{\mathrm{BOV},k}
  \colonequals\inf_{-1<\rho<0}\frac{1-F^*(k,\rho)}{1-\rho}
    =\frac{1-F^*(k,\rho_{\mathrm{BOV},k})}{1-\rho_{\mathrm{BOV},k}}.
\end{equation}
As shown in \cite{BrietOliveiraVallentin,hwang21}, when $k=3$, we have $\rho_{\rm BOV, 3}=-0.5842676623\ldots$ and $\alpha_{\rm BOV, 3}=0.9563372685\ldots$.
Theorem \ref{thm:negative-borell} together with \cite[Theorem 11.3]{hwang21} proves Theorem \ref{thm5} since
\begin{equation}\label{rhobov3}
-.5843<\rho_{\rm BOV,3}<0.
\end{equation}

\subsection{Quantum MAX-CUT}

Below we describe the Quantum MAX-CUT problem by analogy with MAX-CUT.  When $M$ is a $2\times 2$ matrix and $j$ is a positive integer, we denote
$$M^{\otimes j}\colonequals\underbrace{M\otimes \cdots\otimes M}_{j\rm\ times}.$$
If $n$ is a positive integer and $1\leq j\leq n$, denote
$$Z_{j}\colonequals I_{2}^{\otimes (j-1)}\otimes \begin{pmatrix} 1 & 0 \\ 0 & -1\end{pmatrix} \otimes I_{2}^{\otimes (n-j)},\qquad\forall\,1\leq j\leq n,
\qquad I_{2}\colonequals\begin{pmatrix} 1 & 0 \\ 0 & 1\end{pmatrix}.$$
The (weighted) \textbf{MAX-CUT} problem can be stated as: given $w\colon\{1,\ldots,n\}^{2}\to[0,\infty)$ satisfying $w_{ij}=w_{ji}$ and $w_{ii}=0$ for all $1\leq i,j\leq n$, compute the following quantity  \cite{ghar19,hwang21}
$$\max_{u\in(\C^{2})^{\otimes n}\colon\norm{u}\leq1}u^{*}\Big(\sum_{i,j=1}^{n}w_{ij}(I_{2}^{\otimes n} - Z_{i}Z_{j})\Big) u.$$
Define now

$$X_{j}\colonequals I_{2}^{\otimes (j-1)}\otimes \begin{pmatrix} 0 & 1 \\ 1 & 0\end{pmatrix} \otimes I_{2}^{\otimes (n-j)},\qquad\forall\,1\leq j\leq n,$$
$$Y_{j}\colonequals I_{2}^{\otimes (j-1)}\otimes \begin{pmatrix} 0 & -\sqrt{-1} \\ \sqrt{-1} & 0\end{pmatrix} \otimes I_{2}^{\otimes (n-j)},\qquad\forall\,1\leq j\leq n.$$

The \textbf{Quantum MAX-CUT} problem is \cite{ghar19,hwang21}: given $w\colon\{1,\ldots,n\}^{2}\to[0,\infty)$ satisfying $w_{ij}=w_{ji}$ and $w_{ii}=0$ for all $1\leq i,j\leq n$, compute the following quantity
$$\max_{u\in(\C^{2})^{\otimes n}\colon \norm{u}\leq 1}u^{*}\Big(\sum_{i,j=1}^{n}w_{ij}(I_{2}^{\otimes n}- X_{i} X_{j} - Y_{i} Y_{j} - Z_{i} Z_{j})\Big) u.$$
The \textbf{product state of Quantum MAX-CUT} is the more restricted optimization problem of computing
$$\max_{\substack{u=u_{1}\otimes\cdots\otimes u_{n}\colon\\ u_{i}\in \C^{2},\,\norm{u_{i}}\leq1,\,\forall\,1\leq i\leq n}}u^{*}\Big(\sum_{i,j=1}^{n}w_{ij}(I_{2}^{\otimes n} - X_{i} X_{j} - Y_{i} Y_{j} - Z_{i} Z_{j})\Big) u.$$

\section{The weighted radial reduction}
\label{sec:reduction}

Fix $0<\eta<1$.  Following the shell decomposition in
\cite{heilman25}, we first prove the centered positive inequality in
dimension $k$.  Let $X,Y\in\R^k$ be $\eta$-correlated standard Gaussians
and write $R=\norm{X}$, $S=\norm{Y}$, $U=X/R$, and $V=Y/S$, away from null
sets.  Let $I_m$ denote the modified Bessel function \eqref{imdef}.  Conditional on $R=r$ and $S=s$, the angular noise operator on $S^{k-1}$
has first nonconstant eigenvalue \cite[Eq. 17]{heilman25} \cite[Corollary 4.9]{hwang21}
\begin{equation}
\label{eq:lambda}
 \lambda_{\eta,k}^{r,s}
  =\frac{I_{k/2}(z)}{I_{k/2 - 1}(z)},
  \qquad z=\frac{\eta rs}{1-\eta^2}.
\end{equation}
For any bounded measurable $f\colon\R^n\to\B^k$, let $m\colon(0,\infty)\to\B^k$ be defined by
\begin{equation}\label{mdef}
m(r)=m_f(r)\colonequals\int_{S^{n-1}}f(r\theta)\,\dd\sigma(\theta),\qquad\forall\,0<r<\infty,
\end{equation}
where $\sigma$ is normalized surface measure on $S^{n-1}$ (so that $\sigma(S^{n-1})=1$).  (Note that $m(r)$ is well-defined for a.e. $r\in(0,\infty)$ by Fubini's Theorem.)

\begin{lemma}[Ball-valued shell estimate]
\label{lem:shell}
For every measurable $f\colon\R^k\to\B^k$,
\begin{equation}
\label{eq:shell}
  \Stab_\eta(f)-F^*(k,\eta)
  \leq
  \E\ip{m(R)}{m(S)}
  -\E\!\left[\lambda_{\eta,k}^{R,S}\norm{m(R)}^2\right].
\end{equation}
\end{lemma}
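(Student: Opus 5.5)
The plan is to decompose $f$ on each sphere centered at the origin into its spherical average plus a mean-zero remainder. The remainder is then controlled by the spectral gap of the angular noise operator, whose first nonconstant eigenvalue is \eqref{eq:lambda}. Here $n=k$. For a.e.\ $r>0$, write $f(r\theta)=m(r)+g_r(\theta)$ with $\int_{S^{k-1}}g_r\,\dd\sigma=0$.

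Condition on $R=r$ and $S=s$. The pair $(U,V)$ then has a joint law on $S^{k-1}\times S^{k-1}$ with marginals $\sigma$, and its density with respect to $\sigma\otimes\sigma$ is proportional to $e^{z\ip{u}{v}}$ with $z=\eta rs/(1-\eta^2)$. This kernel is symmetric in $(u,v)$ and rotation invariant, so it defines a self-adjoint Markov operator $P_{r,s}$ on $L^2(S^{k-1},\sigma;\R^k)$. By Funk--Hecke, $P_{r,s}$ acts on degree-$\ell$ spherical harmonics by the scalar
\[
\mu_\ell(z)=\frac{I_{\ell+k/2-1}(z)}{I_{k/2-1}(z)} .
\]
This is what \cite[Corollary 4.9]{hwang21} and \cite[Eq.~17]{heilman25} record for $\ell=1$.

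Expanding $\E[\ip{f(RU)}{f(SV)}\mid R=r,S=s]$, the cross terms between $m$ and $g$ vanish because $P_{r,s}$ preserves constants and the $g$'s have mean zero. This gives
\[
\Stab_\eta(f)=\E\ip{m(R)}{m(S)}+\E\,\ip{g_R}{P_{R,S}\,g_S}_{L^2(\sigma)} .
\]
Since $z>0$ for $\eta>0$, we have $0<\mu_\ell(z)\le\mu_1(z)=\lambda^{r,s}_{\eta,k}$ for all $\ell\ge1$, because $\nu\mapsto I_\nu(z)$ is decreasing for fixed $z>0$. Hence, on the mean-zero subspace,
\[
\ip{g_r}{P_{r,s}g_s}\le \lambda^{r,s}_{\eta,k}\,\norm{g_r}\norm{g_s}\le \tfrac12\lambda^{r,s}_{\eta,k}\bigl(\norm{g_r}^2+\norm{g_s}^2\bigr).
\]
The law of $(R,S)$ is exchangeable and $\lambda^{r,s}$ is symmetric in $(r,s)$, so taking expectations yields $\E[\lambda^{R,S}_{\eta,k}\norm{g_R}^2]$. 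Because $f$ takes values in $\B^k$,
\[
\norm{g_r}^2=\int\norm{f(r\theta)}^2\dd\sigma-\norm{m(r)}^2\le 1-\norm{m(r)}^2 .
\]
Therefore
\[
\Stab_\eta(f)\le \E\ip{m(R)}{m(S)}+\E\lambda^{R,S}_{\eta,k}-\E\bigl[\lambda^{R,S}_{\eta,k}\norm{m(R)}^2\bigr].
\]

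It remains to identify $\E\lambda^{R,S}_{\eta,k}=F^*(k,\eta)$. For this, apply the same computation to $f_{\mathrm{opt},k}(x)=x/\norm{x}$. In that case $m\equiv0$, and $g_r(\theta)=\theta$ is a vector of degree-one harmonics with $\norm{g_r}^2=1$. Every inequality above is then an equality, so $F^*(k,\eta)=\Stab_\eta(f_{\mathrm{opt},k})=\E\lambda^{R,S}_{\eta,k}$. Subtracting gives \eqref{eq:shell}.

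The main obstacle is the spectral step: verifying the Funk--Hecke eigenvalue formula for the conditional angular kernel and the monotonicity $\mu_\ell\le\mu_1$ (together with positivity) for all $\ell\ge1$. I would take this from the cited results of \cite{hwang21,heilman25}, or derive it from the Bessel recurrence and the decrease of $I_\nu(z)$ in $\nu$. The remaining steps are bookkeeping with Fubini.
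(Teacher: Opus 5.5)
Your proposal is correct and takes essentially the same route as the paper, which defers to \cite{heilman25}: split $f$ on each sphere into its mean $m(r)$ plus a mean-zero part, bound the angular cross term by the first nonconstant eigenvalue $\lambda_{\eta,k}^{r,s}$ together with exchangeability, and use $\int\norm{f(r\theta)}^2\dd\sigma(\theta)-\norm{m(r)}^2\leq1-\norm{m(r)}^2$, which is exactly the one modification the paper flags. Your identification $\E\lambda_{\eta,k}^{R,S}=F^*(k,\eta)$, obtained by plugging $f_{\mathrm{opt},k}$ into the same identity, correctly supplies the remaining bookkeeping.
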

\begin{proof}
This follows from \cite[Eq. 5]{heilman25} and (the proof of) \cite[Lemma 2.2]{heilman25}, the only change being that $q(r)\colonequals\int_{S^{k-1}}\|f(r\theta)\|^2\dd\sigma(\theta)-\|m(r)\|^2\leq 1-\|m(r)\|^2$ in \cite[Lemma 2.1]{heilman25}.
\end{proof}

Let $G\in\R^{k}$ be a standard Gaussian random vector.  Let $s>0$.  Define
\begin{equation}
\label{eq:mk-def}
 m_k
 \colonequals\E\norm{G}
 =\sqrt2\,\frac{\Gamma((k+1)/2)}{\Gamma(k/2)},
\end{equation}
\begin{equation}
\label{eq:Y-def}
 \delta=\delta_{\eta}(s)
 \colonequals\frac{\eta s}{\sqrt{1-\eta^2}},
 \qquad
 \mathcal Y_{\eta,k}(s)
 \colonequals
 \frac{m_k\delta}{k}\cdot\,
 {}_1F_1\!\left(\frac12;\frac{k}{2}+1;-\frac{\delta^2}{2}\right).
\end{equation}

\begin{lemma}[Exact conditional angular weight]
\label{lem:Y-exact}
For every $s>0$,
\begin{equation}
\label{eq:Y-exact}
 \E\!\left[\lambda_{\eta,k}^{R,S}\mid S=s\right]
 =\mathcal Y_{\eta,k}(s).
\end{equation}
Consequently,
\begin{equation}
\label{eq:Y-integrated}
 \E\!\left[\lambda_{\eta,k}^{R,S}\norm{m(R)}^2\right]
 =\E\!\left[\mathcal Y_{\eta,k}(R)\norm{m(R)}^2\right].
\end{equation}
\end{lemma}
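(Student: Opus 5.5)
The plan is to identify $\lambda_{\eta,k}^{r,s}$ with a conditional expectation of an inner product. We then compute $\E[\lambda\mid S=s]$ as the derivative of the mean of a noncentral chi variable.

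\emph{Step 1: an inner-product formula for $\lambda$.} Recall from \cite[Corollary 4.9]{hwang21}, \cite{heilman25} that $\lambda_{\eta,k}^{r,s}$ in \eqref{eq:lambda} is the eigenvalue of the conditional angular noise operator on degree-one spherical harmonics. Conditional on $R=r,S=s$, the law of $(U,V)$ is rotation invariant, with density in $\ip{u}{v}$ proportional to $e^{z\ip{u}{v}}$. Apply the eigenvalue relation to the degree-one harmonic $u\mapsto\ip{u}{v}$, evaluated at $v$ with $\ip{v}{v}=1$. This gives
\[
\lambda_{\eta,k}^{r,s}=\E\bigl[\ip{U}{V}\mid R=r,S=s\bigr].
\]
Averaging over $R$ then yields $\E[\lambda_{\eta,k}^{R,S}\mid S=s]=\E[\ip{U}{V}\mid S=s]$.

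\emph{Step 2: reduce to a noncentral Gaussian.} Condition on $Y=y$ with $\norm{y}=s$, and by rotation invariance take $y=se_1$. Then $X=\eta s e_1+\sqrt{1-\eta^2}\,G$ with $G$ standard Gaussian in $\R^k$. Since $U=X/\norm{X}$ is unchanged by positive scaling, we get
\[
\E[\ip{U}{V}\mid S=s]=\E\Bigl[\tfrac{\delta+G_1}{\norm{\delta e_1+G}}\Bigr],\qquad \delta=\tfrac{\eta s}{\sqrt{1-\eta^2}},
\]
which matches \eqref{eq:Y-def}.

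\emph{Step 3: differentiate the noncentral chi mean.} Let $h(\delta)\colonequals\E\norm{\delta e_1+G}$. Differentiating under the expectation is justified by dominated convergence, since the gradient of $\norm{\cdot}$ has norm at most $1$. This gives $h'(\delta)$ equal to exactly the expectation in Step 2. The mean of a noncentral chi variable is classical (it can be obtained by expanding the Poisson-mixture representation of the noncentral $\chi^2$):
\[
h(\delta)=m_k\,{}_1F_1\!\left(-\tfrac12;\tfrac k2;-\tfrac{\delta^2}{2}\right),
\]
and it equals $m_k$ at $\delta=0$, consistent with \eqref{eq:mk-def}. Next use $\frac{d}{dx}{}_1F_1(a;b;x)=\frac ab\,{}_1F_1(a+1;b+1;x)$ with $a=-\frac12$, $b=\frac k2$, and $dx/d\delta=-\delta$. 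This gives
\[
h'(\delta)=m_k\cdot\frac{-1/2}{k/2}\cdot(-\delta)\,{}_1F_1\!\left(\tfrac12;\tfrac k2+1;-\tfrac{\delta^2}{2}\right)=\mathcal Y_{\eta,k}(s),
\]
which proves \eqref{eq:Y-exact}. The main obstacle is verifying the noncentral chi formula for $h$ carefully. If quoting it is unsatisfactory, I would instead compute $\E[(\delta+G_1)/\norm{\delta e_1+G}]$ directly. To do so, expand $e^{\delta x_1}$ in the Gaussian density in polar coordinates, integrate term by term against the radial and angular parts, and match the resulting series with that of ${}_1F_1(\frac12;\frac k2+1;\cdot)$.

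\emph{Step 4: the consequence \eqref{eq:Y-integrated}.} The pair $(R,S)$ is exchangeable, and $\lambda_{\eta,k}^{r,s}$ is symmetric in $(r,s)$ because $z$ is. Hence $\E[\lambda_{\eta,k}^{R,S}\mid R=r]=\mathcal Y_{\eta,k}(r)$ as well. Since $\norm{m(R)}^2$ is a bounded function of $R$, the tower property gives
\[
\E\bigl[\lambda_{\eta,k}^{R,S}\norm{m(R)}^2\bigr]=\E\bigl[\mathcal Y_{\eta,k}(R)\norm{m(R)}^2\bigr].
\]
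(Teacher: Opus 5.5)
Your proposal is correct and follows essentially the same route as the paper: rotate to $Y=se_1$, rewrite the conditional mean as $\E[(\delta+G_1)/\norm{\delta e_1+G}]$, recognize this as the derivative of the noncentral chi mean $m_k\,{}_1F_1(-\tfrac12;\tfrac k2;-\tfrac{\delta^2}{2})$, differentiate with the Kummer derivative rule, and finish with exchangeability. The differences are cosmetic: you get $\lambda_{\eta,k}^{r,s}=\E[\ip{U}{V}\mid R=r,S=s]$ from the degree-one eigenvalue relation, whereas the paper quotes the von Mises--Fisher mean $\E[U\mid R=r,Y=se_1]=\lambda_{\eta,k}^{r,s}e_1$; you also justify differentiating under the expectation and spell out the tower-property step, both of which the paper leaves implicit.
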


\begin{proof}
Denote $e_1=(1,0,\ldots,0)\in\R^k$.  Since $(X,Y)$ is rotation invariant, $\E[\lambda_{\eta,k}^{R,S}\mid S=s]=\E[\lambda_{\eta,k}^{R,S}\mid Y=se_1]$.  Conditional on $Y=se_1$, $X$ is equal in distribution to $\eta se_1+\sqrt{1-\eta^2}G=\sqrt{1-\eta^2}(\delta e_1 + G)$ by \eqref{eq:Y-def}.  Conditional on $R=r$, $U=X/\|X\|$ then has the von Mises-Fisher distribution with known expected value $\E[U\mid R=r,Y=se_1]=\lambda_{\eta,k}^{r,s}e_1$.  Consequently, $\E[U_1\mid R=r,Y=se_1]=\lambda_{\eta,k}^{r,s}$.  Taking the conditional expectation over $R$,
\[
\E[\lambda_{\eta,k}^{R,S}\mid S=s]
=\E[\lambda_{\eta,k}^{R,s}\mid Y=se_1]
=\E[U_1\mid Y=se_1]
=\E\frac{\delta+G_1}{\|\delta e_1 + G\|}.
\]
We compute this quantity as $M_k'(\delta)$ where $M_k(\delta)\colonequals \E\|\delta e_1 +G\|$.  The expected value of the noncentral chi distribution has a known formula $M_k(\delta)=m_k \!\cdot\,{}_1F_1\!\left(
 -\frac12;\frac{k}{2};-\frac{\delta^2}{2}\right)$ in terms of the confluent hypergeometric function.  Then using a known differentiation formula, $\frac{\dd}{\dd z}{}_1F_1(a;b;z)
=(a/b){}_1F_1(a+1;b+1;z)$, we get
\[
\E[\lambda_{\eta,k}^{R,S}\mid S=s]
=M_k'(\delta)
=m_k(-\delta)(-1/k) \!\cdot\,{}_1F_1\!\left(
 \frac12;\frac{k}{2}+1;-\frac{\delta^2}{2}\right)
 \stackrel{\eqref{eq:Y-def}}{=}Y_{\eta,k}(s).
\]
That is, \eqref{eq:Y-exact} holds.  Equation \eqref{eq:Y-integrated} follows from exchangeability.
\end{proof}

The next bound is the only estimate applied to the exact weight.

\begin{lemma}[Reciprocal bound]
\label{lem:kummer-reciprocal}
Let $k\geq3$, $s>0$. Let $\delta=\delta_\eta(s)$ from \eqref{eq:Y-def}.  Then
\begin{equation}
\label{eq:kummer-reciprocal}
 \frac{1}{\mathcal Y_{\eta,k}(s)}
 \leq
 \sqrt{1+\left(\frac{k}{m_k\delta}\right)^2}
 \leq1+\frac{k}{m_k\delta}.
\end{equation}
\end{lemma}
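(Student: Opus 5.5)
The second inequality in \eqref{eq:kummer-reciprocal} is just $\sqrt{1+t^2}\leq 1+t$ for $t\geq0$, so all the work is in the first inequality. I would first rewrite it as a statement about a single confluent hypergeometric function. Put $c\colonequals m_k/k$, $b\colonequals \frac k2+1$, $x\colonequals\delta^2/2$ and $w(x)\colonequals{}_1F_1(\frac12;b;-x)$. By \eqref{eq:Y-def} we have $\mathcal Y_{\eta,k}(s)=c\delta\, w(x)$. The first inequality in \eqref{eq:kummer-reciprocal} then says $1/(c^2\delta^2w^2)\leq 1+1/(c^2\delta^2)$, which is equivalent to
\[
 w(x)^{-2}\leq 1+c^2\delta^2=1+2c^2x,\qquad \forall\,x>0 .
\]
The parameter $\eta$ has disappeared, so this is a one-variable inequality for each fixed $k\geq3$.

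Next I would check that the inequality is consistent at both ends, since this shows how much room there is.

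\emph{Large $x$.} The Laplace representation
\[
 w(x)=\frac{\Gamma(b)}{\Gamma(\frac12)\Gamma(b-\frac12)}\int_0^1e^{-xt}t^{-1/2}(1-t)^{b-3/2}\dd t
\]
gives $w(x)\sim \frac{\Gamma(b)}{\Gamma(b-1/2)}x^{-1/2}$. Since $\sqrt2\,c\,\Gamma(b)/\Gamma(b-\frac12)=2\Gamma(\frac k2+1)/(k\Gamma(\frac k2))=1$, the two sides agree to leading order as $x\to\infty$. So the inequality is asymptotically sharp, and a crude bound will not do there.

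\emph{Small $x$.} Here $w^{-2}=1+\frac{2x}{2b}\cdot 1+O(x^2)=1+\frac{2x}{k+2}+O(x^2)$. It therefore suffices at the linear level that $c^2\geq 1/(k+2)$. This follows from a Gautschi/Kershaw-type bound $\Gamma(\frac{k+1}{2})/\Gamma(\frac k2)\geq\sqrt{k/2-1/4}$, which gives $c^2\geq (k-\frac12)/k^2\geq 1/(k+2)$ for $k\geq1$.

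The main plan is to prove that $x\mapsto w(x)^{-2}$ is \emph{concave} on $[0,\infty)$. Then $w^{-2}$ lies below its tangent line at $0$, which is $1+\frac{2x}{k+2}\leq 1+2c^2x$, and we are done. Concavity is equivalent to $3w'(x)^2\leq w(x)w''(x)$. To prove it I would use Kummer's equation
\[
 xw''+(b+x)w'+\tfrac12w=0
\]
to eliminate $w''$, and then study the ratio $\phi\colonequals -w'/w>0$. This ratio satisfies the Riccati equation
\[
 x\phi'=x\phi^2-(b+x)\phi+\tfrac12 ,
\]
and the goal becomes $2\phi^2\leq \phi'+\ldots$, that is, an algebraic inequality relating $\phi$ to $x$. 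Equivalently one can use the contiguous relation $w'=-\frac1{2b}{}_1F_1(\frac32;b+1;-x)$ and reduce to a Turán-type inequality for confluent hypergeometric functions. I expect establishing this inequality for all $x>0$ and all $k\geq3$ to be the main obstacle, because the factor $3$ is stronger than the log-convexity $w w''\geq w'^2$ that Cauchy–Schwarz gives for free from the Laplace representation.

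If global concavity fails or is hard to prove, I have a fallback. I would instead show directly that $F(x)\colonequals (w^{-2}-1)/(2c^2x)$ is nondecreasing, with limit $1$ at $\infty$ by the asymptotics above; note that its value at $0^+$ is $1/(c^2(k+2))\leq1$. Monotonicity of $F$ again reduces, via the Riccati equation for $\phi$, to a sign condition that can be checked by comparing $\phi$ with explicit sub- and supersolutions of that Riccati equation.

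A second fallback is probabilistic. Since $(1+2c^2x)^{-1/2}=\E e^{-xc^2Z^2}$ for a standard normal $Z$, the inequality compares two Laplace transforms. It would follow from Laplace-transform ordering of $T\sim\mathrm{Beta}(\frac12,b-\frac12)$ against $c^2Z^2$.
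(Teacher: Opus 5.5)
Your main plan cannot work, because $x\mapsto w(x)^{-2}$ is not concave. At $x=0$ you have $w'(0)=-\E T=-\frac{1}{k+2}$ and $w''(0)=\E T^2=\frac{3}{(k+2)(k+4)}$, so
\[
 (w^{-2})''(0)=6w'(0)^2-2w''(0)=\frac{6}{(k+2)^2}-\frac{6}{(k+2)(k+4)}>0 .
\]
At $x=0$ the Tur\'an-type inequality $3w'^2\le ww''$ reads $\E T^2\ge 3(\E T)^2$. This is an equality for a $\mathrm{Gamma}(\frac12)$ variable and strictly false for $T\sim\mathrm{Beta}(\frac12,b-\frac12)$. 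Your own endpoint computations already rule out concavity globally. A concave function lies below its tangent line $1+\frac{2x}{k+2}$, so $w(x)^{-2}/x$ would have limit at most $\frac{2}{k+2}$. But you showed $w(x)^{-2}\sim 2c^2x$ with $2c^2>\frac{2}{k+2}$. The curvature in fact goes the other way: one can show $w^{-2}$ is convex, for instance by comparing the exponentially tilted law of $T$ with a mean-matched $\mathrm{Gamma}(\frac12)$ law. Convexity together with $w(x)^{-2}/x\to 2c^2$ gives $w^{-2}\le 1+2c^2x$. That is the mechanism behind your first fallback, but the proposal gives no argument for it; the Riccati sub/supersolution idea is only named.

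Your second fallback is the paper's route, but the step that makes it work is missing. The paper proves the stronger fact that $T$ is stochastically dominated by $H=c^2Z^2\sim\mathrm{Gamma}(\frac12,\ \text{scale }2c^2)$. The identity $\sqrt2\,c\,\Gamma(b)/\Gamma(b-\frac12)=1$ that you computed says exactly that the densities of $T$ and $H$ have the same coefficient of $t^{-1/2}$ at $t=0$. Hence on $(0,1)$ their ratio is
\[
 f_T(t)/f_H(t)=(1-t)^{(k-1)/2}e^{t/(2c^2)} .
\]
This ratio equals $1$ at $t=0$. Its logarithmic derivative $\frac{1}{2c^2}-\frac{k-1}{2(1-t)}$ is strictly decreasing and tends to $-\infty$ as $t\uparrow 1$. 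It is positive at $t=0$, because $m_k^2<k$ by Jensen, so $\frac{1}{2c^2}>\frac{k}{2}$. Since also $f_T=0<f_H$ on $(1,\infty)$, the two densities cross exactly once, so $T\le_{\mathrm{st}}H$. Because $t\mapsto e^{-xt}$ is decreasing, $w(x)=\E e^{-xT}\ge \E e^{-xH}=(1+2c^2x)^{-1/2}$, which is precisely your reformulated inequality. Without this single-crossing argument (or an actual proof that $w^{-2}$ is convex), the proposal does not establish the lemma.
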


\begin{proof}
Let
$B$ have distribution $\operatorname{Beta}(1/2,(k+1)/2)$.  Kummer's integral formula gives
\begin{equation}\label{f1id}
 {}_1F_1\!\left(\frac12;\frac{k}{2}+1;-\frac{\delta^2}{2}\right)
 =\E e^{-\delta^2B/2}.
\end{equation}
Let $H$ have the gamma distribution with shape $\alpha=1/2$ and scale
$
 \theta_k=2\left(\frac{m_k}{k}\right)^2.
$
The leading constants at zero in the densities of $B$ and $H$ agree.
Indeed, by \eqref{eq:mk-def},
$
 \theta_k^{-1/2}
 =\frac{\Gamma(k/2+1)}{\Gamma((k+1)/2)}
$.
On $0<t<1$, the density ratio is
\[
 \frac{f_B(t)}{f_H(t)}
 =(1-t)^{(k-1)/2}e^{t/\vartheta_k}.
\]

Its logarithmic derivative,
\[
 \frac1{\theta_k}-\frac{k-1}{2(1-t)},
\]
is strictly decreasing, is positive at zero, and tends to $-\infty$ as
$t\uparrow1$.  Positivity at zero follows from
$m_k^2<k$, by \eqref{eq:mk-def} and Jensen's inequality, so $\theta_k<2/k$, i.e. $1/\theta_k > k/2$.  Thus the
two densities have
one crossing at some $0<t_0<1$: $f_B(t)\geq f_H(t)$ for $0<t<t_0$ and $f_B(t)\leq f_H(t)$ for $t_0<t<1$, and hence
$B$ is stochastically dominated by $H$.  So, using this property and the known formula for the moment generating function of $H$,
\[
{}_1F_1\!\left(\frac12;\frac{k}{2}+1;-\frac{\delta^2}{2}\right)
\stackrel{\eqref{f1id}}{=}
 \E e^{-\delta^2B/2}
 \geq\E e^{-\delta^2H/2}
 =\left(1+\frac{m_k^2\delta^2}{k^2}\right)^{-1/2},\qquad\forall\,\delta>0.
\]
Multiplying by $m_k\delta/k$ and using \eqref{eq:Y-def}
proves the first inequality.  The second is
$\sqrt{1+x^2}\leq1+x$ valid for any $x\geq0$.
\end{proof}

We now isolate the radial quadratic form.  The random variable $R^2/2$ has law \begin{equation}\label{eq:mu3}
\dd\nu_k(t)\colonequals\frac{t^{(k/2)-1}e^{-t}}{\Gamma(k/2)}\,\dd t.
\end{equation}

Let
\begin{equation}
\label{ljdef}
 \ell_j^{(k)}(t)
 \colonequals
 \left(\frac{j!\Gamma(k/2)}{\Gamma(k/2 + j)}\right)^{1/2}
 L_j^{(k/2)-1}(t),
 \qquad j\geq0,
\end{equation}
where $L_j^{(\alpha)}(t)\colonequals\frac{t^{-\alpha}e^t}{j!}\frac{d^j}{dt^j}(e^{-t}t^{j+\alpha})$, $\forall$ $t\in\R$.  Then $(\ell_j^{(k)})_{j\geq0}$ is an orthonormal basis of
$L^2(\nu_k)$.  The radial Ornstein--Uhlenbeck operator satisfies
\begin{equation}
\label{tel}
 T_\eta\ell_j^{(k)}(\norm{x}^2/2)
 =\eta^{2j}\ell_j^{(k)}(\norm{x}^2/2),
\end{equation}
since for a smooth function $h\colon\R\to\R$ with $t\colonequals\|x\|^2/2$, we have $\Delta h(t)-\langle x,\nabla h(t)\rangle = 2[h''(t)t + (k/2-t)h'(t)]$, and the generalized Laguerre polynomials with parameter $\alpha=k/2-1$ satisfy $t\ell_j''+(\alpha+1-t)\ell_j'+j\ell_j=0$, so $(\Delta-\langle x,\nabla\rangle)\ell_j(t)=-2j\ell_j(t)$, hence \eqref{tel} follows.  Finally, the orthonormality with respect to $\dd\nu_k(t)$ follows from known orthogonality and the identity $\int_{0}^{\infty}t^{\alpha}e^{-t}(L_j^{(\alpha)}(t))^2 \dd t=\Gamma(j+\alpha+1)/j!$.

Define the positive semidefinite matrix
$A_{\eta}^{(k)}=((A_{\eta}^{(k)})_{ij})_{i,j\geq1}$ by
\begin{equation}
\label{eq:A-def}
 (A_{\eta}^{(k)})_{ij}
 =\eta^{i+j}\int_0^\infty
 \frac{\ell_i^{(k)}(t)\ell_j^{(k)}(t)}
 {\mathcal Y_{\eta,k}(\sqrt{2t})}\,\dd\nu_k(t).
\end{equation}

Here $\norm{A_\eta^{(k)}}$ denotes the supremum of the operator norms of its
finite principal sections; Proposition \ref{prop:certificate} below shows that this
quantity is finite.

\begin{lemma}[Weighted radial analysis]
\label{lem:weighted-analysis}
For every measurable $m\colon(0,\infty)\to\R^k$ for which the right-hand side is
finite,
\begin{equation}
\label{eq:weighted-analysis}
  \sum_{j\geq1}\eta^{2j}
    \left\|\int_0^\infty m(\sqrt{2t})\ell_j^{(k)}(t)\,\dd\nu_k(t)\right\|^2
  \leq \norm{A_\eta^{(k)}}\int_0^\infty
     \mathcal Y_{\eta,k}(\sqrt{2t})\|m(\sqrt{2t})\|^2\,\dd\nu_k(t).
\end{equation}
\end{lemma}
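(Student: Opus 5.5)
This is a weighted Cauchy--Schwarz (duality) argument. Write $w(t)\colonequals\mathcal Y_{\eta,k}(\sqrt{2t})>0$, which is positive for $t>0$ by Lemma \ref{lem:kummer-reciprocal}. Set $g(t)\colonequals m(\sqrt{2t})$ and $h(t)\colonequals \sqrt{w(t)}\,g(t)\in L^2(\nu_k;\R^k)$, with $\norm{h}^2$ equal to the right-hand integral. Then
\[
c_j\colonequals \eta^{j}\int_0^\infty g\,\ell_j^{(k)}\dd\nu_k=\int_0^\infty h(t)\,\frac{\eta^j\ell_j^{(k)}(t)}{\sqrt{w(t)}}\dd\nu_k(t).
\]
We must show $\sum_{j\geq1}\norm{c_j}^2\leq\norm{A_\eta^{(k)}}\,\norm{h}^2$.

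Since both sides split over the $k$ coordinates of $m$, it suffices to treat scalar $h$. Fix $N$ and consider the operator $B_N\colon L^2(\nu_k)\to\R^N$ given by $(B_Nh)_j=\langle h,\phi_j\rangle$, where $\phi_j\colonequals\eta^j\ell_j^{(k)}/\sqrt{w}$ for $1\leq j\leq N$. Its Gram matrix is
\[
B_NB_N^*=(\langle\phi_i,\phi_j\rangle)_{i,j\leq N},
\]
and this is exactly the $N\times N$ principal section of $A_\eta^{(k)}$ in \eqref{eq:A-def}. Hence
\[
\sum_{j\leq N}c_j^2=\norm{B_Nh}^2\leq\norm{B_N}^2\norm{h}^2=\norm{B_NB_N^*}\,\norm{h}^2\leq\norm{A_\eta^{(k)}}\,\norm{h}^2,
\]
using the definition of $\norm{A_\eta^{(k)}}$ as the supremum over finite sections. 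Letting $N\to\infty$ (monotone convergence of partial sums) gives \eqref{eq:weighted-analysis}.

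An equivalent way to see it: for a unit vector $a\in\R^N$,
\[
\sum_j a_jc_j=\Big\langle h,\sum_j a_j\phi_j\Big\rangle\leq\norm{h}\,(a^{T}A_Na)^{1/2},
\]
and taking the supremum over $a$ gives the same bound.

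The main point to check is well-definedness: that $\phi_j\in L^2(\nu_k)$ and that each $c_j$ is an absolutely convergent integral. For the first, Lemma \ref{lem:kummer-reciprocal} gives $1/w(t)\leq 1+k/(m_k\delta)$ with $\delta\propto\sqrt t$. So $1/w$ has at worst a $t^{-1/2}$ singularity at $0$, which is integrable against $t^{k/2-1}\dd t$ for $k\geq3$, and grows at most polynomially at infinity. Thus each entry of $A_\eta^{(k)}$ is finite and $\phi_j\in L^2(\nu_k)$. For the second, $g\ell_j=h\cdot(\ell_j/\sqrt w)$ is integrable by Cauchy--Schwarz whenever $\norm{h}<\infty$, so the $c_j$ are well defined. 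If $\norm{A_\eta^{(k)}}=\infty$ the statement is vacuous; Proposition \ref{prop:certificate} guarantees finiteness.
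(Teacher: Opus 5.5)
Your proof is correct and is essentially the paper's argument: the paper likewise multiplies and divides by $\sqrt{\mathcal Y_{\eta,k}(\sqrt{2t})}$ and applies Cauchy--Schwarz against $\sum_j c_j\eta^j\ell_j^{(k)}$ for finitely supported unit $c$ (your ``equivalent way''). It then identifies the resulting quadratic form as $\sum_{i,j}c_ic_j(A_\eta^{(k)})_{ij}$ and treats $\R^k$-valued $m$ componentwise; your Gram-matrix phrasing via $\norm{B_N}^2=\norm{B_NB_N^*}$, and your check that $\ell_j^{(k)}/\sqrt{\mathcal Y_{\eta,k}}\in L^2(\nu_k)$ so each coefficient is well defined, are harmless additions that the paper leaves implicit.
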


\begin{proof}
It suffices to consider scalar $m$.  For every finitely supported sequence
$c=(c_j)_{j\geq1}$, Cauchy--Schwarz together with dividing and multiplying by $\sqrt{\mathcal Y_{\eta,k}(\sqrt{2t})}$ gives
\begin{align*}
 &\left|\sum_{j\geq1}c_j\eta^j
   \int_{0}^{\infty} m(\sqrt{2t})\ell_j^{(k)}(t)\,\dd\nu_k(t)\right|^2
   =\left|
   \int_{0}^{\infty} m(\sqrt{2t})\sum_{j\geq1}c_j\eta^j\ell_j^{(k)}(t)\,\dd\nu_k(t)\right|^2\\
 &\qquad
 \leq
   \int_{0}^{\infty} |m(\sqrt{2t})|^2\mathcal Y_{\eta,k}(\sqrt{2t})\dd\nu_k(t)
   \cdot\int_{0}^{\infty} \frac{\Big(\sum_{j\geq1}c_j\eta^j\ell_j(t)\Big)^2}{\mathcal Y_{\eta,k}(\sqrt{2t})}\,\dd\nu_k(t)\\
  &\qquad\stackrel{\eqref{eq:A-def}}{=}
 \left(\int_{0}^{\infty} \mathcal Y_{\eta,k}(\sqrt{2t})|m(\sqrt{2t})|^2\,\dd\nu_k(t)\right)
 \sum_{i,j\geq1}c_ic_j(A_\eta^{(k)})_{ij}\\
 &\qquad\leq \norm{A_\eta^{(k)}}\norm{c}_{\ell_2}^2
 \int_{0}^{\infty} Y_{\eta,k}(\sqrt{2t})|m(\sqrt{2t})|^2\,\dd\nu_k(t).
\end{align*}
Taking the supremum over $\norm{c}_{\ell^2}=1$ proves the scalar case of
\eqref{eq:weighted-analysis}.  Applying it componentwise proves the
vector-valued case.
\end{proof}

\begin{proposition}[Radial criterion]
\label{prop:radial-criterion}
If $\norm{A_\eta^{(k)}}<1$, then \eqref{eq:positive-centered} holds for every
mean-zero measurable map $f\colon\R^n\to\B^k$ and every $n\geq k\geq 3$.
\end{proposition}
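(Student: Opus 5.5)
We first treat the case $n=k$ and then explain how general $n\geq k$ reduces to it. For $n=k$ the plan is to combine Lemma \ref{lem:shell}, Lemma \ref{lem:Y-exact} and Lemma \ref{lem:weighted-analysis}. By Lemma \ref{lem:shell} and \eqref{eq:Y-integrated},
\[
 \Stab_\eta(f)-F^*(k,\eta)\leq \E\ip{m(R)}{m(S)}-\E\big[\mathcal Y_{\eta,k}(R)\norm{m(R)}^2\big].
\]
It therefore suffices to show $\E\ip{m(R)}{m(S)}\leq \E[\mathcal Y_{\eta,k}(R)\norm{m(R)}^2]$.

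To handle the first term, write $g(t)\colonequals m(\sqrt{2t})$, viewed as a function of $t=R^2/2$, whose law is $\nu_k$. The function $x\mapsto g(\norm{x}^2/2)$ is radial, so $\E\ip{m(R)}{m(S)}=\int \ip{G(x)}{T_\eta G(x)}\gamma_k(x)\dd x$ with $G(x)=g(\norm{x}^2/2)$. Expand $g=\sum_{j\geq0}\hat g_j\ell_j^{(k)}$ in $L^2(\nu_k;\R^k)$, where $\hat g_j\in\R^k$; this is legitimate since $g$ is bounded. By \eqref{tel},
\[
 \E\ip{m(R)}{m(S)}=\sum_{j\geq0}\eta^{2j}\norm{\hat g_j}^2 .
\]
The mean-zero hypothesis gives $\hat g_0=\int g\dd\nu_k=\int_{\R^k}f\dd\gamma_k=0$, because $\ell_0^{(k)}\equiv1$ and averaging over spheres preserves the Gaussian mean. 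Hence only the terms with $j\geq1$ survive. Lemma \ref{lem:weighted-analysis} then yields
\[
 \E\ip{m(R)}{m(S)}\leq \norm{A_\eta^{(k)}}\,\E\big[\mathcal Y_{\eta,k}(R)\norm{m(R)}^2\big].
\]
Since $\norm{A_\eta^{(k)}}<1$ and $\mathcal Y_{\eta,k}>0$ (by \eqref{f1id}, because the $_1F_1$ factor is an expectation of a positive quantity), this gives the required inequality. The right side is finite because $\mathcal Y_{\eta,k}\leq 1$, being a conditional expectation of $\lambda\in[0,1]$.

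For $n>k$ we reduce to the case $n=k$, and this is the main obstacle. Lemma \ref{lem:shell} is stated only for $f\colon\R^k\to\B^k$, whereas the proposition asserts the inequality for every $n\geq k$. The plan is to use the dimension reduction behind the informal Theorem \ref{thm4}. One option is the calculus-of-variations result \cite{HeilmanTarter2021}, or its analogue for vector-valued functions used in \cite{heilman25,hwang21}, which shows that a maximizer of $\Stab_\eta$ among mean-zero $\B^k$-valued $f$ on $\R^n$ depends on at most $k$ coordinates after a rotation. Alternatively, one can run the shell argument directly in $\R^n$, with spheres $S^{n-1}$ and the measure $\nu_n$. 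We take the first route.

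The maximum is attained by weak-$*$ compactness: $\Stab_\eta$ is a positive semidefinite quadratic form for $\eta>0$, hence weakly lower semicontinuous, and the mean-zero constraint is linear. The first-variation condition says a maximizer satisfies $f=T_\eta f/\norm{T_\eta f}$, possibly after adding a Lagrange multiplier vector to $T_\eta f$. The second variation, taken along translations in directions orthogonal to the $k$-dimensional span of the $\nabla$-image, shows that the maximizer is invariant in $n-k$ directions. A maximizer on $\R^n$ is therefore of the form $\tilde f(x_1,\ldots,x_k)$, and $\Stab_\eta$ on $\R^n$ of such a function equals its $\Stab_\eta$ on $\R^k$. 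Since the maximizer satisfies the $k$-dimensional bound, so does every mean-zero $f$ on $\R^n$.

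If citing the dimension reduction is not acceptable, the fallback is to redo Lemmas \ref{lem:shell}--\ref{lem:Y-exact} with $\R^n$ shells. There the $\B^k$-valued spherical mean $m$ on $S^{n-1}$ is compared with the first spherical harmonic projection, whose eigenvalue ratio $I_{n/2}/I_{n/2-1}$ is monotone in $n$. Checking the direction of that monotonicity is the delicate point of this alternative.
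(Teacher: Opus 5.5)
Your proposal is correct and follows the paper's proof. For $n=k$ the paper likewise combines Lemma~\ref{lem:shell}, \eqref{eq:Y-integrated}, the Laguerre expansion with vanishing constant coefficient, and Lemma~\ref{lem:weighted-analysis}; for $n>k$ it cites \cite[Theorem~6.1]{hwang21} for a mean-zero maximizer depending on at most $k$ coordinates, the same route you took.

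One small correction to your side sketch: a maximizer exists because $\Stab_\eta$ is weakly continuous on bounded sets, since $T_\eta$ is compact on $L_2(\gamma_n)$ for $0<\eta<1$. Weak lower semicontinuity, which you invoked, only helps for minimization.
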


\begin{proof}
First let $n=k$ and let $m$ be the spherical mean of $f$.  Since
$\int_{\R^k} f\,\dd\gamma_k=0$, the constant Laguerre coefficient of $m$ vanishes.
Define $m=m_f$ from \eqref{mdef}.  For any $j\geq1$, write
\[
  \widehat m_j
  \colonequals\int_0^\infty m(\sqrt{2t})\ell_j^{(k)}(t)\,\dd\nu_k(t).
\]
Recall that the modified Laguerre polynomials $(\ell_j^{(k)}(t))_{j\geq0}$ with $t=\|x\|^2/2$ form a complete orthonormal basis of radial functions on $\R^k$, since e.g. any rotation invariant polynomial of $x$ is a polynomial in $t$.  So, \eqref{tel} together with $\widehat m_0=0$ give
\[
  \E\ip{m(R)}{m(S)}
  =\sum_{j\geq1}\eta^{2j}\norm{\widehat m_j}^2.
\]
Combine Lemmas \ref{lem:shell}, \ref{lem:Y-exact} and \ref{lem:weighted-analysis} with the assumption $\|A_\eta\|<1$ to obtain
\begin{equation}\label{stfsineq}
  \Stab_\eta(f)-F^*(k,\eta)
  \leq\bigl(\norm{A_\eta}-1\bigr)
  \E\!\left[\mathcal Y_{\eta,k}(R)\norm{m(R)}^2\right]\leq0.
\end{equation}
For $n>k$, \cite[Theorem~6.1]{hwang21} says that the
maximum among mean-zero maps $\R^n\to\B^k$ is attained and that every
maximizer depends, after an orthogonal change of coordinates, on at most
$k$ coordinates.  The $k$-dimensional result therefore proves the claim
in every dimension $n>k$.
\end{proof}

\section{A uniform spectral certificate}
\label{sec:certificate}

We now prove $\norm{A_\eta^{(k)}}<1$ uniformly through $\eta=0.5843$ when $k=3$ (and through $\eta=3/5$ with $k>3$).  Instead of bounding $\norm{A_\eta^{(k)}}$ directly, we bound the operator norm of a slightly larger operator $V_\eta^{(k)}$ whose entries have more explicit (hence more tractable) formulas.

Lemma \ref{lem:kummer-reciprocal} with \eqref{eq:mk-def} and \eqref{eq:Y-def} give
\begin{equation}\label{yup}
 \frac1{\mathcal Y_{\eta,k}(\sqrt{2t})}
 \leq1+\beta_{\eta,k}t^{-1/2},\quad\forall\,t>0,
 \qquad
 \beta_{\eta,k}
 \colonequals\frac{k\sqrt{1-\eta^2}}{2\eta}
 \frac{\Gamma(k/2)}{\Gamma((k+1)/2)}.
\end{equation}
Define
\begin{equation}
\label{eq:V-operator}
 V_\eta^{(k)}
 \colonequals
T_\eta^{1/2}M_{1+\beta_{\eta,k}t^{-1/2}}T_\eta^{1/2}.
\end{equation}

By \eqref{yup} and \eqref{eq:Y-def}, we have the operator inequality $M_{1/\mathcal Y_{\eta,k}(\sqrt{2t})}
\leq M_{1+\beta_{\eta,k}t^{-1/2}}$.  From this we will conclude the operator inequality $A_\eta^{(k)}\preceq V_\eta^{(k)}$ on the nonconstant radial subspace.  Here $M_w$ denotes multiplication by $w$.  Indeed, for any $i,j\geq1$,
$$\langle \ell_i, T_\eta^{1/2}M_{\frac{1}{\mathcal Y_{\eta,k}(\sqrt{2t})}}T_{\eta}^{1/2}\ell_j\rangle
=\langle T_\eta^{1/2}\ell_i, M_{\frac{1}{\mathcal Y_{\eta,k}(\sqrt{2t})}}T_{\eta}^{1/2}\ell_j\rangle
=\int_{0}^{\infty}\eta^{i+j}\frac{\ell_i(t)\ell_j(t)}{\mathcal Y_{\eta,k}(\sqrt{2t})}\dd\nu_k(t)
\stackrel{\eqref{eq:A-def}}{=}(A_\eta)_{ij}.$$
So, if $f=\sum_{j\geq1}c_j\ell_j(t)$, then
\begin{equation}\label{psdineq}
\langle f, (V_\eta^{(k)} - A_\eta^{(k)})f\rangle
=\int_{0}^{\infty}\Big(1+\beta_{\eta,k}t^{-1/2} - \frac{1}{\mathcal Y_{\eta,k}(\sqrt{2t})}\Big)\Big|\sum_{j\geq1}\eta^{j}c_j\ell_j(t)\Big|^2\dd\nu_k(t)
\stackrel{\eqref{yup}\wedge\eqref{eq:Y-def}}{\geq}0.
\end{equation}

We next record the matrix of $V_\eta^{(k)}$.  Let $(z)_0\colonequals1$, $(z)_j\colonequals z(z+1)\cdots(z+j-1)$ for any integer $j\geq1$.  Let
\begin{equation}
\label{eq:sc-def}
 s_j^{(k)}
 \colonequals\left(\frac{j!\Gamma(k/2)}{\Gamma(k/2+j)}\right)^{1/2},
 \qquad
 c_j\colonequals\frac{(1/2)_j}{j!},
 \qquad
 d_j^{(k)}\colonequals\frac{((k-1)/2)_j}{j!}.
\end{equation}
The Laguerre connection formula \cite[Eq.~18.18.18]{Olver},
\[
 L_j^{(k/2-1)}(t)
 =\sum_{m=0}^jc_{j-m}L_m^{((k-3)/2)}(t),
\]
and orthogonality of $L_j^{((k-3)/2)}$ with respect to
$t^{(k-3)/2}e^{-t}\dd t$ on $(0,\infty)$ give
\begin{flalign*}
  \int_0^\infty
  \frac{\ell_i(t)\ell_j(t)}{\sqrt{t}}\,\dd\nu_k(t)
  &\stackrel{\eqref{eq:mu3}}{=}\frac{1}{\Gamma(k/2)}\int_{0}^{\infty}\!\!s_i^{(k)}s_j^{(k)}\sum_{m=0}^{i}\sum_{m'=0}^{j}c_{i-m}L_{m}^{(\frac{k-3}{2})}(t)c_{j-m'}L_{m'}^{(\frac{k-3}{2})}(t)t^{(k-3)/2}e^{-t}\dd t\\
  &=\frac{1}{\Gamma(k/2)}s_i^{(k)} s_j^{(k)}
    \sum_{m=0}^{\min(i,j)}c_{i-m}c_{j-m}\frac{\Gamma(m+(k-1)/2)}{m!}.
\end{flalign*}
\begin{equation}
\label{eq:V-matrix}
 \begin{aligned}
 (V_\eta^{(k)})_{ij}
 &=\langle \ell_i, T_\eta^{1/2}M_{1+\beta_{\eta,k}t^{-1/2}}T_{\eta}^{1/2}\ell_j\rangle
=\langle T_\eta^{1/2}\ell_i, M_{1+\beta_{\eta,k}t^{-1/2}}T_{\eta}^{1/2}\ell_j\rangle\\
 &\stackrel{\eqref{yup}}{=}\eta^{2i}\delta_{ij}
 +\frac{k}{k-1}\sqrt{1-\eta^2}\,
 \eta^{i+j-1}s_i^{(k)}s_j^{(k)}\cdot
 \sum_{m=0}^{\min(i,j)}
 c_{i-m}c_{j-m}d_m^{(k)},
 \qquad\forall\, i,j\geq0.
 \end{aligned}
\end{equation}

\begin{proposition}[Certified radial bound, $k=3$]
\label{prop:certificate}
For every $0<\eta\leq0.5843$, on the subspace $\mathrm{span}(1)^{\perp}$ perpendicular to constant functions, we have the operator norm bounds
\begin{equation}
\label{eq:certificate}
  \norm{A_\eta^{(3)}\mid_{\mathrm{span}(1)^{\perp}}}\leq\norm{V_\eta^{(3)}\mid_{\mathrm{span}(1)^{\perp}}}<0.987<1.
\end{equation}
\end{proposition}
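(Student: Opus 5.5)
The first inequality in \eqref{eq:certificate} comes directly from \eqref{psdineq}. On $\mathrm{span}(1)^{\perp}$ both $A_\eta^{(3)}$ and $V_\eta^{(3)}$ are positive semidefinite, and $A_\eta^{(3)}\preceq V_\eta^{(3)}$ there. Hence for every finitely supported $c$,
\[
\langle c,A_\eta^{(3)}c\rangle\leq\langle c,V_\eta^{(3)}c\rangle\leq\norm{V_\eta^{(3)}}\norm{c}^2 .
\]
Taking suprema over principal sections gives $\norm{A_\eta^{(3)}}\leq\norm{V_\eta^{(3)}}$. The real content is the bound $\norm{V_\eta^{(3)}|_{\mathrm{span}(1)^\perp}}<0.987$ uniformly in $0<\eta\leq0.5843$. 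For this I would work with the explicit matrix \eqref{eq:V-matrix}, restricted to $i,j\geq1$. With $k=3$ we have $d_m^{(3)}=1$, $s_j^{(3)}=(j!\Gamma(3/2)/\Gamma(3/2+j))^{1/2}$ and $c_j=(1/2)_j/j!$, so every entry is an explicit nonnegative number times a power of $\eta$.

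\textbf{Step 1 (reduction in $\eta$).} Write $V_\eta=D_\eta+\tfrac32\sqrt{1-\eta^2}\,\eta^{-1}E_\eta$, where $D_\eta=\mathrm{diag}(\eta^{2i})$ and $(E_\eta)_{ij}=\eta^{i+j}s_is_j\sum_m c_{i-m}c_{j-m}$. The entries of $E_\eta$ are nonnegative. For $i,j\geq1$ each entry of $\eta^{-1}E_\eta$ carries a factor $\eta^{i+j-1}$ with $i+j-1\geq1$, and the prefactor is multiplied by $\sqrt{1-\eta^2}$. One cannot simply claim every entry of $V_\eta$ is increasing in $\eta$: the diagonal entry $(1,1)$ equals $\eta^2+\tfrac32\sqrt{1-\eta^2}\,\eta\,s_1^2(c_1^2+c_0^2)$, which is increasing on $(0,0.5843]$ because $\eta\sqrt{1-\eta^2}$ increases for $\eta<1/\sqrt2$. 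Every other entry has the form $\eta^{p}\sqrt{1-\eta^2}$ with $p\geq1$ (or $\eta^{2i}$), and these are also increasing on $(0,1/\sqrt2)$. So all entries are nonnegative and nondecreasing in $\eta$. By Perron--Frobenius monotonicity of the spectral radius for entrywise nonnegative symmetric matrices, $\norm{V_\eta}\leq\norm{V_{0.5843}}$. This reduces the claim to the single value $\eta=0.5843$, or $\eta=3/5$ in the companion case $k>3$.

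\textbf{Step 2 (finite section plus tail).} Split the index set into $\{1,\dots,N\}$ with $N=20$ and its complement. Write $V=\begin{pmatrix}P&Q\\Q^T&R\end{pmatrix}$, so that
\[
\norm{V}\leq\left\|\begin{pmatrix}\norm{P}&\norm{Q}\\ \norm{Q}&\norm{R}\end{pmatrix}\right\| .
\]
Compute $\norm{P}$ numerically for the explicit $20\times20$ nonnegative matrix, in rigorous floating point or rational arithmetic, giving a certified upper bound for its largest eigenvalue. Bound $\norm{Q}$ and $\norm{R}$ by Schur's test or by Hilbert--Schmidt norms using analytic decay estimates. The relevant facts are $c_j\leq(\pi j)^{-1/2}$ and $s_j^2=j!\,\Gamma(3/2)/\Gamma(j+3/2)\leq C j^{-1/2}$. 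From these, $\sum_m c_{i-m}c_{j-m}\leq C'\log(2+\min(i,j))$ or $C'$. Consequently the off-diagonal entries are at most $C''\eta^{i+j-1}(ij)^{-1/4}\log(\cdot)$, which decays like $\eta^{i+j}$ with $\eta^{20}\approx 2\times10^{-5}$. The tail contributions are therefore negligible, well below the margin $0.987-\norm{P}$.

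\textbf{Main obstacle.} The hard step is the certification of $\norm{P}$ itself. The margin is presumably small, since the true value is near $0.98$. The eigenvalue bound must therefore be rigorous, for instance by exhibiting a positive vector $v$ with $Pv\leq\lambda v$ entrywise for some $\lambda<0.987-\epsilon_{\rm tail}$ (a Collatz--Wielandt certificate, valid because $P\geq0$), with all arithmetic done in exact rationals or interval arithmetic. A secondary issue is making the tail constants explicit enough: the sum $\sum_m c_{i-m}c_{j-m}$ needs a clean uniform bound, and I would simply use $\sum_m c_{i-m}c_{j-m}\leq(\sum c_a^2)^{1/2}(\sum c_b^2)^{1/2}$ truncated appropriately, or the crude bound $c_a\leq1$, giving a factor $\min(i,j)+1$. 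That loses only a polynomial factor against the geometric decay $\eta^{i+j}$.
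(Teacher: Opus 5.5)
Your proposal is correct and follows the paper's proof. In both, nonnegativity and monotonicity in $\eta$ of the entries of \eqref{eq:V-matrix} reduce to $\eta_0=0.5843$, and the leading $20\times20$ block is certified by a Collatz--Wielandt positive vector in exact rational arithmetic (the paper gets $\norm{V_{1:20}}<0.986731$). The only difference is the tail: the paper uses positive semidefiniteness of $V_{\eta_0}^{(3)}$ and Lemma~\ref{lem:block-tail}, so the off-diagonal coupling is absorbed via $|\ip{x}{Ey}|^2\leq\ip{x}{Bx}\ip{y}{Cy}$ and only the trace of the block $j\geq21$ needs bounding (about $1.15\times10^{-8}$ using $s_j,c_j\leq1$), whereas your $2\times2$ norm-matrix bound needs a separate Schur or Hilbert--Schmidt estimate on $Q$, which is slightly more work but still fits easily inside the margin.
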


\begin{proof}
Set $\eta_0\colonequals0.5843$.  Every entry of \eqref{eq:V-matrix} is nonnegative.
Moreover, $\eta^{2i}$ and
$\eta^{i+j-1}\sqrt{1-\eta^2}$ are increasing for
$0<\eta\leq\eta_0<1/\sqrt2$: indeed, if $p=i+j-1\geq1$, then
$\eta^p\sqrt{1-\eta^2}$ is increasing for
$\eta<\sqrt{p/(p+1)}$.  The norm of a nonnegative self-adjoint
matrix can be tested on nonnegative vectors.  It is therefore enough to upper
bound $\|V_{\eta_0}^{(3)}\|$.

Let $V_{1:20}$ be the leading $20\times20$ block of $V_{\eta_0}^{(3)}$.  We will explicitly describe an approximation $v\in\R^{20}$ to the largest eigenvector of $V_{1:20}$.  For the positive vector
$v$, normalized by $v_1=1$, the following terminating decimals describing $v$ are
interpreted as exact rational numbers:
\[
\begin{array}{rrrrr}
1.000000000000,&0.256858556823,&0.097396033098,&0.043379992735,&0.020897660126,\\
0.010513466380,&0.005433117499,&0.002859305346,
&0.001524852414,&0.000821489732,\\
0.000446145076,&0.000243896114,
&0.000134063610,&0.000074033654,&0.000041046277,\\
0.000022835796,
&0.000012742970,&0.000007129915,&0.000003998799,&0.000002247496,
\end{array}
\]
Directed rational interval arithmetic then gives the following upper bounds on the vector $(V_{1:20}v)_i/v_i$ where $i=1,\ldots,20$:
\[
\begin{array}{ccccc}
0.986730842649 & 0.986730842648 & 0.986730842645 & 0.986730842643 & 0.986730842659 \\
0.986730842616 & 0.986730842679 & 0.986730842649 & 0.986730842823 & 0.986730842081 \\
0.986730842495 & 0.986730841950 & 0.986730841227 & 0.986730846599 & 0.986730840460 \\
0.986730830803 & 0.986730849170 & 0.986730809040 & 0.986730872693 & 0.986730716257
\end{array}
\]
Each displayed entry is rounded upward.  To make the check exact, note that $s_j^2$ and $c_j$ are rational numbers by \eqref{eq:sc-def}.  Thus, at $\eta_0=5843/10000$, every entry of $V_{1:20}$ is algebraic with
only square roots of rational numbers.  Enclosing every square root between
consecutive rationals with denominator $10^{60}$ and carrying out the
remaining operations over $\mathbb Q$ gives the table.  In particular,
\begin{equation}\label{v2bd}
  \max_{1\leq i\leq20}\frac{(V_{1:20}v)_i}{v_i}<0.986731.
\end{equation}
The accompanying file \path{certify_quantum_cut.py} reproduces
these bounds by exact rational arithmetic using only the Python standard
library.  The Collatz--Wielandt upper bound applied to \eqref{v2bd} then gives
\begin{equation}
\label{eq:block-bound}
  \norm{V_{1:20}}<0.986731.
\end{equation}

It remains to control the infinite tail of $V_{\eta_0}^{(3)}$.  Since $s_j\leq1$ and
$0<c_j\leq1$, the diagonal terms satisfy, with $q=\eta_0^2$ and
$\kappa=3\sqrt{1-\eta_0^2}/(2\eta_0)$,
\[
  (V_{\eta_0}^{(3)})_{jj}
  \leq q^j\bigl(1+\kappa(j+1)\bigr),\qquad\forall\,j\geq1.
\]
Therefore the trace of the principal block indexed by $j\geq21$ is at most
\begin{align}
\label{eq:tail}
\frac{q^{21}}{1-q}
    +\kappa\frac{q^{21}(22-21q)}{(1-q)^2}
   <1.150\times10^{-8}\equalscolon\tau.
\end{align}

Recall that the proof of \eqref{psdineq} shows that $V_{\eta_0}^{(3)}$ is symmetric positive semidefinite.  So, we finally combine \eqref{eq:block-bound} and \eqref{eq:tail} with Lemma \ref{lem:block-tail} to get
\[
  \norm{V_{\eta_0}^{(3)}}<.986731+\tau<0.986732<0.987.
\]
This proves \eqref{eq:certificate} for all $0<\eta\leq\eta_0$.
\end{proof}

\begin{lemma}[Positive block tail bound]
\label{lem:block-tail}
Let
$M=\left(\begin{smallmatrix}B&E\\E^*&C\end{smallmatrix}\right)$
be a real symmetric positive semidefinite (finite or infinite) matrix with
$\norm{B}\leq\lambda$ and $\operatorname{tr}C\leq\tau$.  Then
\begin{equation}
\label{eq:block-tail}
 \norm{M}\leq
 \lambda+\tau.
\end{equation}
\end{lemma}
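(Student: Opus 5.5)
Let $x=(u,w)$ with $u$ in the $B$-block coordinates and $w$ in the $C$-block coordinates. Since $M$ is positive semidefinite, $\norm{M}=\sup_{\norm{x}=1}\ip{x}{Mx}$. The plan is to expand the quadratic form, control the cross term using positivity, and then reduce to a scalar optimization.

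First, expand
\[
\ip{x}{Mx}=\ip{u}{Bu}+2\ip{u}{Ew}+\ip{w}{Cw}.
\]
Because $M\succeq0$, we have $B\succeq0$ and $C\succeq0$, and the cross term admits the standard Schur-type bound
\[
|\ip{u}{Ew}|\leq\ip{u}{Bu}^{1/2}\ip{w}{Cw}^{1/2}.
\]
To prove it, apply Cauchy--Schwarz for the semi-inner product $\ip{\cdot}{M\cdot}$ to the vectors $(u,0)$ and $(0,w)$. This gives
\[
\ip{x}{Mx}\leq\bigl(\ip{u}{Bu}^{1/2}+\ip{w}{Cw}^{1/2}\bigr)^2.
\]
Next, bound each piece separately. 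We have $\ip{u}{Bu}\leq\lambda\norm{u}^2$. Since $C\succeq0$, its operator norm is at most its trace, so $\ip{w}{Cw}\leq\norm{C}\norm{w}^2\leq\tau\norm{w}^2$. Writing $a=\norm{u}$ and $b=\norm{w}$ with $a^2+b^2=1$, we obtain
\[
\ip{x}{Mx}\leq\bigl(\sqrt{\lambda}\,a+\sqrt{\tau}\,b\bigr)^2\leq(\lambda+\tau)(a^2+b^2)=\lambda+\tau,
\]
where the last step is Cauchy--Schwarz in $\R^2$ applied to $(\sqrt\lambda,\sqrt\tau)$ and $(a,b)$. Taking the supremum over unit $x$ gives \eqref{eq:block-tail}.

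The main care point is the infinite-dimensional setting. I would run the argument above on finite principal sections of $M$. Each section is positive semidefinite, its $B$-part has norm at most $\lambda$, and its $C$-part has trace at most $\tau$, so each section has norm at most $\lambda+\tau$. Since $\norm{M}$ is defined in the paper as the supremum over finite sections, the bound passes to $M$. This also avoids any domain issues in the Cauchy--Schwarz step.
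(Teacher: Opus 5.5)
Your proof is correct and follows essentially the same route as the paper: bound the cross term by $\ip{u}{Bu}^{1/2}\ip{w}{Cw}^{1/2}$, then use $\norm{C}\leq\operatorname{tr}C\leq\tau$ and $(\sqrt\lambda\,a+\sqrt\tau\,b)^2\leq\lambda+\tau$. The paper gets the cross-term bound by testing positivity on $(x,ry)$ with an optimized scalar $r$, which is just the proof of the semi-inner-product Cauchy--Schwarz inequality you cite. Your reduction to finite principal sections is extra care that matches how the paper defines the norm of its infinite matrices.
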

\begin{proof}
Positive semidefiniteness with $r\colonequals-\langle x, Ey\rangle/\langle y, Cy\rangle$ gives
\[
0\leq\binom{x}{ry}
\begin{pmatrix}B&E\\E^*&C\end{pmatrix}
\binom{x}{ry}
=\langle x,Bx\rangle + 2r\langle x, Ey\rangle+r^2\langle y,Cy\rangle
=\langle x, Bx\rangle-\langle x, Ey\rangle^2/\langle y,Cy\rangle.
\]
That is, 
$|\langle x, Ey\rangle|^2\leq\ip{x}{Bx}\ip{y}{Cy}$.  (If it occurs that $\langle y,Cy\rangle=0$, then choosing $r\to\pm\infty$ shows that $\langle x,Ey\rangle=0$ as well.)  Then if $\|\binom{x}{y}\|=1$, using $\|C\|\leq\mathrm{tr}C\leq\tau$, we get
$$\binom{x}{y}\begin{pmatrix} B & E \\ E^*& C\end{pmatrix}\binom{x}{y}\leq \lambda\|x\|^2 + 2\sqrt{\lambda\tau}\|x\|\|y\|
+\tau\|y\|^2
=(\sqrt{\lambda}\|x\|+\sqrt{\tau}\|y\|)^2
\leq(\lambda+\tau)\Big\|\binom{x}{y}\Big\|^2.$$
Since $\|\binom{x}{y}\|=1$, this concludes the bound.
\end{proof}

We now treat the case $k>3$ by certifying two dimension-four operators.  We define the following slight modification of \eqref{eq:V-operator}. 
\begin{equation}
\label{eq:Z-def}
 Z_\eta
 \colonequals
 \eta^2 \cdot T_{\eta}^{1/2}M_{1+\frac{15}{16}\beta_{\eta,4}t^{-1/2}}T_{\eta}^{1/2}.
\end{equation}

\begin{proposition}[$k=4$ Bounds]
\label{prop:k4-certificates}
For $0<\eta\leq3/5$, on the subspace perpendicular to constant functions, we have the operator norm bounds
\begin{equation}
\label{eq:k4-certificates}
 \norm{V_\eta^{(4)}\mid_{\mathrm{span}(1)^\perp}}<0.962,
 \qquad
 \norm{Z_\eta}<0.989.
\end{equation}
\end{proposition}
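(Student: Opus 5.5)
The plan is to mirror the proof of Proposition \ref{prop:certificate}, treating the two operators $V_\eta^{(4)}$ and $Z_\eta$ separately but with the same pipeline: monotonicity in $\eta$, a finite-block certificate, and a tail bound via Lemma \ref{lem:block-tail}.

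\emph{Step 1: reduce to $\eta_0=3/5$.}
For $V_\eta^{(4)}$, formula \eqref{eq:V-matrix} with $k=4$ has nonnegative entries of the form $\eta^{2i}\delta_{ij}$ plus a positive constant times $\eta^{i+j-1}\sqrt{1-\eta^2}$. As in the $k=3$ proof, each entry is increasing on $0<\eta\leq 3/5<1/\sqrt2$. Since the norm of a nonnegative self-adjoint matrix is attained on nonnegative vectors, it suffices to bound $\norm{V_{3/5}^{(4)}}$.

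For $Z_\eta$, I first write its matrix explicitly. Since $T_\eta^{1/2}\ell_j=\eta^j\ell_j$, we have
\[
(Z_\eta)_{ij}=\eta^{2}\Bigl(\eta^{2i}\delta_{ij}+\tfrac{15}{16}\cdot\tfrac{4}{3}\sqrt{1-\eta^2}\,\eta^{i+j-1}s_i^{(4)}s_j^{(4)}\textstyle\sum_{m}c_{i-m}c_{j-m}d_m^{(4)}\Bigr),
\]
using the same Laguerre connection computation with $k=4$. I have not yet fixed whether $Z_\eta$ acts on all indices $i,j\geq0$ or only on $\mathrm{span}(1)^\perp$. I will take the index range dictated by its later use; presumably it handles $k>3$ via a recursion in $k$, with the extra factor $\eta^2$ coming from a degree shift. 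The extra factor only raises the power of $\eta$ in each entry, so the monotonicity argument still applies, with threshold $\sqrt{p/(p+1)}\geq1/\sqrt2>3/5$. Again it suffices to take $\eta_0=3/5$.

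\emph{Step 2: finite block certificate.}
At $\eta_0=3/5$ we have $\sqrt{1-\eta_0^2}=4/5$, which is rational. The quantities $(s_j^{(4)})^2=j!/(j+1)!=1/(j+1)$, $c_j$, and $d_j^{(4)}=(3/2)_j/j!$ are all rational, so every entry is rational up to the factors $s_is_j=((i+1)(j+1))^{-1/2}$. I conjugate by the diagonal matrix $D=\mathrm{diag}(s_j)$, which is a similarity, so the Collatz--Wielandt test applies to $D^{-1}VD$. This matrix has entries $\eta^{2i}\delta_{ij}+(\text{rational})\,s_j^2(\ldots)$, all exactly rational, so no square-root enclosures are needed.

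I then take the leading $N\times N$ block, with $N=20$ or a little larger. I compute a numerical Perron vector, round it to a positive rational vector $v$, and verify $\max_i (Bv)_i/v_i$ exactly over $\mathbb Q$. The targets are a value below about $0.961$ for $V^{(4)}$ and below about $0.988$ for $Z$. Collatz--Wielandt then bounds the block norm.

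\emph{Step 3: tail.}
Using $s_j^2\leq1$, $c_j\leq1$, and $d_m^{(4)}\leq (m+1)^{1/2}$ (or the crude bound $\sum_m c_{j-m}^2d_m\leq C(j+1)$), I get diagonal bounds of the form $q^j(1+\kappa'(j+1))$ with $q=9/25$. For $Z$ there is an additional factor $9/25$. The tail trace beyond $N$ is then a geometric-type series of size about $q^{N}$, roughly $10^{-9}$. Lemma \ref{lem:block-tail} applies because both $V^{(4)}$ and $Z$ are positive semidefinite, being of the form $T^{1/2}M_wT^{1/2}$ with $w>0$. Adding the tail to the block bound gives \eqref{eq:k4-certificates}.

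\emph{Main obstacle.}
I expect the difficulty to lie in $Z_\eta$, whose claimed bound $0.989$ leaves little margin. One issue is whether the Perron ratio at $\eta=3/5$ actually stays below $0.989$ with a $20\times20$ block. The other is that the largest eigenvector may decay more slowly, because $q=0.36$ is close to $0.34$ and $d_m^{(4)}$ grows, so a larger $N$ might be needed. If the margin is too thin, I would enlarge $N$ to $30$–$40$ and refine $v$ using a few steps of power iteration before rationalizing.
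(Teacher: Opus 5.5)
Your plan is essentially the paper's proof. Entrywise monotonicity reduces to $\eta=3/5$, a Collatz--Wielandt certificate handles a leading block ($V_{1:20}$ and $Z_{0:20}$ in the paper), and a trace bound on the tail is absorbed via Lemma~\ref{lem:block-tail}. Your conjugation by $D=\mathrm{diag}(s_j)$ to make all entries rational is a harmless variant of the paper's rational enclosure of $\sqrt{(i+1)(j+1)}$; it amounts to testing the vector $Dv$ against $V$ itself.

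A few details need fixing.

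First, $Z_\eta$ must be certified on the full index range $i,j\geq0$. The statement concerns $\norm{Z_\eta}$ without restriction, and Proposition~\ref{prop:k4-reduction} applies it to the components $f_j$, $j\geq1$, which need not have mean zero. The $(0,0)$ entry $(3/5)^2+3/5=0.96$ is in fact what drives the bound, which is why the paper certifies the $21\times21$ block $Z_{0:20}$.

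Second, $d_m^{(4)}\leq(m+1)^{1/2}$ is false: $d_1^{(4)}=3/2>\sqrt2$, and $d_m^{(4)}\sim2\sqrt{m/\pi}$. Use instead $d_m^{(4)}=\prod_{i=1}^m(1+\frac1{2i})\leq m+1\leq2m+1$, which gives $\sum_m c_{j-m}^2d_m^{(4)}\leq(j+1)^2$. Keeping $s_j^2=1/(j+1)$ then yields the diagonal bounds $q^j(1+\frac{16}{9}(j+1))$ for $V$ and $q^{j+1}+\frac35q^j(j+1)$ for $Z$. The resulting tail traces are about $3\times10^{-8}$ and $10^{-8}$. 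These are larger than your $10^{-9}$ estimate because of the polynomial factor, but still negligible.

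Finally, your target ``below about $0.988$'' for $Z$ is unattainable, since the top eigenvalue of $Z_{0:20}$ is about $0.98864$. Only $<0.989$ is needed, though. The paper's row ratios $0.960805\ldots$ and $0.988644\ldots$ show that $20$ indices (respectively $21$ for $Z$) already suffice, so no enlargement of $N$ is required.
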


\begin{proof}
All entries of both matrices are nonnegative and increasing for
$0<\eta\leq3/5<1/\sqrt2$.  It suffices to take $\eta=\eta_0=3/5$.  At $k=4$, by \eqref{eq:sc-def} we have
\begin{equation}\label{sjdj4}
 s_j^{(4)}=\frac1{\sqrt{j+1}},
 \qquad
 d_j^{(4)}=\frac{(3/2)_j}{j!}.
\end{equation}
Consequently, for any $i,j\geq1$, using $k\sqrt{1-\eta^2}/(k-1)=16/15$,
\begin{equation}
\label{eq:V4-endpoint}
 (V_{3/5}^{(4)})_{ij}
 \stackrel{\eqref{eq:V-matrix}}{=}(3/5)^{2i}\delta_{ij}
 +\frac{16}{15}
 \frac{(3/5)^{i+j-1}}{\sqrt{(i+1)(j+1)}}\\
 \cdot
 \sum_{m=0}^{\min(i,j)}
 c_{i-m}c_{j-m}d_m^{(4)},
\end{equation}
whereas, for $i,j\geq0$,
\begin{equation}
\label{eq:Z-endpoint}
 (Z_{3/5})_{ij}
 \stackrel{\eqref{eq:Z-def}}{=}(3/5)^{2i+2}\delta_{ij}
 +\frac{(3/5)^{i+j+1}}{\sqrt{(i+1)(j+1)}}\\
\cdot
 \sum_{m=0}^{\min(i,j)}
 c_{i-m}c_{j-m}d_m^{(4)}.
\end{equation}

Let $V_{1:20}$ be the leading $20\times20$ block of $V_{3/5}^{(4)}$ and let $Z_{0:20}$ be the leading $21\times 21$ block of $Z_{3/5}$.
As in Proposition \ref{prop:certificate}, we will explicitly describe approximations to the largest eigenvectors of $V_{1:20}$ and $Z_{0:20}$, respectively.  For $V_{1:20}$, we use the vector
\[
\begin{array}{rrrrr}
1.000000000000,&0.235502283787,&0.084412886242,&0.036251000101,
&0.017062434237,\\
0.008462067457,&0.004337666976,&0.002274532283,
&0.001212712406,&0.000654921483,\\
0.000357321574,&0.000196590782,
&0.000108918778,&0.000060704275,&0.000034005732,\\
0.000019134287,&
0.000010808471,&0.000006126515,&0.000003483353,&0.000001985998.
\end{array}
\]
As in Proposition \ref{prop:certificate}, for the above vector $v\in\R^{20}$, we computationally upper bound $\max_{i=1,\ldots,20}(V_{1:20}v)_i / v_i$ and obtain an upper bound of $.961$ in this case.  Similarly, for $Z_{0:20}$, we use the vector
\[
\begin{array}{rrrrr}
1.000000000000,&0.195759686570,&0.064856944384,&0.026690658790,
&0.012232805093,\\
0.005957417255,&0.003013602699,&0.001564396228,
&0.000827507987,&0.000444045497,\\
0.000240996163,&0.000132006969,
&0.000072862879,&0.000040478067,&0.000022611648,\\
0.000012691740,
&0.000007153611,&0.000004046971,&0.000002296977,&0.000001307540,\\
0.000000746272.
\end{array}
\]
The entries of \eqref{eq:V4-endpoint} and \eqref{eq:Z-endpoint} involve
only rational numbers and square roots of integers.  Enclosing each square
root by rationals and using the Collatz-Wielandt upper bound,
\begin{equation}
\label{eq:k4-blocks}
 \norm{V_{1:20}}<0.961,
 \qquad
 \norm{Z_{0:20}}<0.98865.
\end{equation}
The largest directed upper bounds for the row ratios are
$0.960805417209$ and $0.988644133222$, respectively.  The bound \eqref{eq:k4-blocks} is verified by \path{certify_rank_k_cut.py}.

We now control the tails of the matrices. Since $c_m\leq1$ by \eqref{eq:sc-def} and
$d_m^{(4)}\leq2m+1$ by \eqref{sjdj4}
\[
 \sum_{m=0}^j c_{j-m}^2d_m^{(4)}
 \leq(j+1)^2.
\]
With $q\colonequals \eta_0^2=(3/5)^2$, the respective tail traces, beginning at $j=21$, are at
most
\begin{align}
\label{eq:k4-tails}
 \tau_V
 &\colonequals
 \frac{q^{21}}{1-q}
 +\frac{16}{9}
 \frac{q^{21}(22-21q)}{(1-q)^2}
 <3.0913\times10^{-8},\\
 \tau_Z
 &\colonequals
 q\frac{q^{21}}{1-q}
 +(3/5)\frac{q^{21}(22-21q)}{(1-q)^2}
 <1.0450\times10^{-8}.
\notag
\end{align}
Combining \eqref{eq:k4-blocks}, \eqref{eq:k4-tails}, and Lemma
\ref{lem:block-tail} gives the two inequalities in
\eqref{eq:k4-certificates}.  
\end{proof}

It remains to show why dimension $k=4$ controls every larger target
dimension $k>4$.

\begin{proposition}[Dimension-four reduction]
\label{prop:k4-reduction}
For any $k\geq5$ and $0<\eta\leq3/5$,
\begin{equation}
\label{eq:k4-reduction}
 \norm{V_\eta^{(k)}\mid_{\mathrm{span}(1)^\perp}}
 \leq
 \max\Big(
 \norm{V_\eta^{(4)}\mid_{\mathrm{span}(1)^\perp}},
 \norm{Z_\eta}
 \Big).
\end{equation}
\end{proposition}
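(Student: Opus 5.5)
The plan is to use the Perron--Frobenius structure already exploited in Propositions \ref{prop:certificate} and \ref{prop:k4-certificates}. All three matrices $V_\eta^{(k)}$, $V_\eta^{(4)}$ and $Z_\eta$ have nonnegative entries, so their norms can be tested on nonnegative vectors, and the norm is monotone under entrywise domination. The aim is therefore to dominate $V_\eta^{(k)}$ on $\mathrm{span}(1)^\perp$, possibly after a diagonal similarity, by a nonnegative matrix whose norm is visibly at most $\max(\norm{V_\eta^{(4)}|_{\mathrm{span}(1)^\perp}},\norm{Z_\eta})$. The natural candidate is a convex combination
\[
\theta_k\,V_\eta^{(4)}\oplus0+(1-\theta_k)\,Z_\eta'
\]
of the two certified operators, where $Z_\eta'$ is $Z_\eta$ re-indexed by $j\mapsto j+1$ so that both act on the indices $j\geq1$.

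The first step is to rewrite the entries \eqref{eq:V-matrix} for general $k$ in the $k=4$ normalization. The diagonal parts $\eta^{2i}\delta_{ij}$ do not depend on $k$. For the off-diagonal part I would study the quantities
\[
\frac{k}{k-1}\sqrt{1-\eta^2}\,s_i^{(k)}s_j^{(k)}\,d_m^{(k)}\qquad\text{and}\qquad \beta_{\eta,k}.
\]
Here $s_j^{(k)}$ decreases in $k$, while $d_m^{(k)}$ and $\beta_{\eta,k}$ increase in $k$. The product $s_i^{(k)}s_j^{(k)}d_m^{(k)}$ with $m\leq\min(i,j)$ is a ratio of Pochhammer symbols, namely
\[
\frac{(i!j!)^{1/2}\,((k-1)/2)_m}{\bigl((k/2)_i\,(k/2)_j\bigr)^{1/2}\,m!}.
\]
I expect its $k$-dependence, for $k\geq5$, to be bounded by its value at $k=4$ times a factor that splits between the index-$0$-shifted structure of $Z_\eta$ (the extra $\eta^2$ and the shift of $\eta^{i+j-1}$ to $\eta^{i+j+1}$) and the unshifted structure of $V^{(4)}_\eta$. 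The key identity to use is
\[
\frac{(k-1)/2+m}{k/2+j}\leq 1,
\]
which, applied term by term, trades one factor of the $k$-Pochhammer ratio for an index shift.

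The second step is to verify the resulting entrywise inequality for all $k\geq5$. The constant $\tfrac{15}{16}$ in \eqref{eq:Z-def} should be exactly what the case $k=5$ demands. I would then show that the relevant Gamma-ratio coefficient, something like $\tfrac{k}{k-1}\,\Gamma(k/2)/\Gamma((k+1)/2)$ times the normalizations of the $s^{(k)}$'s, is monotone in $k$, so that $k=5$ is the worst case. This monotonicity should follow from log-convexity of $\Gamma$ (Gautschi/Wendel-type inequalities). Once the domination holds, convexity of the norm gives
\[
\norm{V_\eta^{(k)}}\leq\theta_k\norm{V^{(4)}_\eta}+(1-\theta_k)\norm{Z_\eta}\leq\max\bigl(\norm{V^{(4)}_\eta},\norm{Z_\eta}\bigr),
\]
which is \eqref{eq:k4-reduction}.

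The main obstacle is the second step. The naive entrywise comparison may fail for small $m$ and large $i,j$, because $s_i^{(k)}s_j^{(k)}$ decays polynomially faster than at $k=4$ while $d_m^{(k)}$ grows. If a direct domination fails, I would fall back on the operator form. Writing $t^{(k/2-1)}e^{-t}=t\cdot t^{(k-4)/2}e^{-t}$, one can decompose $L^2(\nu_k)$ via the connection formula \cite[Eq.~18.18.18]{Olver} between $L_j^{(k/2-1)}$ and $L_j^{(1)}$. One then bounds the multiplication operator $M_{1+\beta_{\eta,k}t^{-1/2}}$ by a combination of $M_{1+\beta_{\eta,4}t^{-1/2}}$ and the weaker weight $\tfrac{15}{16}\beta_{\eta,4}t^{-1/2}$, the latter appearing in $Z_\eta$ together with the extra $\eta^2$ coming from one application of $T_\eta$.
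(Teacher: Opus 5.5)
Your primary route fails at the entrywise-domination step, already for $k=5$.

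In $V_\eta^{(k)}$, in $V_\eta^{(4)}$, and in $Z_\eta$ shifted by one index, the $T_\eta$-part of the $(n,n)$ entry is exactly $\eta^{2n}$. The comparison therefore reduces to the $\beta$-parts, whose coefficients of $\sqrt{1-\eta^2}\,\eta^{2n-1}$ do not depend on $\eta$. At $n=10$ these coefficients are approximately $0.7526$ for $k=5$, $0.7414$ for $V_\eta^{(4)}$, and $0.7155$ for the shifted $Z_\eta$. The $k=5$ value already exceeds both at $n=5$. Hence no $\theta_k\in[0,1]$ works, and a diagonal similarity cannot help because it leaves diagonal entries unchanged.

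This is structural, not a small-index accident. For fixed $k$ the $\beta$-part coefficient is asymptotic to $\frac{k}{k-1}\frac{\Gamma(k/2)}{\Gamma((k-1)/2)}\frac{\log n}{\pi\sqrt n}$ as $n\to\infty$. For fixed $n$ it tends to $1$ as $k\to\infty$. So large-index entries get relatively larger as $k$ grows, and $k=5$ is not the worst case. The growth $\beta_{\eta,k}\sim\sqrt k$ is genuine and cannot be absorbed by any entrywise or pointwise-weight comparison.

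Your fallback has the same problem. Since $\beta_{\eta,k}>\beta_{\eta,4}$, the weight $1+\beta_{\eta,k}t^{-1/2}$ pointwise dominates every combination of the $k=4$ weights you list. A one-variable connection formula between $L_j^{(k/2-1)}$ and $L_j^{(1)}$ does not by itself give an isometry of $L_2(\nu_k)$ into a space built from $L_2(\nu_4)$ that intertwines $T_\eta$.

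The missing idea is a two-variable lift. Take independent $A\sim\nu_4$ and $C\sim\nu_{k-4}$; then $A+C\sim\nu_k$, so $(Jf)(a,c)\colonequals f(a+c)$ is an isometry $L_2(\nu_k)\to L_2(\nu_4)\otimes L_2(\nu_{k-4})$. The Laguerre addition formula gives $JT_\eta=(T_\eta\otimes T_\eta)J$, and $J^*$ is conditional expectation given $A+C$.

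Because $A/(A+C)\sim\operatorname{Beta}(2,(k-4)/2)$, one gets $J^*(M_{a^{-1/2}}\otimes I)J=\frac{\Gamma(3/2)\Gamma(k/2)}{\Gamma(2)\Gamma((k-1)/2)}M_{x^{-1/2}}$. This Gamma ratio exactly cancels the growth of $\beta_{\eta,k}$: $V_\eta^{(k)}=J^*[(T_\eta+r_kB_{\eta,4})\otimes T_\eta]J$ with $r_k=\frac{3k}{4(k-1)}\le\frac{15}{16}$. That is where $15/16$ really comes from.

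Writing $Jf=\sum_j f_j\otimes\ell_j^{(k-4)}$, the quadratic form splits orthogonally as $\sum_j\eta^{2j}\ip{f_j}{(T_\eta+r_kB_{\eta,4})f_j}$. The slice $j=0$ has $\ip{f_0}{1}=0$ and is bounded by $V_\eta^{(4)}|_{\mathrm{span}(1)^\perp}$. Each slice $j\ge1$ carries $\eta^{2j}\le\eta^2$ and is bounded by $Z_\eta$ on all of $L_2(\nu_4)$, constants included, since these $f_j$ need not have mean zero. Summing with $\sum_j\norm{f_j}^2=\norm{f}^2$ gives the maximum.

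So the extra $\eta^2$ in $Z_\eta$ comes from the second tensor factor, not from an index shift. The bound is a maximum over orthogonal slices, not a convex combination on a single index set.
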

\begin{proof}
As above, we identify the set of radial functions from $\R^{k}\to [-1,1]$ as functions from $(0,\infty)\to[-1,1]$ via the identification $t=\|x\|^2/2$, $x\in\R^k$.  Let $A$ and $C$ be independent gamma random variables with shape parameters $\alpha$ equal to
$2$ and $(k-4)/2$, respectively, and scale $\theta=1$.  Then $A$ has law $\nu_4$, $C$ has law $\nu_{k-4}$, and $A+C$ has law
$\nu_k$ by \eqref{eq:mu3}, so
\[
 J\colon L_2(\nu_k)\longrightarrow
 L_2(\nu_4)\otimes L_2(\nu_{k-4}),
 \qquad
 (Jf)(a,c)\colonequals f(a+c),\qquad\forall\,a,c>0,
\]
is an isometry.  The Ornstein-Uhlenbeck semigroup commutes with $J$:
\begin{equation}
\label{eq:D-intertwining}
 JT_\eta
 =((T_\eta |_{L_2(\nu_4)})\otimes (T_\eta |_{L_2(\nu_{k-4})}))J.
\end{equation}
This follows from \eqref{tel} Laguerre's addition formula
$L_n^{(k/2 -1)}(a+c)
=\sum_{j=0}^{n}L_j^{(1)}(a)L_{n-j}^{(k/2 - 3)}(c)$, rewritten as
\[
J\ell_n^{(k)}
\stackrel{\eqref{ljdef}}{=}\sum_{j=0}^{n}\frac{s_n^{(k)}}{s_j^{(4)}s_{n-j}^{(k-4)}}\cdot \ell_j^{(4)}\ell_{n-j}^{(k-4)}.
\]
Conditioning on $A+C$ reveals that the adjoint $J^*$ is conditional expectation:
\begin{equation}\label{jadj}
J^*(g)(x)= \E[g(A,C)\mid A+C=x],\qquad\forall\,x>0.
\end{equation}

Conditional on $A+C=x$, the ratio $A/x$ has the
$\operatorname{Beta}(2,(k-4)/2)$ distribution.  Hence
\begin{equation}
\label{eq:beta-gamma-conditional}
 \E[A^{-1/2}\mid A+C=x]
 =x^{-1/2}
 \frac{\Gamma(3/2)\Gamma(k/2)}
 {\Gamma(2)\Gamma((k-1)/2)}.
\end{equation}
Consequently
\begin{equation}\label{jsid}
J^*(M_{a^{-1/2}}|_{L_2(\nu_4)}\otimes I)J
=\frac{\Gamma(3/2)\Gamma(k/2)}
 {\Gamma(2)\Gamma((k-1)/2)}M_{x^{-1/2}}.
\end{equation}
Using \eqref{eq:D-intertwining} and \eqref{jsid} we obtain
\begin{equation}
\label{eq:B-intertwining}
 B_{\eta,k}
\colonequals\beta_{\eta,k}T_{\eta}^{1/2}M_{x^{-1/2}}T_{\eta}^{1/2}
 =r_kJ^*\!\left[
 B_{\eta,4}\otimes T_{\eta}|_{L_2(\nu_{k-4})}
 \right]J,
\end{equation}
\[
r_k\colonequals
\frac{\beta_{\eta,k}}{\beta_{\eta,4}}\frac{\Gamma(2)\Gamma((k-1)/2)}{\Gamma(3/2)\Gamma(k/2)}
\stackrel{\eqref{yup}}{=}
\frac{k}{4}\frac{\Gamma(5/2)}{\Gamma(3/2)}\frac{\Gamma((k-1)/2)}{\Gamma((k+1)/2)}
=\frac{3}{4}\frac{k}{k-1}\stackrel{(k\geq5)}{\leq}\frac{15}{16},
\]
\begin{equation}\label{vjeq}
 V_\eta^{(k)}
 \stackrel{\eqref{eq:B-intertwining}\wedge\eqref{eq:V-operator}}{=}J^*\!\left[
 \left(T_{\eta}|_{L_2(\nu_4)}+r_k B_{\eta,4}\right)
 \otimes T_{\eta}|_{L_2(\nu_{k-4})}
 \right]J.
\end{equation}

Let $f\in L_2(\nu_k)$ with $\nu_k$ mean zero.  For any $j\geq0$, let $f_j\in L_2(\nu_4)$ be defined by $Jf=\sum_{j=0}^{\infty}f_j\otimes \ell_j^{(k-4)}$.  Then using $J(1)=1\otimes 1$ and that $J$ is an isometry,
\begin{equation}\label{f0eq}
\langle f_0,1\rangle_{\nu_4}
=\langle f_0\otimes\ell_0^{(k-4)},1\otimes 1\rangle_{\nu_4\otimes \nu_{k-4}}
=\langle Jf,1\otimes 1\rangle_{\nu_4\otimes \nu_{k-4}}
=\langle J^*Jf,1\rangle_{\nu_k}
=\langle f,1\rangle_{\nu_k}=0,
\end{equation}
since $f$ has mean zero.  Also, applying $T_\eta$ to the $\ell_j^{(k-4)}$ terms, we get
\begin{equation}\label{vexp}
\langle f, V_\eta^{(k)}f\rangle
\stackrel{\eqref{vjeq}}{=}
\langle Jf,\,\left[
 \left(T_{\eta}|_{L_2(\nu_4)}+r_k B_{\eta,4}\right)
 \otimes T_{\eta}|_{L_2(\nu_{k-4})}
 \right]Jf\rangle
\stackrel{\eqref{tel}}{=}\sum_{j=0}^{\infty}\eta^{2j}\langle f_j,
 \left(T_{\eta}+r_k B_{\eta,4}\right)f_j\rangle.
\end{equation}
Since $r_k\leq15/16$ and $B_{\eta,4}\geq0$, we have the following operator inequalities on $L_2(\nu_4)$ $\forall$ $j\geq1$:
\[
T_{\eta}+r_k B_{\eta,4}
\leq T_{\eta}+B_{\eta,4}
\stackrel{\eqref{eq:V-operator}}{=}V_{\eta}^{(4)}
,\qquad
\eta^{2j}(T_\eta+r_k B_{\eta,4})
\leq\eta^2\Big(T_\eta+\frac{15}{16}B_{\eta,4}\Big)
\stackrel{\eqref{eq:Z-def}}{=}Z_\eta.
\]
Combining this with \eqref{f0eq} and \eqref{vexp} gives
\[
\langle f, V_\eta^{(k)}f\rangle
\leq \|V_{\eta}^{(4)}\mid_{\mathrm{span}(1)^\perp}\|\,\|f_0\|^2
+\|Z_\eta\|\sum_{j=1}^{\infty}\|f_j\|^2
\leq \max\Big(\|V_{\eta}^{(4)}\mid_{\mathrm{span}(1)^\perp}\|,\|Z_\eta\|\Big)\|f\|^2.
\]
This concludes the proof.
\end{proof}

We now summarize the above findings.

\begin{proposition}[Certified radial bound]
\label{prop:certificateB}
For every $k\geq3$,
\begin{equation}
\label{eq:certificateB}
 \norm{A_{\eta}^{(k)}}
 \leq\norm{V_{\eta}^{(k)}|_{\1^\perp}}
 <
 \begin{cases}
  .987,&k=3,\, 0<\eta\leq.5843\\
  .989,&k\geq4,\,0<\eta\leq3/5.
 \end{cases}
\end{equation}
\end{proposition}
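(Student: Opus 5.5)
This is a summary statement, so the plan is to assemble results already proved rather than to introduce new estimates. The argument has two parts: the first inequality $\norm{A_\eta^{(k)}}\leq\norm{V_\eta^{(k)}|_{\1^\perp}}$, and the numerical bounds on the right-hand side in the two regimes.

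For the first inequality we use the operator comparison \eqref{psdineq}. For every finitely supported $f=\sum_{j\geq1}c_j\ell_j$ in the nonconstant radial subspace, \eqref{psdineq} gives $\langle f,A_\eta^{(k)}f\rangle\leq\langle f,V_\eta^{(k)}f\rangle$. Three facts then finish this step:
\begin{itemize}
\item $A_\eta^{(k)}$ is positive semidefinite, being a Gram matrix of the weighted integrals in \eqref{eq:A-def};
\item $V_\eta^{(k)}$ is positive semidefinite by its definition \eqref{eq:V-operator};
\item both matrices are indexed by $i,j\geq1$, i.e.\ they act on $\mathrm{span}(1)^\perp$.
\end{itemize}
Hence for every finite principal section, $\norm{A}=\sup_{\norm{f}=1}\langle f,Af\rangle\leq\sup_{\norm{f}=1}\langle f,Vf\rangle=\norm{V|_{\1^\perp}}$. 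Taking the supremum over finite sections, which is how $\norm{A_\eta^{(k)}}$ was defined, gives the first inequality.

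For $k=3$ and $0<\eta\leq.5843$, the bound is exactly Proposition \ref{prop:certificate}. That proposition already contains the monotonicity-in-$\eta$ reduction to $\eta_0=.5843$, the $20\times20$ Collatz--Wielandt certificate, and the tail bound via Lemma \ref{lem:block-tail}. It yields $\norm{V_\eta^{(3)}|_{\1^\perp}}<.987$.

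For $k=4$, Proposition \ref{prop:k4-certificates} gives $\norm{V_\eta^{(4)}|_{\1^\perp}}<.962<.989$ for all $0<\eta\leq3/5$.

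For $k\geq5$, combine Proposition \ref{prop:k4-reduction} with Proposition \ref{prop:k4-certificates}:
\[
\norm{V_\eta^{(k)}|_{\1^\perp}}\leq\max\bigl(\norm{V_\eta^{(4)}|_{\1^\perp}},\norm{Z_\eta}\bigr)<\max(.962,.989)=.989 .
\]

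No step here is a real obstacle, since all the hard work sits in the cited propositions. The only point needing care is the first inequality: the quadratic-form comparison \eqref{psdineq} must be read on the correct subspace. It has to be the nonconstant radial functions, matching the index set $i,j\geq1$ of $A_\eta^{(k)}$, so that the restriction of $V_\eta^{(k)}$ to $\1^\perp$ (rather than its full norm, which includes the $j=0$ mode) is the correct comparator.
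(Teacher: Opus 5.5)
Your proposal is correct and follows the paper's proof. The first inequality comes from the quadratic-form comparison \eqref{psdineq}, using that $A_\eta^{(k)}$ is positive semidefinite so its norm is the supremum of its quadratic form. The bounds then come from Proposition \ref{prop:certificate} for $k=3$, Proposition \ref{prop:k4-certificates} for $k=4$, and Propositions \ref{prop:k4-certificates} and \ref{prop:k4-reduction} together for $k\geq5$. The paper leaves the positive-semidefiniteness step implicit, and you spell it out.
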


\begin{proof}
The first inequality follows from \eqref{psdineq}.  Then, the case $k=3$ is Proposition \ref{prop:certificate}; the case $k=4$
is Proposition \ref{prop:k4-certificates}; and all cases $k\geq5$ follow
from Propositions \ref{prop:k4-certificates} and
\ref{prop:k4-reduction}.
\end{proof}

\begin{proof}[Proof of Theorem \ref{thm:positive-centered}]
Combine Propositions \ref{prop:radial-criterion} and \ref{prop:certificateB}.  For the corresponding uniqueness statement of Theorem \ref{thm4}, note that the inequality \eqref{stfsineq} is an equality only when $m(R)=0$, i.e. the average value of $f$ on all spheres centered at the origin is zero.  The equality case then follows from \cite[Lemma 5.4]{hwang21} when $\rho>0$, and for $\rho<0$ by Lemma \ref{lem:oddification} below.  In the latter case, equality means that the even part of $f$ vanishes, so the equality case for $\rho>0$ then applies to the $\rho<0$ case.
\end{proof}

\section{Quantum MAX-CUT hardness}

We now pass from the centered positive inequality to the unrestricted
negative inequality.

\begin{lemma}[Gaussian oddification]
\label{lem:oddification}
Let $0<\eta<1$ and let $f\colon \R^n\to\B^k$ be measurable.  Define
\[
  f_{\mathrm e}(x)=\frac{f(x)+f(-x)}2,
  \qquad
  f_{\mathrm o}(x)=\frac{f(x)-f(-x)}2,\qquad\forall\,x\in\R^n.
\]
Then $f_{\mathrm o}$ is odd, $\B^k$-valued, and mean zero, and
\begin{equation}
\label{eq:oddification}
  \Stab_{-\eta}(f)
  =\Stab_\eta(f_{\mathrm e})-\Stab_\eta(f_{\mathrm o})
  \geq-\Stab_\eta(f_{\mathrm o}).
\end{equation}
\end{lemma}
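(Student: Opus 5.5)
The plan is to verify the elementary properties of $f_{\mathrm o}$ first, and then obtain the identity from the parity of $T_\eta$ together with bilinearity of $\Stab$.

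\emph{Properties of $f_{\mathrm o}$.} Oddness is immediate from the definition. Since $\B^k$ is convex and symmetric, $\norm{f_{\mathrm o}(x)}\leq\frac12(\norm{f(x)}+\norm{f(-x)})\leq1$, so $f_{\mathrm o}$ is $\B^k$-valued. The Gaussian measure $\gamma_n$ is invariant under $x\mapsto-x$, so $\int f_{\mathrm o}\,\dd\gamma_n=-\int f_{\mathrm o}\,\dd\gamma_n$, and hence $f_{\mathrm o}$ has mean zero. The same convexity argument shows $f_{\mathrm e}$ is $\B^k$-valued.

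\emph{The identity.} Write $f=f_{\mathrm e}+f_{\mathrm o}$. From the integral formula for $T_\rho$, the substitution $y\mapsto-y$ (which preserves $\gamma_n$) gives
\[
T_{-\eta}g(x)=\int g(-\eta x+y\sqrt{1-\eta^2})\gamma_n(y)\dd y=(T_\eta g)(-x).
\]
Replacing $y$ by $-y$ also shows $T_\eta$ maps even functions to even functions and odd functions to odd functions. Consequently $T_{-\eta}f_{\mathrm e}=T_\eta f_{\mathrm e}$ and $T_{-\eta}f_{\mathrm o}=-T_\eta f_{\mathrm o}$. Now expand $\Stab_{-\eta}(f)=\int\ip{f}{T_{-\eta}f}\,\dd\gamma_n$ bilinearly. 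The cross terms pair an even function with an odd one, so each integrand is odd and integrates to zero against the symmetric measure $\gamma_n$. What remains is $\Stab_\eta(f_{\mathrm e})-\Stab_\eta(f_{\mathrm o})$.

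\emph{The inequality.} It remains to show $\Stab_\eta(f_{\mathrm e})\geq0$. Since $T_\eta=T_{\sqrt\eta}T_{\sqrt\eta}$ and $T_{\sqrt\eta}$ is self-adjoint on $L^2(\gamma_n)$ (it is diagonal in the Hermite basis with eigenvalues $\eta^{|\alpha|/2}\geq0$), we get $\Stab_\eta(g)=\int\norm{T_{\sqrt\eta}g}^2\dd\gamma_n\geq0$ for any bounded $g$. Applying this componentwise to $g=f_{\mathrm e}$ gives the inequality.

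There is no real obstacle here. The only point needing care is the parity bookkeeping, namely that $T_{\pm\eta}$ preserves parity and that the cross terms vanish, together with the semigroup property $T_\eta=T_{\sqrt\eta}^2$. The latter is standard and can be read off from the Hermite eigenexpansion, which is the analogue of \eqref{tel}.
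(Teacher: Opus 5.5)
Your proof is correct and follows essentially the same route as the paper: split $f=f_{\mathrm e}+f_{\mathrm o}$, note that $T_{-\eta}$ acts as $T_\eta$ on even functions and as $-T_\eta$ on odd ones with vanishing cross terms, and conclude from positivity of $T_\eta$. The difference is cosmetic. You check the parity facts directly from the integral formula via $T_{-\eta}g(x)=(T_\eta g)(-x)$, which holds by definition with no substitution needed. The paper instead reads them off from the eigenvalues $(-\eta)^d$ on the Hermite levels and the orthogonality of even and odd levels.
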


\begin{proof}
The functions $f_{\mathrm e}$ and $f_{\mathrm o}$ are supported on the even
and odd Hermite-Fourier levels, respectively, and these subspaces are orthogonal.  On
a degree-$d$ Hermite level, $T_{-\eta}$ has eigenvalue
$(-\eta)^d$.  This gives the equality in \eqref{eq:oddification}; the
inequality follows from positivity of $T_\eta$.  The remaining assertions
follow directly from the definition of $f_{\mathrm o}$.
\end{proof}

\begin{proof}[Proof of Theorem \ref{thm:negative-borell}]
Write $\rho=-\eta$ so that $\eta>0$.  By Lemma \ref{lem:oddification} and Theorem
\ref{thm:positive-centered},
\[
  \Stab_{\rho}(f)
  =\Stab_{-\eta}(f)
  \geq-\Stab_\eta(f_{\mathrm o})
  \geq-F^*(k,\eta)
  \stackrel{\eqref{eq:Fstar}}{=}F^*(k,-\eta)
  =F^*(k,\rho).
\]
This proves
\eqref{eq:negative-borell}, completing the proof.
\end{proof}

As noted in the introduction, Theorem \ref{thm:negative-borell} immediately implies Theorem \ref{thm5}, since the correlation $\rho_{\rm BOV,3}$ from \eqref{eq:alpha-bov} satisfies \eqref{rhobov3}.

\section{Rank-k MAX-CUT Hardness}

Theorem \ref{thmk} follows from Theorem \ref{thm:negative-borell} for $k\geq4$ once we prove that $\rho_{\rm BOV,k}$ from \eqref{eq:alpha-bov} satisfies $\rho_{\rm BOV,k}\in(-3/5,0)$ for any $k\geq4$.

\begin{proposition}[Location of the worst correlation]
\label{prop:worst-correlation}
For every $k\geq4$, the minimizer $\rho_{\rm BOV,k}$ in
\eqref{eq:alpha-bov} is unique and satisfies
\[
 -\frac35<\rho_{\rm BOV, k}<0.
\]
\end{proposition}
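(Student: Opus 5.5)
We study $g_k(\rho)\colonequals (1-F^*(k,\rho))/(1-\rho)$ on $(-1,0)$ via the power series of $F^*$. Write $F^*(k,\rho)=\sum_{j\geq0}a_j\rho^{2j+1}$, where by \eqref{eq:Fstar}
\[
a_j=\frac{2}{k}\Big(\frac{\Gamma((k+1)/2)}{\Gamma(k/2)}\Big)^2\frac{((1/2)_j)^2}{j!\,(k/2+1)_j}>0 .
\]
Two facts about these coefficients are needed. First, $\sum_j a_j=F^*(k,1)=1$, since $f_{\mathrm{opt},k}$ is unit valued. Second, $a_0=m_k^2/k<1$, using \eqref{eq:mk-def}.

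Substitute $\rho=-x$ with $x\in(0,1)$. Then
\[
g_k(-x)=\frac{1+\sum_j a_jx^{2j+1}}{1+x}.
\]
Its derivative has the sign of
\[
N(x)\colonequals(1+x)\sum_j(2j+1)a_jx^{2j}-1-\sum_j a_jx^{2j+1}=\sum_j a_j\big((2j+1)x^{2j}+2j\,x^{2j+1}\big)-1 .
\]

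\textbf{Uniqueness.} The function $N$ is a power series with nonnegative coefficients, apart from the constant term. Its constant term is $a_0-1<0$. Hence $N$ is strictly increasing on $(0,1)$, starting from $N(0)=a_0-1<0$. At $x\to1^-$ we get $N(1^-)=\sum_j(4j+1)a_j-1\geq0$, using $\sum_j a_j=1$; this is strictly positive, and possibly $+\infty$. So $N$ has exactly one zero $x_k\in(0,1)$, and $g_k(-x)$ decreases and then increases. This gives uniqueness of the minimizer $\rho_{\mathrm{BOV},k}=-x_k$, and the bound $\rho_{\mathrm{BOV},k}<0$ is automatic.

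\textbf{Reducing the bound to a sign check.} By monotonicity of $N$, the inequality $x_k<3/5$ is equivalent to $N(3/5)>0$. Explicitly, we must show
\[
\sum_{j\geq0}a_j^{(k)}\big((2j+1)(3/5)^{2j}+2j(3/5)^{2j+1}\big)>1\qquad\forall\,k\geq4.
\]

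\textbf{Main obstacle: uniformity in $k$.} As $k\to\infty$ we have $a_0\to1$ and $a_j\to0$ for $j\geq1$, so $N(3/5)\to0$. The difficulty is therefore establishing positivity uniformly in $k$. The plan is to normalize by $1-a_0$, using the identity $\sum_{j\geq1}a_j=1-a_0$. With this, $N(3/5)>0$ is equivalent to
\[
\sum_{j\geq1}\frac{a_j}{1-a_0}\Big((2j+1)(3/5)^{2j}+2j(3/5)^{2j+1}-1\Big)>0 .
\]
The weights $w_j=a_j/(1-a_0)$ form a probability distribution on $j\geq1$.

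\textbf{Concentration of the weights.} We show the weights concentrate on $j=1$ as $k$ grows. The ratio $a_{j+1}/a_j=(j+1/2)^2/((j+1)(j+k/2+1))$ is decreasing in $k$. So $w_1$ is increasing in $k$, presumably, since the tail ratios shrink. The $j=1$ term is $3(.36)+2(.216)-1=0.512>0$. The terms with $j\geq2$ are bounded below by $-1$, and in fact only negative once $(2j+1)(.36)^j(1+\tfrac{6j/5}{2j+1})<1$.

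\textbf{Checking $k=4$ and all larger $k$.} It then suffices to check $0.512\,w_1-\sum_{j\ge2}w_j>0$ at $k=4$, by exact evaluation of finitely many terms plus a geometric tail bound, with ratio at most $9/20$ at $k=4$. The claim then follows for all $k\geq4$ by monotonicity of $w_1$ in $k$.

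The monotonicity of $w_1$ is the point I expect to require care. If a clean monotone-likelihood-ratio argument fails, I would fall back on an explicit bound $\sum_{j\ge2}a_j\le a_1\sum_{i\ge1}(9/(2k+8))^{i}$, valid for all $k\geq4$, together with a direct finite computation at small $k$.
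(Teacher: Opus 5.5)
Your setup matches the paper's. The numerator $N$ is exactly the paper's $H_k$, its strict monotonicity gives uniqueness, and everything reduces to $N(3/5)>0$. Your rewriting $N(3/5)=\sum_{j\ge1}a_j(\phi_j-1)$, with $\phi_j=(2j+1)(3/5)^{2j}+2j(3/5)^{2j+1}$ and using $\sum_j a_j=1$, is also correct.

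\textbf{The gap.} The inequality you propose to verify at $k=4$, namely $0.512\,w_1-\sum_{j\ge2}w_j>0$, is equivalent to $w_1>1/1.512\approx0.661$, since $\sum_{j\ge2}w_j=1-w_1$. But $a_1=a_0/(2(k+2))$ and $\sum_{j\ge1}a_j=1-a_0$ give $w_1=a_0/(2(k+2)(1-a_0))$. At $k=4$, where $a_0=9\pi/32$, this is $w_1=3\pi/(4(32-9\pi))\approx0.632$.

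So your base case is false. Replacing $\phi_j-1$ by $-1$ for every $j\ge2$ loses too much at $k=4$, where in fact $\phi_2-1\approx-0.041$ and $\phi_3-1\approx-0.505$.

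Separately, the geometric tail bound is unfounded. The ratio $a_{j+1}/a_j=(j+1/2)^2/((j+1)(j+k/2+1))$ tends to $1$ as $j\to\infty$; at $k=4$ it already exceeds $9/20$ at $j=3$. The coefficients decay only like $j^{-1-k/2}$. Your fallback $\sum_{j\ge2}a_j\le a_1\sum_{i\ge1}(9/(2k+8))^i$ rests on the same false ratio bound. Even if granted, it only closes the argument for $k\ge10$.

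\textbf{How to repair it.} Keep the exact weights $\phi_j-1$, and bound the tail by $\sum_{j\ge J}a_j=1-\sum_{j<J}a_j$, which you already know exactly, rather than by a geometric series. At $k=4$, the terms $j\le6$ plus this exact remainder (using $|\phi_j-1|\le1$ on it) give $\sum_{j\ge1}w_j(\phi_j-1)>0.17$.

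For $k\ge5$, your crude bound then does suffice. Indeed, $w_1$ is increasing in $k$; this is immediate, since each $a_j/a_1$ is a product of ratios that decrease in $k$. And already $w_1\approx0.684$ at $k=5$.

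Alternatively, handle all $k\ge4$ at once. The ratio $a_j^{(k+1)}/a_j^{(k)}$ is proportional to $(k/2+1)_j/(k/2+3/2)_j$, which decreases in $j$. Hence $w^{(k)}$ decreases in likelihood-ratio order as $k$ grows. Since $\phi_j-1$ is decreasing in $j\ge1$, the sum $\sum_j w^{(k)}_j(\phi_j-1)$ is nondecreasing in $k$.

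The paper avoids normalizing altogether. In the form $N(3/5)=a_0-1+\sum_{j\ge1}a_j\phi_j$, every $j\ge1$ term is positive, so it drops the terms $j\ge3$. It pays for this with the Wallis bound $a_0>1-1/(2k)$, and obtains $N(3/5)\ge a_0\bigl(1+\frac{18900k+102573}{25000(k+2)(k+4)}\bigr)-1>0$ for all $k\ge4$.
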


\begin{proof}
Write $\Phi_k(q)\colonequals F^*(k,q)$ for $0\leq q\leq1$.  Since $\Phi_k$
is odd by \eqref{eq:Fstar}, minimizing \eqref{eq:alpha-bov} is equivalent to minimizing
\[
 a_k(q)\colonequals\frac{1+\Phi_k(q)}{1+q},\qquad q\in[0,1].
\]
The sign of $a_k'(q)$ is the sign of
\[
 H_k(q)=(1+q)\Phi_k'(q)-\Phi_k(q)-1.
\]
The power series in \eqref{eq:Fstar} has strictly positive coefficients,
so $H_k'(q)=(1+q)\Phi_k''(q)>0$ for all $q>0$.  It remains to show $H_k(0)<0$ and $H_k(3/5)>0$.  By its definition,
\[
 H_k(0)=g_k-1<0,
 \qquad
 g_k\colonequals\Phi_k'(0)
 \stackrel{\eqref{eq:Fstar}}{=}\frac{2}{k}
 \left(\frac{\Gamma((k+1)/2)}{\Gamma(k/2)}\right)^2
 \stackrel{\eqref{eq:mk-def}}{=}\frac{m_k^2}{k}\stackrel{\eqref{eq:mk-def}}{<}1.
\]

The hypergeometric
series has known power series expansion ${}_2F_1(1/2,1/2,k/2+1,z)=\sum_{j=0}^\infty \frac{(1/2)_j^2}{(k/2+1)_j j!}z^j$.  Since $\Phi_k(q)=g_k q{}_2F_1 (1/2, 1/2, k/2+1,q^2)$, this leads to a power series expansion for $H_k$ of the form
\[
H_k(q)=g_k-1+g_k\frac{3q^2+2q^3}{2(k+2)}
+g_k\frac{9(5q^4+4q^5)}{8(k+2)(k+4)}
+g_k\sum_{j=3}^\infty \frac{(1/2)_j^2}{(k/2+1)_j j!}((2j+1)q^{2j}+2jq^{2j+1}).
\]
Since all of the coefficients are nonnegative, we can lower bound $H_k(q)$ by dropping the $j\geq3$ terms.  Plugging in $q=3/5$ then gives
\begin{equation}
\label{eq:H-lower}
 H_k(3/5)
 \geq
 g_k\left(1+
 \frac{18900k+102573}{25000(k+2)(k+4)}\right)-1.
\end{equation}
The classical Wallis-ratio bound
\[
 \frac{\Gamma(x+1/2)}{\Gamma(x)}
 >\sqrt{x-\frac14},
 \qquad \forall\,x>\frac14,
\]
gives $g_k>1-\frac1{2k}$.  Also, for $k\geq4$,
\[
 \frac{18900k+102573}{25000(k+2)(k+4)}
 >\frac1{2k-1},
\]
since this inequality is equivalent to
$
 12800k^2+36246k-302573>0
$.  Thus the right-hand side of \eqref{eq:H-lower} is positive.  Since
$H_k$ is strictly increasing, it has exactly one zero in $(0,3/5)$.
The derivative of $a_k$ changes there from negative to positive, proving
the claim.
\end{proof}

\begin{proof}[Proof of Theorem \ref{thmk}]
For $k=3$, the minimizing correlation is
$\rho_{\rm BOV, 3}=-0.5842676623\ldots\in(-.5843,0)$.  For $k\geq4$,
Proposition \ref{prop:worst-correlation} places $\rho_{\rm BOV, k}$ in
$(-3/5,0)$.  Apply Theorem \ref{thm:negative-borell} at $\rho_{\rm BOV, k}$ and
then \cite[Theorem~C.4]{hwang21}.  The resulting hardness factor is
\eqref{eq:alpha-bov}.
\end{proof}

\section{Part II. MAX-3-CUT Preliminaries}

Let $\Theta=(\Theta_1,\Theta_2,\Theta_3)$ be the standard simplex
partition of $\R^2$, consisting of three sectors of angle $2\pi/3$ with
vertex at the origin.  For partitions $A=(A_i)_{i=1}^3$ and
$B=(B_i)_{i=1}^3$ of $\R^n$ into three disjoint measurable sets, put
\begin{equation}\label{eq:Q-def}
 Q_\rho^{(n)}(A,B)
 \colonequals\sum_{i=1}^3\int_{\R^n}\1_{A_i}(x)T_\rho\1_{B_i}(x)\gamma_{n}(x)\dd x.
\end{equation}
For measurable $E\subset\R^n$, write $-E=\{-x\colon x\in E\}$, and
$-A\colonequals(-A_i)_{i=1}^3$.  Below we prove

\begin{theorem}[Standard Simplex, Negative Correlation]\label{thm:main}
Let $n\ge2$, let $\Omega=(\Omega_1,\Omega_2,\Omega_3)$ be a measurable partition of
$\R^n$.  Let $\rho\in[-1/2,0)$.  Then
\[
  Q_\rho^{(n)}(\Omega,\Omega)
  \geq Q_\rho^{(2)}(\Theta,\Theta).
\]
\end{theorem}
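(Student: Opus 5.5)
We follow the outline from the introduction. First we reduce to $n=2$. By \cite{HeilmanTarter2021}, a minimizer of $Q_\rho^{(n)}(\Omega,\Omega)$ over partitions exists and, after a rotation, has the form $\Omega_i=\Omega_i'\times\R^{n-2}$ with $\Omega'$ a partition of $\R^2$. Since $Q_\rho^{(n)}(\Omega'\times\R^{n-2},\Omega'\times\R^{n-2})=Q_\rho^{(2)}(\Omega',\Omega')$, it suffices to treat $n=2$.

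In the plane we use polar coordinates $x=r\theta$. We write $Q_\rho^{(2)}(\Omega,\Omega)$ as an integral over pairs of radii $(r,s)$ of a circle noise stability $\sum_i\int\int \1_{\Omega_i}(r\theta)K^{r,s}_\rho(\theta\cdot\phi)\1_{\Omega_i}(s\phi)$. Here the kernel is the von Mises-type kernel $\propto e^{\rho rs\cos(\theta-\phi)/(1-\rho^2)}$. We expand each $\1_{\Omega_i}(r\cdot)$ in Fourier series on $S^1$, so the kernel acts diagonally with eigenvalues $\lambda_m(r,s)=I_m(z)/I_0(z)$ (up to the sign $(-1)^m$ coming from $\rho<0$), where $z=|\rho|rs/(1-\rho^2)$.

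Next we rearrange on circles, as in \cite{Heilman2023}. For $\rho<0$, on each circle we replace the three arcs by three arcs of equal lengths $\sigma(\Omega_i\cap rS^1)$, arranged to match the simplex structure. The frequency-zero (mass) part does not change. We then compare everything to $\Theta$ by splitting $Q$ into two pieces. The first is the radial piece, a bilinear average of the masses $a_i(r)=\sigma(\Omega_i\cap rS^1)$, which is $\sum_i\E[a_i(R)a_i(S)]$ with $R,S$ correlated radii. The second is the oscillatory piece, built from the first (and higher) Fourier modes. For $\Theta$, each $a_i\equiv1/3$ is constant. Writing $a_i=1/3+b_i$ with $\sum_i b_i=0$, the radial term picks up $\sum_i\E b_i(R)b_i(S)$, and this can be negative for $\rho<0$ because it lies on odd Laguerre-type levels. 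The loss in the angular term comes from arcs of unequal length, so its first Fourier coefficient has modulus $\frac{\sin(\pi a)}{\pi}$ rather than $\sin(\pi/3)/\pi$. By concavity this is bounded below by a quadratic penalty $c(r,s)\sum_i b_i(r)^2$, plus cross terms coupling $r$ and $s$. For each pair of radii we bound the joint deficit below by a quadratic form in $(b(r),b(s))$. We then integrate over the radial coordinates to obtain a single symmetric integral operator $K$ on $L^2$ of the radial law. The desired inequality then reduces to $\langle b,(2I - K)b\rangle\ge0$, i.e.\ the spectrum of $K$ is bounded by $2$.

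The main obstacle is this last spectral bound at $\rho=-1/2$. Our first approach is to bound the Hilbert--Schmidt norm of $K$, or of its positive and negative parts separately, by $2$. To do this we change variables to remove the singularity of the penalty coefficient near $r=0$ or $s=0$, where $I_1(z)/I_0(z)\sim z/2$. We then compute the resulting bounded integrals numerically on $[0,5]^2$ with interval arithmetic, and add an analytic exponential tail bound on the complement. Finally, we track equality in the rearrangement and penalty steps: equality forces $b\equiv0$ and arcs at 120 degrees on every circle, aligned across radii, which gives uniqueness up to rotation.
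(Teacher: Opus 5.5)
Your reduction to $n=2$ does not go through as stated. For $\rho<0$, the reduction the paper uses (Heilman's 2022 optimal bilinear reduction, Lemma 7.3 and Theorem 7.9 there, refining Heilman--Tarter) is not about minimizers of $Q_\rho^{(n)}(\Omega,\Omega)$. It is about a bilinear relaxation. With $\sigma=-\rho$ one writes $\sum_i\ip{\1_{\Omega_i}}{T_{-\sigma}\1_{\Omega_i}}=Q_\sigma^{(n)}(\Omega,-\Omega)$. Only the infimum of $Q_\sigma^{(n)}(A,B)$ over pairs of partitions with $\gamma_n(A_i)=\gamma_n(B_i)$ is shown to equal its planar value, and a planar minimizing pair need not have $B=-A$.

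So in $\R^2$ you must prove $Q_\sigma^{(2)}(A,B)\ge Q_\sigma^{(2)}(\Theta,-\Theta)$ for all such pairs. The integrated deficit is then $\sum_i\bigl(\ip{x_i}{\mathcal K_*y_i}+\ip{x_i}{M_{J_*}x_i}+\ip{y_i}{M_{J_*}y_i}\bigr)$ with two different deviation functions. Passing to $z_i^\pm=(x_i\pm y_i)/\sqrt2$, you need $\mathcal K_*\ge-2M_{J_*}$ on all functions and $\mathcal K_*\le 2M_{J_*}$ on mean-zero functions. The equal-measure constraint is exactly what makes $z_i^-$ mean zero.

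Your single-function form $\ip{b}{(2I-K)b}\ge0$ never sees the second bound. (If a non-bilinear reduction were available, only the first, much easier, bound would be needed.) Moreover, a two-sided ``spectrum of $K$ bounded by $2$'' via the Hilbert--Schmidt norm of the whole kernel is false. The constant function has Rayleigh quotient $C_*/\int J_*\,\dd\nu$, far above $2$ and of order $1/\sigma$. The paper handles the two bounds separately:
\begin{itemize}
\item for the lower bound it drops the nonnegative rank-one part $u_*\otimes u_*/C_*$, giving $h_{\rm lower}<0.824$;
\item for the upper bound it uses the doubly centered kernel $\mathsf E_*$ together with the mean-zero condition, giving $h_{\rm upper}<1.967$, which is the tight estimate.
\end{itemize}

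The pointwise quadratic lower bound for each pair of radii is the real content of the proof, and you do not supply it. Your heuristics for it also point the wrong way. The radial term is never negative: radial functions are even, so $T_\rho$ acts on the $j$-th Laguerre level by $\rho^{2j}>0$ (equivalently, the law of $(\norm{X},\norm{Y})$ depends only on $\rho^2$). The dangerous term is the indefinite cross-radius coupling $\tau(a)\ip{x(t)}{y(s)}$. Concavity of $\sin(\pi p)/\pi$ controls only the first Fourier mode. Every mode enters with sign $(-1)^mI_m(a)/I_0(a)$, and $a=2\sigma\sqrt{ts}/(1-\sigma^2)$ is unbounded, so the higher modes cannot be discarded.

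The paper treats all modes at once. The identity $\partial_{pq}K_a(p,q)=\frac12[r_a(p-q)+r_a(p+q)]$ holds with $r_a(u)=e^{-a\cos\pi u}/I_0(a)$ increasing on $[0,1]$. A boundary and critical-point analysis on $[0,1]^2$ then gives $K_a(p,q)\ge\tau pq-j[p(1-p)+q(1-q)]$. Hence $\sum_iK_a(p_i,q_i)-3K_a(b,b)\ge\tau\ip{x}{y}+j(\norm{x}^2+\norm{y}^2)$. The paper then splits off the first mode $r_a(0)\,a(1-\cos\pi u)$ and uses the simplex trigonometric estimate to improve the coefficients to $\tau+4\theta$ and $j+\theta$; this gain is needed at the endpoint $\sigma=1/2$.

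Two smaller points:
\begin{itemize}
\item The rearrangement is applied separately to each $i$ and each pair of radii, giving opposing arcs that may overlap across $i$. It is not an arrangement of three arcs ``to match the simplex.''
\item There is no monotonicity in $\rho$, so a check at $\rho=-1/2$ alone does not cover the stated range. The paper certifies uniformly for $\sigma\in[1/43,1/2]$ and uses Heilman's 2023 small-correlation result for $\sigma<1/43$.
\end{itemize}
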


Put $\sigma\colonequals-\rho$.  Since
$T_{-\sigma}\1_E=T_\sigma\1_{-E}$ for every measurable $E$, negative noise
stability can be written in bilinear form.  For $m\ge2$, let $\mathcal M_m$
be the infimum of $Q_\sigma^{(m)}(A,B)$ over pairs of partitions satisfying
$\gamma_m(A_i)=\gamma_m(B_i)$ for $i=1,2,3$.  Existence of minimizers and
the optimal bilinear dimension-reduction theorem for negative correlation
\cite[Lemma~7.3 and Theorem~7.9]{Heilman2022}, which refines the method of
Heilman--Tarter~\cite{HeilmanTarter2021}, together with the cylinder
embedding, give $\mathcal M_m=\mathcal M_2$.  Thus, for every partition
$\Omega$ of $\R^n$,
\[
 \sum_{i=1}^3\ip{\1_{\Omega_i}}{T_{-\sigma}\1_{\Omega_i}}_{L_2(\gamma_n)}
 =Q_\sigma^{(n)}(\Omega,-\Omega)
 \ge\mathcal M_n=\mathcal M_2.
\]
Consequently, it is enough to prove, for every pair of partitions $A,B$ of
$\R^2$ satisfying $\gamma_2(A_i)=\gamma_2(B_i)$ for all $i$,
\begin{equation}\label{eq:bilinear-target}
 Q_\sigma^{(2)}(A,B)\ge Q_\sigma^{(2)}(\Theta,-\Theta).
\end{equation}

Let $\dd\nu(r)=e^{-r}\dd r$.  For $r\ge0$, on the circle of radius
$\sqrt{2r}$, let $p_i(r)$ and $q_i(r)$ be the normalized angular measures
of $A_i$ and $B_i$.  That is, $p_i(r)=\int_{A_i\cap \sqrt{2r}S^1}\dd x/(2\pi\sqrt{2r})$, for all $r>0$, $1\leq i\leq 3$.  Thus $p(r),q(r)\in\Delta_3$, where
\[
 \Delta_3=\Big\{z\in[0,1]^3:\ \sum_{i=1}^3z_i=1\Big\}.
\]
Since $\gamma_2(A_i)=\gamma_2(B_i)$ for all $1\leq i\leq 3$, we have
\begin{equation}\label{eq:equal-radial-means}
 \int_0^\infty p_i(r)\dd\nu(r)
 =\int_0^\infty q_i(r)\dd\nu(r),\qquad i=1,2,3.
\end{equation}

Let $\mu$ be normalized Haar measure on the circle $S^1$.  Let $(\xi,\eta)$ be a
$\sigma$-correlated pair of standard Gaussian vectors in $\R^2$, and set
$t=|\xi|^2/2$ and $s=|\eta|^2/2$.  Conditional on $t,s$, the joint angular
density relative to $\mu\otimes\mu$ is \cite[Section 2]{Heilman2023}
\begin{equation}\label{kapdef}
\kappa_a(\theta-\phi)\colonequals e^{a\cos(\theta-\phi)}/I_0(a),
\end{equation}
where $I_0$ is the
modified Bessel function of the first kind of order zero (see \eqref{imdef}) and
\begin{equation}\label{eq:a-def}
 d\colonequals1-\sigma^2,\qquad a\colonequals\frac{2\sigma\sqrt{ts}}d.
\end{equation}
Since $\int\kappa_a(\theta-\phi)\dd\mu(\phi)=1$, measurable sets $E,F$ on
the circle satisfy
\[
 \iint_{E\times F}\kappa_a(\theta-\phi)\dd\mu(\theta)\dd\mu(\phi)
 =\mu(E)-\iint_{E\times F^c}\kappa_a(\theta-\phi)
      \dd\mu(\theta)\dd\mu(\phi).
\]
If we wish to maximize the last integral over all measurable $E,F\subset S^1$ with $\mu(E)$ and $\mu(F)$ fixed, then the circular Riesz rearrangement inequality
\cite[Theorem~2]{BaernsteinTaylor1976} implies that 
the maximum occurs for circular arcs with the same center of mass.  Therefore, the first integral is minimized by opposing arcs.  Applying this separately for each $1\leq i\leq 3$ on the left side of \eqref{eq:bilinear-target}, we obtain a lower bound of $Q_\sigma^{(2)}(A,B)$ where $A,B$ are rearranged on a given radius $\sqrt{2r}S^1$ to be opposing arcs.  The three rearranged arcs on
either circle need not be disjoint for the purpose of proving \eqref{eq:bilinear-target}.

We therefore proceed to minimize the spherical noise stability over circular arcs, as in \cite{Heilman2023}.

\section{The joint arc inequality}\label{sec:joint-arcs}

Our first task is proving a strengthened version of \cite[Lemma 3.3]{Heilman2023} for negative correlations.  From now on, set
\begin{equation}\label{eq:b-def}
 b:=\frac13.
\end{equation}
Thus $b$ is the normalized angular length of each of the three sectors in
the standard simplex partition, and $b\1=(1/3,1/3,1/3)$ is the center
of $\Delta_3$.  Identify the circle with
$\mathbb T=\R/(2\pi\mathbb Z)$ and write $\dd\mu(\theta)=\dd\theta/(2\pi)$.
For any $p,q\in[0,1]$, let
\[
 E_p\colonequals\{\theta\in\mathbb T\colon|\theta|\le\pi p\},\qquad
 F_q\colonequals\{\phi\in\mathbb T\colon|\phi-\pi|\le\pi q\},
\]
where both descriptions are understood modulo $2\pi$.  These are arcs of
normalized lengths $p$ and $q$ with opposing centers of mass.  Their conditional
overlap under the angular kernel \eqref{kapdef} is
\begin{equation}\label{eq:K-integral}
 K_a(p,q)
 \colonequals\iint_{E_p\times F_q}
 \frac{e^{a\cos(\theta-\phi)}}{I_0(a)}
 \dd\mu(\theta)\dd\mu(\phi),
 \qquad\forall\,p,q\in[0,1],\,a\in\R.
\end{equation}

We now derive a Fourier formula for $K_a$ and the mixed-derivative
identity used below.  For an integer $m\ge0$, the modified Bessel function
of the first kind has the integral representation
\begin{equation}\label{imdef}
 I_m(a)\colonequals\frac1{2\pi}\int_{-\pi}^{\pi}
 e^{a\cos\vartheta}\cos(m\vartheta)\dd\vartheta,\qquad\forall\,a\in\R.
\end{equation}
It also has the power-series representation
\begin{equation}\label{eq:bessel-power-series}
 I_m(a)=\sum_{\ell\ge0}
 \frac{(a/2)^{m+2\ell}}{\ell!(m+\ell)!},\qquad m\ge0.
\end{equation}

We now derive a generating function of these Bessel functions.  We write
\[
 e^{az/2}e^{a/(2z)}
 =\sum_{r,s\ge0}\frac{(a/2)^{r+s}}{r!s!}z^{r-s},\qquad\forall\,a\in\R,\,z\in\C\setminus\{0\}.
\]
For $m\ge0$, the coefficient of $z^m$ is obtained by setting
$r=s+m$ and is exactly the series for $I_m(a)$ above; the coefficient of
$z^{-m}$ is the same.  With the standard convention
$I_{-m}(a):=I_m(a)$ for $m\ge1$, this proves
\begin{equation}\label{eq:bessel-generating}
 \exp\!\left(\frac a2(z+z^{-1})\right)
 =\sum_{m\in\mathbb Z}I_m(a)z^m,\qquad \forall\,a\in\R,\,z\in\C\setminus\{0\}.
\end{equation}
Taking $z=e^{i\vartheta}$ for some $\vartheta\in\R$ gives
\begin{equation}\label{eq:bessel-fourier}
 \frac{e^{a\cos\vartheta}}{I_0(a)}
 =1+2\sum_{m\ge1}\frac{I_m(a)}{I_0(a)}\cos(m\vartheta).
\end{equation}
For bounded $2\pi$-periodic $f\colon\R\to\R$ and $m\in\Z$ define $\widehat f(m)\colonequals\int_{\mathbb T}f(\theta)e^{-im\theta}
\dd\mu(\theta)$.  The nonzero Fourier coefficients of the two arcs are
\[
 \widehat{\1_{E_p}}(m)=\frac{\sin(\pi mp)}{\pi m},\qquad
 \widehat{\1_{F_q}}(m)=(-1)^m\frac{\sin(\pi mq)}{\pi m}
 \quad(m\ne0),
\]
and their zeroth coefficients are $p$ and $q$, respectively.  The factor $(-1)^m$ comes
precisely from translating the second arc by $\pi$.  Substituting these
coefficients and \eqref{eq:bessel-fourier} into \eqref{eq:K-integral} gives
\begin{equation}\label{eq:K-def}
 K_a(p,q)=pq+2\sum_{m\ge1}(-1)^m\frac{I_m(a)}{I_0(a)}
 \frac{\sin(\pi mp)\sin(\pi mq)}{\pi^2m^2},
  \qquad\forall\,p,q\in[0,1],\,a\in\R.
\end{equation}

For fixed $a\ge0$, the coefficients $I_m(a)$ have inverse factorial decay in $m$ by \eqref{eq:bessel-power-series}.
Thus the series obtained by differentiating \eqref{eq:K-def} once in each
variable converges absolutely and uniformly on $[0,1]^2$.  Writing
$\partial_{pq}\colonequals\partial^2/(\partial p\,\partial q)$, we obtain, for all $a\in\R$ and $p,q\in[0,1]$,
\begin{align*}
 \partial_{pq}K_a(p,q)
 &=1+2\sum_{m\ge1}(-1)^m\frac{I_m(a)}{I_0(a)}
       \cos(\pi mp)\cos(\pi mq)\\
 &=1+\sum_{m\ge1}(-1)^m\frac{I_m(a)}{I_0(a)}
   \Big(\cos\bigl(\pi m(p-q)\bigr)
          +\cos\bigl(\pi m(p+q)\bigr)\Big).
\end{align*}
Replacing $\vartheta$ by $\pi+u$ in \eqref{eq:bessel-fourier} yields
\[
 1+2\sum_{m\ge1}(-1)^m\frac{I_m(a)}{I_0(a)}\cos(mu)
 =\frac{e^{-a\cos u}}{I_0(a)}.
\]
Averaging the two instances of this identity with $u=\pi(p-q)$ and
$u=\pi(p+q)$ proves
\begin{equation}\label{eq:Kpq}
 \partial_{pq}K_a(p,q)=
 \frac{e^{-a\cos(\pi(p-q))}+e^{-a\cos(\pi(p+q))}}{2I_0(a)}.
\end{equation}

Recall $b=1/3$ by \eqref{eq:b-def}.  For any $a\geq0$ define
\begin{equation}\label{kptjdef}
 k(a)\colonequals K_a(b,b),\quad P(a)\colonequals\partial_1K_a(b,b),\quad
 \tau(a)\colonequals4P(a)-3k(a),\quad j(a)\colonequals P(a)-3k(a).
\end{equation}
Here $\partial_1$ denotes differentiation in the first scalar argument.  Formula
\eqref{eq:bessel-power-series} gives $I_0(0)=1$ and $I_m(0)=0$ for every
$m\ge1$.  Hence by \eqref{kptjdef},
\begin{equation}\label{eq:zero-parameters}
 \begin{gathered}
 K_0(p,q)=pq,\qquad k(0)=K_0(b,b)=b^2=\frac19,\qquad
 P(0)=\partial_1K_0(b,b)=b=\frac13,\\
 \tau(0)=4P(0)-3k(0)=\frac43-\frac13=1,
 \qquad
 j(0)=P(0)-3k(0)=\frac13-\frac13=0.
 \end{gathered}
\end{equation}

\begin{lemma}[joint signed arc bound]\label{lem:joint-arc}
For every $a\ge0$ and $p,q\in\Delta_3$, with $b=1/3$ as above, put
$x=p-b\1$ and $y=q-b\1$.  Then
\begin{equation}\label{eq:joint-arc}
 \sum_{i=1}^3K_a(p_i,q_i)-3K_a(b,b)
 \ge \tau(a)\ip{x}{y}+j(a)(\norm{x}^2+\norm{y}^2).
\end{equation}
Moreover, $j(a)>0$ when $a>0$.
\end{lemma}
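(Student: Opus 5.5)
The plan is to reduce \eqref{eq:joint-arc} to a pointwise inequality for the single two-variable function $K_a$ on $[0,1]^2$, and then sum over $i$. Write $k=k(a)$, $P=P(a)$, $\tau=\tau(a)$, $j=j(a)$. Since $K_a(p,q)=K_a(q,p)$ (the kernel is even and the problem is symmetric under reflection), we have $\partial_2K_a(b,b)=P$ as well. Define the quadratic
\[
 g(p,q)\colonequals k+P\bigl((p-b)+(q-b)\bigr)+\tau(p-b)(q-b)+j\bigl((p-b)^2+(q-b)^2\bigr).
\]
Summing $g(p_i,q_i)$ over $i$, the linear terms vanish because $\sum_i x_i=\sum_i y_i=0$. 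The sum is then exactly $3k+\tau\ip{x}{y}+j(\norm{x}^2+\norm{y}^2)$. So \eqref{eq:joint-arc} follows from the claim
\[
 D(p,q)\colonequals K_a(p,q)-g(p,q)\geq0\qquad\text{for all }(p,q)\in[0,1]^2 .
\]

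The definitions $\tau=4P-3k$ and $j=P-3k$ are tuned to the boundary of the square. Since $K_a(0,q)=0$, a direct computation using $b=1/3$ gives $g(0,0)=0$, with $\partial_qg(0,q)|_{q=0}=-j$ and $\partial_q^2g(0,q)=2j$. Hence $g(0,q)=jq(q-1)$ and $D(0,q)=jq(1-q)\geq0$, and symmetrically $D(p,0)\geq0$. On the edges $p=1$ (where $K_a(1,q)=q$) and $q=1$, $D$ is again an explicit quadratic in one variable, and I would check its sign directly. For the interior I would use the representation
\[
 D(p,q)=D(p,0)+D(0,q)+\int_0^p\!\!\int_0^q\bigl(\partial_{pq}K_a-\tau\bigr),
\]
where $\partial_{pq}K_a$ is given explicitly by \eqref{eq:Kpq}. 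Near $(b,b)$, $D$ vanishes to first order, and its Hessian there must be checked to be positive semidefinite. Away from $(b,b)$, the explicit exponential formula \eqref{eq:Kpq} gives monotonicity and convexity information in each variable. Also $D(p,q)=D(q,p)$, so it suffices to treat $p\le q$.

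For the final assertion, note that $j=P-K_a(b,b)/b=b\,\frac{\dd}{\dd p}\bigl(K_a(p,b)/p\bigr)\big|_{p=b}$. Since $K_a(0,b)=0$, it is enough to show that $p\mapsto K_a(p,b)$ is strictly convex on $(0,b]$. Integrate the $p$-derivative of \eqref{eq:Kpq} in $q'$ from $0$ to $q$ (the $q'=0$ terms cancel). This gives
\[
 \partial_p^2K_a(p,q)=\frac{e^{-a\cos(\pi(p+q))}-e^{-a\cos(\pi(p-q))}}{2I_0(a)}.
\]
For $0<p,q\le b$ we have $\cos(\pi(p+q))<\cos(\pi(p-q))$, so this is strictly positive when $a>0$. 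Hence $j(a)>0$.

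The main obstacle is the global nonnegativity of $D$ on the whole square, uniformly in $a\geq0$. The mixed derivative $\partial_{pq}K_a-\tau$ is not signed, so a purely local argument will not suffice. I would first try to exploit that, for fixed $q$, $p\mapsto D(p,q)$ is controlled by $\partial_p^2K_a-2j$ from the formula above, reducing to a one-variable minimization. If that fails, I would fall back on the Fourier series \eqref{eq:K-def} in the ratios $I_m(a)/I_0(a)$, bounding the $m\geq2$ terms by the decay in \eqref{eq:bessel-power-series}.
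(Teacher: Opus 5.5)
Your reduction is the right one. Expanding in $p-b$, $q-b$ and using $\tau-j=3P$ and $\tau-4j=9k$, your $D$ is exactly $K_a(p,q)-\tau pq+j[p(1-p)+q(1-q)]$, which is the function whose nonnegativity on $[0,1]^2$ the paper proves. Your edge computation $D(0,q)=jq(1-q)$ is correct. Your proof of $j(a)>0$ via strict convexity of $p\mapsto K_a(p,b)$ on $(0,b]$ is also valid. (The paper instead writes $j=\frac12\int_0^{2/3}(3u-1)r_a(u)\dd u$, with $r_a(u)=e^{-a\cos(\pi u)}/I_0(a)$, and uses that $r_a$ increases.)

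However, the inequality $D\geq0$, which is the whole content of the lemma, is not established. Already on the edge $p=1$ you get $D(1,q)=(1-\tau)q+jq(1-q)$, so at $(1,1)$ you need $\tau(a)\leq1$. This is equivalent to $\frac32\int_0^{2/3}u\,r_a(u)\dd u\leq\int_{2/3}^1r_a(u)\dd u$. That is a genuine integral inequality; the paper proves it by the substitution $v=2/3+3u^2/4$ and monotonicity of $r_a$. A ``direct check'' of the quadratic does not supply it.

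For the interior, neither of your fallbacks works as stated. The quantity $\partial_p^2D=\partial_p^2K_a-2j$ has no sign, so fixing $q$ does not give a tractable one-variable problem. The Fourier route cannot be uniform in $a\geq0$: $I_m(a)/I_0(a)\to1$ as $a\to\infty$ for each fixed $m$, so the $m\geq2$ terms have no $a$-uniform decay.

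The missing idea is a critical-point argument. A negative minimum of $D$ would occur at an interior critical point.
\begin{itemize}
\item Off the diagonal: subtract the two gradient equations and use $\partial_1K_a-\partial_2K_a=G(q-p)$, where $G(\delta)=\int_0^\delta r_a$. With $\delta=q-p$ this gives $G(\delta)=(\tau-2j)\delta$. Then the second derivative of $D$ along $(1,-1)$ equals $-2\bigl(r_a(\delta)-G(\delta)/\delta\bigr)$. This is $<0$ because $r_a$ is even and, for $a>0$, strictly increasing on $[0,1]$, so the point is not a minimum.
\item On the diagonal: $g(p)=D(p,p)$ satisfies $g(0)=g(b)=g'(b)=0$, $g(1)=1-\tau\geq0$, and $g''(p)=2\bigl(r_a(2p)-\tau-2j\bigr)$. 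The latter is nonnegative exactly on some interval $[s,1-s]$. A second integral inequality, $\tau+2j=\frac92\int_0^{2/3}u\,r_a(u)\dd u\leq r_a(2/3)$, gives $s\leq b$. Then convexity on $[s,1-s]$ together with $g'(b)=0$, and concavity on the outer pieces, yield $g\geq0$.
\end{itemize}
Your proposed Hessian check at $(b,b)$ is only a local consequence of these facts and does not replace them.
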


\begin{proof}
If $a=0$, applying the identities in \eqref{eq:zero-parameters}
coordinatewise gives
\[
 \sum_{i=1}^3K_0(p_i,q_i)-3K_0(b,b)
 =\sum_{i=1}^3p_iq_i-3b^2
 =\ip{p-b\1}{q-b\1}.
\]
The last equality uses $p,q\in\Delta_3$ and $3b=1$.
Since $\tau(0)=1$ and $j(0)=0$, this is exactly equality in
\eqref{eq:joint-arc}.  Henceforth assume $a>0$ and suppress the dependence of
$k,P,\tau,j$ on $a$.  In the scalar argument below, $p,q$ denote arbitrary
numbers in $[0,1]$, rather than the vectors in the lemma statement.  We prove
the stronger scalar inequality
\begin{equation}\label{eq:scalar-arc}
 K_a(p,q)\ge \tau pq-j\{p(1-p)+q(1-q)\},\qquad\forall\,a>0,\,p,q\in[0,1].
\end{equation}
Let 
\begin{equation}\label{radef}
L\colonequals 2b=2/3,\qquad
 r_a(u)\colonequals\frac{e^{-a\cos(\pi u)}}{I_0(a)},\qquad\forall\,u\in\R.
\end{equation}
The function $r_a$ is even and $2$-periodic.
Integrating \eqref{eq:Kpq} from the lower edges of the square gives
\begin{equation}\label{eq:kP-integrals}
 P=\frac12\int_0^L r_a(u)\dd u,\qquad
 k=\frac12\int_0^L (L-u)r_a(u)\dd u.
\end{equation}

We will now prove the following inequalities.
\begin{align}
 j&=\frac12\int_0^L (3u-1)r_a(u)\dd u\ge0,\label{eq:j-positive}\\
 \tau&=\int_0^L (1+3u/2)r_a(u)\dd u\le1,\label{eq:tau-one}\\
 \tau+2j&=\frac92\int_0^L u r_a(u)\dd u\le r_a(L).
 \label{eq:curvature}
\end{align}
Since $r_a$ is increasing on $[0,L]$, the first inequality follows from
$(3u-1)\{r_a(u)-r_a(1/3)\}\ge0$ and
$\int_0^L(3u-1)\dd u=0$.  For the second, \eqref{imdef} and the substitution $\theta\mapsto\pi-\theta$ give
\[
 \int_0^1r_a(v)\dd v
 =\frac1{\pi I_0(a)}\int_0^\pi e^{-a\cos\theta}\dd\theta=1.
\]
Now use the change of variables $v=L+3u^2/4$:
\[
 \int_L^1 r_a(v)\dd v-\frac32\int_0^L u r_a(u)\dd u
 =\frac32\int_0^L u\{r_a(L+3u^2/4)-r_a(u)\}\dd u\ge0.
\]
Indeed, $r_a$ is increasing on $[0,1]$ and
$L+3u^2/4\ge u$ for $0\le u\le L$.
The last inequality in \eqref{eq:curvature} follows since
$(9u/2)\dd u$ is a probability measure on $[0,L]$.  Strict monotonicity
gives $j>0$ for $a>0$.  

Fix $a>0$.  For any $p,q\in[0,1]$ define
\begin{equation}\label{phidef}
 \Phi(p,q)\colonequals K_a(p,q)-\tau pq+j[p(1-p)+q(1-q)].
\end{equation}
Then $\Phi$ is nonnegative on all four boundary edges of the square $[0,1]^2$.  For example,
\[
 \Phi(p,0)=jp(1-p),\qquad
 \Phi(p,1)=(1-\tau)p+jp(1-p).
\]
Here we used $K_a(p,0)=0$ and $K_a(p,1)=p$ by \eqref{eq:K-def}.  The other two
boundary edges follow by symmetry.  Now, if $\Phi$ were negative somewhere on $[0,1]^2$,
its minimum would therefore occur at an interior
critical point.  To analyze such a point, put
\begin{equation}\label{gdef}
 G(\delta)\colonequals\int_0^\delta r_a(u)\dd u,\qquad\forall\,\delta\in[-1,1].
\end{equation}
For $\delta<0$, this is understood as a signed integral.
Since $r_a$ is even, $G$ is odd.  Also, $K_a(p,0)=K_a(0,q)=0$.
We can then integrate \eqref{eq:Kpq} from the corresponding lower
edge of $[0,1]^2$ to get
\begin{align}
 \partial_1K_a(p,q)
 &=\int_0^q\partial_{12}K_a(p,v)\dd v
 =\frac12\int_0^q[r_a(p-v)+r_a(p+v)]\dd v\notag\\
 &=\frac12[G(p+q)-G(p-q)],\label{eq:K-first-p}\\
 \partial_2K_a(p,q)
 &=\int_0^p\partial_{12}K_a(u,q)\dd u=\frac12\int_0^p[r_a(u-q)+r_a(u+q)]\dd u\notag\\
 &=\frac12[G(p+q)+G(p-q)],\qquad\forall\,p,q\in[0,1].
 \label{eq:K-first-q}
\end{align}
Subtracting \eqref{eq:K-first-p} and \eqref{eq:K-first-q} and using the oddness of $G$ gives
\begin{equation}\label{eq:K-first-difference}
 \partial_1K_a(p,q)-\partial_2K_a(p,q)
 =-G(p-q)=G(q-p)\qquad\forall\,p,q\in[0,1],\,a\in\R.
\end{equation}

At an interior critical point of $\Phi$, we have $\nabla\Phi=0$, which by \eqref{phidef} says
\begin{align*}
 0&=\partial_1K_a(p,q)-\tau q+j(1-2p),\\
 0&=\partial_2K_a(p,q)-\tau p+j(1-2q).
\end{align*}
If the critical point $(p,q)$ is off the diagonal, put $\delta\colonequals q-p\ne0$.  Subtracting the
second critical point equation from the first and applying
\eqref{eq:K-first-difference} gives
\begin{equation}\label{gdeq}
 G(\delta)=(\tau-2j)\delta.
\end{equation}

Now, differentiating \eqref{eq:K-first-p} and
\eqref{eq:K-first-q} once more gives
\begin{equation}\label{parteq}
\begin{aligned}
 \partial_{11}K_a(p,q)=\partial_{22}K_a(p,q)
  &=\frac12\{r_a(p+q)-r_a(p-q)\},\\
 \partial_{12}K_a(p,q)
  &=\frac12\{r_a(p+q)+r_a(p-q)\},\qquad\forall\,p,q\in[0,1],\,a\in\R.
\end{aligned}
\end{equation}
Consequently, the unnormalized directional derivative satisfies
\[
 \partial_{(1,-1)}^2\Phi(p,q)
 \colonequals\left.\frac{\dd^2}{\dd t^2}\Phi(p+t,q-t)\right|_{t=0},
\]
\[
 \partial_{(1,-1)}^2\Phi
 =\partial_{11}\Phi+\partial_{22}\Phi-2\partial_{12}\Phi
 \stackrel{\eqref{phidef}\wedge\eqref{parteq}}{=}-2r_a(\delta)+2\tau-4j
 \stackrel{\eqref{gdeq}}{=}-2\left(r_a(\delta)-\frac{G(\delta)}\delta\right).
\]
Since $r_a$ is even and strictly increasing on $[0,1]$ by \eqref{radef}, while $0<|\delta|<1$, we have
\[
 \frac{G(\delta)}\delta
 \stackrel{\eqref{gdef}}{=}\frac1{|\delta|}\int_0^{|\delta|}r_a(u)\dd u
 <r_a(|\delta|)=r_a(\delta).
\]
Thus $\partial_{(1,-1)}^2\Phi<0$, so such a point cannot be a local
minimum.  

It remains to consider the case $\delta=0$, i.e. $p=q$.  Define then
\[
 g(p)\colonequals\Phi(p,p)=K_a(p,p)-\tau p^2+2jp(1-p),\qquad\forall\,p\in[0,1].
\]
By \eqref{eq:K-def}, $K_a(0,0)=0$ and
$K_a(1,1)=1$, while symmetry and the definitions \eqref{kptjdef} give
$K_a(b,b)=k$ and
$\partial_1K_a(b,b)=\partial_2K_a(b,b)=P$.  Therefore
\begin{align*}
 9g(b)&=9k-\tau+4j=0,\\
 g'(b)&=2P-\frac23\tau+\frac23j=0,
\end{align*}
where we used $\tau=4P-3k$ and $j=P-3k$ by \eqref{kptjdef}.  Consequently,
\[
 g(0)=g(b)=g'(b)=0,\qquad g(1)=1-\tau\ge0.
\]
Equations \eqref{eq:K-first-p} and \eqref{eq:K-first-q} also give
\[
 \frac{\dd}{\dd p}K_a(p,p)=G(2p),\qquad
 \frac{\dd^2}{\dd p^2}K_a(p,p)=2r_a(2p).
\]
It follows that
\begin{equation}\label{eq:g-second}
 g''(p)=2[r_a(2p)-\tau-2j].
\end{equation}
Since $p\mapsto r_a(2p)$ increases on $[0,1/2]$ and decreases on $[1/2,1]$,
the set on which \eqref{eq:g-second} is nonnegative is an interval of the form
$[s,1-s]$ for some $0\leq s\leq1/2$.  Inequality \eqref{eq:curvature} gives $s\le b$.  Convexity and $g'(b)=0$ give $g\ge0$ on $[s,1-s]$.  On each
complementary interval in $[0,1]\setminus[s,1-s]$, $g$ is concave, so it lies above the line joining
its nonnegative endpoint values.  Thus \eqref{eq:scalar-arc} holds.

To conclude the proof of \eqref{eq:joint-arc} from \eqref{eq:scalar-arc},
note that $\sum_{i=1}^{3} x_i=\sum_{i=1}^{3} y_i=0$ and hence
\[
 \sum_{i=1}^3p_iq_i=\frac13+\ip{x}{y},\qquad
 \sum_{i=1}^3p_i(1-p_i)=\frac23-\norm{x}^2,\qquad
 \sum_{i=1}^3q_i(1-q_i)=\frac23-\norm{y}^2.
\]
Summing \eqref{eq:scalar-arc} over $i$ gives a constant
$(\tau-4j)/3=3k$, since $\tau-4j=9k$ by \eqref{kptjdef} yielding
\eqref{eq:joint-arc}.
\end{proof}

\subsection{Gain from the first angular mode}

We need a modest strengthening of Lemma~\ref{lem:joint-arc} at the
negative endpoint.  

We first record the homogeneous form of the preceding argument.  Let $r$
be a continuous, nonnegative, even, $2$-periodic function that is nondecreasing on $[0,1]$.  Define $\overline K_r\colon[0,1]^2\to\R$ by
\begin{equation}\label{eq:Kr-mixed}
\overline K_r(p,q)
\colonequals\int_{0}^{p}\int_{0}^{q}
\frac12[r(u-v)+r(u+v)]\dd v \dd u
,\qquad\forall\,p,q\in[0,1].
\end{equation}
Put
\begin{equation}\label{krcon}
 k_r= \overline K_r(b,b),\qquad P_r=\partial_1 \overline K_r(b,b),\qquad
  \tau_r=4 P_r-3 k_r,\qquad  j_r= P_r-3 k_r.
\end{equation}
The proof of Lemma~\ref{lem:joint-arc} gives, without any change other
than normalization,
\begin{equation}\label{eq:homogeneous-arc}
 \sum_{i=1}^3\overline K_r(p_i,q_i)-3 k_r
 \ge  \tau_r\ip{x}{y}+ j_r(\norm{x}^2+\norm{y}^2).
\end{equation}
Indeed, if $M=\int_0^1r(u)\dd u$, then the boundary identity used there is
$\overline K_r(p,1)=Mp$, and the inequality $\tau_r\le1$ is replaced by the
homogeneous inequality $\tau_r\le M$.  All other steps are linear in $r$.
Equivalently, one may apply the normalized lemma to $r/M$ and then multiply
by $M$; the case $M=0$ is immediate.  If $r$ is only nondecreasing rather
than strictly increasing, apply the argument to
$r+\varepsilon r^{(1)}$ and let $\varepsilon\downarrow0$.  All quantities
in \eqref{eq:homogeneous-arc} depend continuously and linearly on $r$.

When $r^{(1)}(u)
=1-\cos(\pi u)$ for all $u\in[0,1]$, we compute from \eqref{krcon} that
\begin{equation}\label{eq:first-mode-def}
 r^{(1)}(u)
 \colonequals1-\cos(\pi u),\,\,
 j_{(1)}\colonequals j_{r^{(1)}}=\frac{9-\sqrt3\,\pi}{4\pi^2},\,\,
 \tau_{(1)}\colonequals\tau_{r^{(1)}}=1+\frac{9-4\sqrt3\,\pi}{4\pi^2},\,\,\forall\,u\in[0,1].
\end{equation}
Integrating \eqref{eq:Kr-mixed} and using the cosine addition formula gives
\begin{equation}\label{eq:first-mode-K}
 \overline K_{(1)}(p,q)\colonequals \overline K_{r^{(1)}}(p,q)
 =pq-\frac{\sin(\pi p)\sin(\pi q)}{\pi^2}.
\end{equation}
Direct substitution at $p=q=b$ gives the constants in
\eqref{eq:first-mode-def}.

The following is a variant of \cite[Lemma~5.4]{Heilman2023}.

\begin{lemma}[simplex trigonometric estimate]\label{lem:simplex-trig}
For any $m=(m_1,m_2,m_3)\in\Delta_3$,
\begin{equation}\label{eq:simplex-trig}
 3+2\sum_{i=1}^3\cos(2\pi m_i)
 \ge6\sum_{i=1}^3(m_i-b)^2.
\end{equation}
\end{lemma}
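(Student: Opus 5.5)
We will prove \eqref{eq:simplex-trig} as an inequality for a continuous function on the compact triangle $\Delta_3$. Put
\[
F(m)\colonequals 3+2\sum_{i=1}^3\cos(2\pi m_i)-6\sum_{i=1}^3(m_i-b)^2 ,
\]
and show $F\ge0$ by locating its minimum. We expect $F$ to vanish at two kinds of points. It vanishes at the center $m=b\1$, where $\cos(2\pi/3)=-1/2$ makes both sides zero. It also vanishes at the edge midpoints, e.g.\ $(1/2,1/2,0)$, where both sides equal $1$. Hence there is no room for a crude global estimate, and the argument will split into a boundary part and an interior part.

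\emph{Boundary.} By symmetry it suffices to take $m_3=0$ and $m_1=t$, $m_2=1-t$, and to write $s=t-\tfrac12\in[-\tfrac12,\tfrac12]$. The left side becomes $5+4\cos(2\pi t)=5-4\cos(2\pi s)=1+8\sin^2(\pi s)$. The right side becomes $6\bigl[2(t-\tfrac12)^2+\tfrac16\bigr]=1+12s^2$. Jordan's inequality $|\sin(\pi s)|\ge2|s|$ for $|s|\le1/2$ then gives $1+8\sin^2(\pi s)\ge1+32s^2\ge1+12s^2$. So $F\ge0$ on $\partial\Delta_3$, with equality only at the edge midpoints.

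\emph{Near the center.} Write $x=m-b\1$, so that $\sum x_i=0$. Then
\[
\cos(2\pi m_i)=-\tfrac12\cos(2\pi x_i)-\tfrac{\sqrt3}{2}\sin(2\pi x_i),
\]
and the sum-to-product identities for angles summing to zero give
\[
3+2\sum_i\cos(2\pi m_i)=4\Bigl(1-\prod_i\cos(\pi x_i)+\sqrt3\prod_i\sin(\pi x_i)\Bigr).
\]
Expanding to second order, the left side is $2\pi^2\norm{x}^2+O(\norm{x}^3)$. Since $2\pi^2>6$, the center is a strict local minimum of $F$ with a positive margin. This also gives a quantitative neighborhood where $F\ge0$, once the cubic term is bounded by $\sqrt3\,\pi^3\prod|x_i|$.

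\emph{Interior critical points.} If $F<0$ somewhere, its minimum is at an interior critical point of $F$ restricted to $\Delta_3$. The Lagrange condition reads $h(m_i)=\lambda$ for $i=1,2,3$, where $h(u)\colonequals-4\pi\sin(2\pi u)-12u$. On $[0,1]$, $h'(u)=-8\pi^2\cos(2\pi u)-12$ changes sign exactly twice, so $h$ is piecewise monotone with three pieces. We would use this to show that at any critical point two coordinates coincide, say $m_1=m_2=t$ and $m_3=1-2t$. The alternative, with all three coordinates on different monotone pieces, would be ruled out using the constraint $\sum m_i=1$ together with the shape of $h$; this is the step we expect to be the main obstacle. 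If a clean argument fails, the fallback is a certified grid check of $F$ away from the center and the edge midpoints, controlled by the Lipschitz constant of $F$, in the spirit of the numerical certificates used earlier in the paper. On the symmetric line, $F$ reduces to the one-variable inequality
\[
3+4\cos(2\pi t)+2\cos(4\pi t)\ge 36(t-b)^2,\qquad t\in[0,\tfrac12].
\]
Via $c=\cos(2\pi t)$ the left side equals $(2c+1)^2$. Since $2c+1=4\sin\bigl(\pi(t+b)\bigr)\sin\bigl(\pi(b-t)\bigr)$, it suffices to have $\bigl|4\sin(\pi(t+b))\sin(\pi(t-b))\bigr|\ge6|t-b|$. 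This follows from concavity-type lower bounds for $\sin$ on the relevant ranges, with equality only at $t=b$ and $t=1/2$. Combining the three cases gives $F\ge0$ on $\Delta_3$.
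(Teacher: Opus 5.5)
Your route is sound and differs from the paper's. As written, though, it leaves open exactly the step you flag, and that step closes in one line. Since $h'(u)=-8\pi^2\cos(2\pi u)-12$ depends only on $\cos(2\pi u)$, its two zeros in $[0,1]$ are some $u_1\in(1/4,1/2)$ and $u_2=1-u_1$. Suppose the coordinates of an interior critical point were pairwise distinct. Then each lies in a different monotone piece of $h$, so after ordering, $m_{(2)}\ge u_1$ and $m_{(3)}\ge u_2$. Hence $m_{(1)}=1-m_{(2)}-m_{(3)}\le 0$, which contradicts interiority. So two coordinates coincide, and no grid check is needed.

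Your one-variable inequality also goes through with chord bounds, but they must be taken on the short ranges. The global bound $|\sin \pi u|\ge 2|u|$ combined with $\sin(\pi(t+b))\ge 1/2$ only gives $4|t-b|$.
\begin{itemize}
\item For $t\in[b,1/2]$, use $\sin(\pi(t-b))\ge 3(t-b)$ and $\sin(\pi(t+b))\ge 1/2$. This gives exactly $6(t-b)$, with equality only at $t=b$ and $t=1/2$.
\item For $t\in[0,b]$, use $\sin(\pi(b-t))\ge \frac{3\sqrt3}{2}(b-t)$ and $\sin(\pi(t+b))\ge \frac{\sqrt3}{2}$. This gives $9(b-t)$.
\end{itemize}
Your near-center expansion is unnecessary. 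The center is a critical point on the symmetric line and is covered by the one-variable inequality.

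The paper avoids Lagrange multipliers altogether. It fixes $z=\min_i m_i\in[0,1/3]$ and $w=m_2-m_3\in[0,1-3z]$, and writes the deficit as $F_z(w)=3z(2-3z)-3w^2-4\cos(\pi z)[\cos(\pi w)-\cos(\pi z)]$. It then observes $F_z'(0)=0$ and $F_z'''(w)=-4\pi^3\cos(\pi z)\sin(\pi w)\le 0$. So $F_z'$ is concave and starts at $0$, and the minimum over $w$ sits at $w=0$ or $w=1-3z$. These two endpoint families are precisely the halves $t\in[1/3,1/2]$ and $t\in[0,1/3]$ of your symmetric line $(t,t,1-2t)$. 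The paper checks them using $\cos(\pi z)\ge 1-3z/2$ and the triple-angle formula.

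Both arguments therefore end at the same one-variable inequality. The paper's slicing handles the boundary (the case $z=0$) and the reduction to the symmetric line in one stroke, via a sign condition on a third derivative. Yours is the standard compactness and critical-point route: it needs a separate edge check and the monotone-piece analysis of $h$, but no clever parametrization.
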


\begin{proof}
Choose $z\colonequals\min_{i=1,2,3} m_i\in[0,1/3]$ and, after permuting the
coordinates, assume that $m_1=z$ and $m_2\ge m_3$.  Put
$w=m_2-m_3$.  Since $m_2+m_3=1-z$, we have
\begin{equation}\label{mdefs}
 m_1=z,\qquad
 m_2=\frac{1-z+w}{2},\qquad
 m_3=\frac{1-z-w}{2}.
\end{equation}
The condition $m_3\ge z$ shows that $0\le w\le1-3z$.

We now calculate the difference between the two sides of
\eqref{eq:simplex-trig}.  By the cosine addition formula,
\begin{align*}
 3+2\sum_{i=1}^3\cos(2\pi m_i)
 &=3+2\cos(2\pi z)
   +2\cos\bigl(\pi(1-z+w)\bigr)
   +2\cos\bigl(\pi(1-z-w)\bigr)\\
 &=3+2\cos(2\pi z)
   -4\cos(\pi z)\cos(\pi w)\\
 &=1+4\cos^2(\pi z)
   -4\cos(\pi z)\cos(\pi w)
   =1+4\cos(\pi z)[\cos\pi z-\cos\pi w].
\end{align*}
On the other hand, using $\sum_{i=1}^3 m_i=1$ then \eqref{mdefs}
\[
 6\sum_{i=1}^3(m_i-1/3)^2
 =6\Big(-\frac13+\sum_{i=1}^3m_i^2\Big)
 =6z^2+3(1-z)^2+3w^2-2
 =9z^2-6z+1+3w^2.
\]
Consequently, the left side of \eqref{eq:simplex-trig} minus its
right side is
\[
 F_z(w)\colonequals
 3z(2-3z)-3w^2
 -4\cos(\pi z)[\cos(\pi w)-\cos(\pi z)].
\]
It therefore remains to prove $F_z(w)\ge0$ for
$0\le z\le1/3$ and $0\le w\le1-3z$.  We have
\[
 F_z'(0)=0,\qquad
 F_z'''(w)=-4\pi^3\cos(\pi z)\sin(\pi w)\le0,\qquad\forall\,0\leq w\leq 1.
\]
Thus $F_z''$ is decreasing on $[0,1]$, so $F_z'$ can change from positive to negative
at most once on $[0,1-3z]$.  The minimum of $F_z$ is therefore attained at
$w=0$ or $w=1-3z$.  

It remains to show that $F_z(0)\geq0$ and $F_z(1-3z)\geq0$.

Put $t\colonequals3z$.  The condition $F_z(0)\geq0$ is equivalent to
\[
 4c(1-c)\le t(2-t),\qquad c\colonequals\cos(\pi t/3).
\]
Since $t\mapsto\cos(\pi t/3)$ is concave on $[0,1]$, it is lower bounded by its secant line, i.e. $c\ge1-t/2$, which implies
$4y(1-y)\le2t-t^2$ when $0\le y=1-c\le t/2$ by concavity of $y\mapsto y(1-y)$ on $[0,1/2]$.    

When $w=1-t$, the triple angle formula $\cos\pi t=4\cos^3(t\pi/3)-3\cos(t\pi/3)=4c^3 - 3c$ gives
\[
 F_z(1-t)=(4c^2-1)^2-4(1-t)^2.
\]
Again $c\ge1-t/2$ gives
$4c^2-1\ge4(1-t/2)^2-1\ge2(1-t)$ $\forall$ $t\in[0,1]$, proving $F_z(1-t)\geq0$.
\end{proof}

We now prove the following improvement to \eqref{eq:homogeneous-arc} when $r=r^{(1)}$.

\begin{lemma}[first-mode endpoint gain]\label{lem:first-mode-gain}
For any $p,q\in\Delta_3$, put $x\colonequals p-b\1$ and $y\colonequals q-b\1$.  Then
\begin{equation}\label{eq:first-mode-gain}
 \sum_{i=1}^3\overline K_{(1)}(p_i,q_i)-3\overline K_{(1)}(b,b)
 \ge(\tau_{(1)}+2 j_{(1)})\ip{x}{y}
 +\frac32 j_{(1)}(\norm{x}^2+\norm{y}^2).
\end{equation}
\end{lemma}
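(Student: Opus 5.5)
The plan is to reduce \eqref{eq:first-mode-gain} to an elementary trigonometric inequality using the closed form \eqref{eq:first-mode-K}, and then to apply Lemma~\ref{lem:simplex-trig} to the midpoint of $p$ and $q$, which also lies in $\Delta_3$.

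\textbf{Step 1: reduce to a trigonometric inequality.} By \eqref{eq:first-mode-K} and $\sum_i p_iq_i=\frac13+\ip{x}{y}$, we get $3\overline K_{(1)}(b,b)=\frac13-\frac{9}{4\pi^2}$. Hence the left side of \eqref{eq:first-mode-gain} equals
\[
\ip{x}{y}+\frac{1}{\pi^2}\Big(\frac94-\sum_{i=1}^3\sin(\pi p_i)\sin(\pi q_i)\Big).
\]
From \eqref{eq:first-mode-def} one computes
\[
\tau_{(1)}+2j_{(1)}=1+\frac{27-6\sqrt3\pi}{4\pi^2},\qquad \tfrac32 j_{(1)}=\frac{3(9-\sqrt3\pi)}{8\pi^2}.
\]
After multiplying by $4\pi^2$ and cancelling the common $\ip{x}{y}$ term, it suffices to prove
\[
9-4\sum_i\sin(\pi p_i)\sin(\pi q_i)\ \ge\ c_1\ip{x}{y}+c_2(\norm{x}^2+\norm{y}^2),
\]
where $c_1=27-6\sqrt3\pi\approx-5.65$ and $c_2=\frac32(9-\sqrt3\pi)\approx5.34$.

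\textbf{Step 2: split into sum and difference modes.} Put $m=(p+q)/2\in\Delta_3$, $v=p-q=x-y$ and $u=x+y=2(m-b\1)$. The product-to-sum formula $2\sin A\sin B=\cos(A-B)-\cos(A+B)$ gives
\[
9-4\sum_i\sin(\pi p_i)\sin(\pi q_i)=\Big(3+2\sum_i\cos(2\pi m_i)\Big)+\Big(6-2\sum_i\cos(\pi v_i)\Big).
\]
Lemma~\ref{lem:simplex-trig} applied to $m$ bounds the first bracket below by $6\norm{m-b\1}^2=\frac32\norm{u}^2$. The second bracket equals $4\sum_i\sin^2(\pi v_i/2)$. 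On the right side, $\norm{x}^2+\norm{y}^2=\frac12(\norm{u}^2+\norm{v}^2)$ and $\ip{x}{y}=\frac14(\norm{u}^2-\norm{v}^2)$. So the right side equals $\alpha\norm{u}^2+\beta\norm{v}^2$ with
\[
\alpha=\frac{c_2}2+\frac{c_1}4\approx1.256<\frac32,\qquad \beta=\frac{c_2}{2}-\frac{c_1}4\approx4.08.
\]
The $u$-part therefore closes with slack $(\frac32-\alpha)\norm{u}^2\approx0.24\norm{u}^2$. What remains is
\[
4\sum_i\sin^2(\pi v_i/2)+\big(\tfrac32-\alpha\big)\norm{u}^2\ \ge\ \beta\norm{v}^2 .
\]

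\textbf{Step 3: the main obstacle.} Jordan's inequality $\sin(\pi z/2)\ge z$ for $z\in[0,1]$ only yields $4\norm{v}^2$, which is slightly less than $\beta\norm{v}^2$. The deficit comes only from coordinates with $|v_i|$ close to $1$, because near $v=0$ the correct constant is $\pi^2\approx9.87$. I would first use the sharper bound $4\sin^2(\pi z/2)\ge \beta z^2$ for $|z|\le z_0$, where $z_0$ is roughly $0.9$. Then I would treat the case where some $|v_i|>z_0$ separately. Since $p,q\in[0,1]$ and $\sum_i v_i=0$, such a coordinate forces $p_i\approx1$ and $q_i\approx0$ (or vice versa). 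This makes $\norm{u}^2$ bounded below, because $x_i+y_i\approx\frac13$, and it forces the other $v_j$ to be moderate, where sine has surplus over $\beta z^2$. The plan is to absorb the small deficit $(\beta-4)z^2\le0.09$ using the $0.24\norm{u}^2$ slack together with the surplus from the other coordinates. This is a one- or two-parameter check after reducing by symmetry, possibly closed with a short calculus or interval-arithmetic argument.
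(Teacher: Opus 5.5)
Your Steps 1 and 2 are correct. They reproduce the paper's deficit computation \eqref{eq:first-mode-deficit}, with your $u,v$ equal to the paper's $2M,2\delta$, including the application of Lemma~\ref{lem:simplex-trig} to the midpoint. The gap is in Step 3.

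First, the claim that a coordinate with $|v_i|>z_0$ forces the other $v_j$ to be moderate is false: $p=e_1$, $q=e_2$ gives $v=(1,-1,0)$. Second, at exactly this configuration your reduced inequality is an \emph{equality}. There $4\sum_i\sin^2(\pi v_i/2)=8$, $\|u\|^2=\frac23$ and $\|v\|^2=2$. Using $\beta=\frac34\sqrt3\pi$ and $\frac32-\alpha=\frac94\sqrt3\pi-12$, both sides equal $\frac32\sqrt3\pi$. In fact \eqref{eq:first-mode-gain} itself is sharp at $p=e_i$, $q=e_j$ with $i\neq j$.

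So there is no small deficit for slack to absorb. A coarse case split or an interval-arithmetic check cannot close the argument without a separate exact local analysis at these corners. That analysis is also not a one- or two-parameter check: near a corner the configuration space is four-dimensional. The reduced inequality does appear to be true, since the one-sided directional derivatives into $\Delta_3\times\Delta_3$ at the corners are positive. But nothing in the proposal proves it.

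The missing idea is that this regime never needs the trigonometric bound. Since $\frac32-\alpha=3(\beta-4)$, Jordan's inequality alone gives
\[
 4\sum_i\sin^2(\pi v_i/2)+\Big(\frac32-\alpha\Big)\|u\|^2-\beta\|v\|^2
 \ge(\beta-4)\big(3\|u\|^2-\|v\|^2\big)=2(\beta-4)S,
\]
where $S\colonequals\|x\|^2+\|y\|^2+4\langle x,y\rangle=\frac32\|u\|^2-\frac12\|v\|^2$. This finishes the case $S\ge0$, which includes the tight corners, where $S=0$.

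For $S<0$ the paper abandons the trigonometric route. It invokes the already established homogeneous arc inequality \eqref{eq:homogeneous-arc} with $r=r^{(1)}$. The right side of \eqref{eq:first-mode-gain} minus the right side of \eqref{eq:homogeneous-arc} is exactly $\frac{j_{(1)}}{2}S$. This is negative because $j_{(1)}>0$, so \eqref{eq:homogeneous-arc} already implies \eqref{eq:first-mode-gain} in that case. The two-case argument avoids any local analysis at the corners.
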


\begin{proof}
Let
\[
 m\colonequals\frac{p+q}{2},\qquad M\colonequals m-b\1,\qquad
 \delta\colonequals\frac{p-q}{2},
\]
and put
\begin{equation}\label{sdef}
 S\colonequals\norm{x}^2+\norm{y}^2+4\ip{x}{y}
 =\norm{M+\delta}^2 +\norm{M-\delta}^2 + 4\langle M+\delta,M-\delta\rangle
   =6\norm{M}^2-2\norm{\delta}^2.
\end{equation}
Subtract the right side of \eqref{eq:first-mode-gain} from its left side.
Using \eqref{eq:first-mode-K} and
\[
 \sin(\pi(m_i+\delta_i))\sin(\pi(m_i-\delta_i))
 =\sin^2(\pi m_i)-\sin^2(\pi\delta_i),
\]

\[
\sum_{i=1}^3\overline K_{(1)}(p_i,q_i)-3\overline K_{(1)}(b,b)
=\langle x,y\rangle\
+\frac1{\pi^2}
\Big(\frac94-\sum_{i=1}^3\sin^2(\pi m_i)
+\sum_{i=1}^3\sin^2(\pi\delta_i)\Big).
\]

Subtracting the right-hand side of \eqref{eq:first-mode-gain} from this, we get a deficit
\[
\Delta
=
\frac1{\pi^2}
\Big(\frac94-\sum_{i=1}^3\sin^2(\pi m_i)
+\sum_{i=1}^3\sin^2(\pi\delta_i)\Big)
+\bigl(1- \tau_{(1)}-2 j_{(1)}\bigr)\langle x,y\rangle
-\frac32 j_{(1)}\bigl(\|x\|^2+\|y\|^2\bigr).
\]
Letting $\mathcal D\colonequals 4\pi^2 \Delta$, and using \eqref{eq:first-mode-def} with $\|x\|^2+\|y\|^2=2\|M\|^2 + 2\|\delta\|^2$ and $\langle x,y\rangle=\|M\|^2-\|\delta\|^2$,

\begin{equation} \label{eq:first-mode-deficit}
\mathcal D
=3+2\sum_{i=1}^3\cos(2\pi m_i)
 +4\sum_{i=1}^3\sin^2(\pi\delta_i)
+9(\sqrt3\,\pi-6)\norm{M}^2
 -3\sqrt3\,\pi\norm{\delta}^2.
\end{equation}
Lemma~\ref{lem:simplex-trig} and the concave secant line inequality $\sin(\pi|u|)\ge2|u|$ for $|u|\le1/2$ imply
\[
 \mathcal D\ge
 (3\sqrt3\,\pi-16)(3\norm{M}^2-\norm{\delta}^2)
 \stackrel{\eqref{sdef}}{=}\frac{3\sqrt3\,\pi-16}{2}S.
\]
Since $3\sqrt3\,\pi-16>0$, this proves the assertion \eqref{eq:first-mode-gain} when $S\ge0$.
When $S<0$, the right side of \eqref{eq:first-mode-gain} minus the
right side of the homogeneous estimate \eqref{eq:homogeneous-arc} is
$j_{(1)}S/2<0$ by \eqref{eq:first-mode-def}.  Hence \eqref{eq:homogeneous-arc} proves the case $S<0$ of \eqref{eq:first-mode-gain}.  So, in all cases, \eqref{eq:first-mode-gain} holds.
\end{proof}

Define
\begin{equation}\label{eq:theta-def}
 \theta(a)
 \colonequals\frac{j_{(1)}}{2}\frac{ae^{-a}}{I_0(a)},\qquad\forall\,a\geq0.
\end{equation}
Since
\[
 r_a(u)
 \stackrel{\eqref{radef}}{=}\frac{e^{-a\cos(\pi u)}}{I_0(a)}
 =r_a(0)e^{a(1-\cos(\pi u))},\qquad\forall\,u\in\R,
\]
we have the exact decomposition
\begin{equation}\label{eq:angular-decomposition}
 r_a(u)=r_a(0)[1+a r^{(1)}(u)]+r_{\mathrm{rem}}(u),\qquad\forall\,u\in\R,
\end{equation}
where
\[
 r_{\mathrm{rem}}(u)\colonequals r_a(0)
 \left(e^{a(1-\cos(\pi u))}-1-a(1-\cos(\pi u))\right),\qquad\forall\,u\in\R.
\]
The remainder is nonnegative and nondecreasing on $[0,1]$, since
\[
 r_{\mathrm{rem}}'(u)=r_a(0)a\pi\sin(\pi u)
 \left(e^{a(1-\cos(\pi u))}-1\right)\ge0,\qquad\forall\,a\geq0,\,u\in[0,1].
\]

Observe that \eqref{eq:Kr-mixed} and all quantities defined in \eqref{krcon} are linear in $r$.  We then apply \eqref{eq:homogeneous-arc} to three different choices of $r$, corresponding to the three functions on the right side of \eqref{eq:angular-decomposition}: (i) the constant function $r=r_a(0)$, (ii) $r\colonequals r_{a}(0)ar^{(1)}$, and (iii) $r=r_{\rm rem}$.  By linearity, and recalling $K_a=\overline K_{r_a}$, the quantity $\sum_{i=1}^3K_a(p_i,q_i)-3K_a(b,b)$ is then lower bounded by a sum of three terms corresponding to (i), (ii) and (iii).  In cases (i) and (iii) we apply \eqref{eq:homogeneous-arc}, and in case (ii) we apply the improved bound \eqref{eq:first-mode-gain}.  Summing up all three contributions then gives
\begin{corollary}
For any $a\geq0$ and $p,q\in\Delta_3$, put $x=p-b\1$ and $y=q-b\1$.  Then
\begin{equation}\label{eq:endpoint-joint-arc}
 \sum_{i=1}^3K_a(p_i,q_i)-3K_a(b,b)
 \ge(\tau(a)+4\theta(a))\ip{x}{y}
 +(j(a)+\theta(a))(\norm{x}^2+\norm{y}^2).
\end{equation}
\end{corollary}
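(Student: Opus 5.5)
The plan is to use the decomposition \eqref{eq:angular-decomposition} together with the linearity of $r\mapsto\overline K_r$ and of the constants in \eqref{krcon}. The case $a=0$ is immediate: there $\theta(0)=0$ and Lemma~\ref{lem:joint-arc} already gives \eqref{eq:endpoint-joint-arc}. So assume $a>0$.

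\textbf{Step 1: identify $K_a$ with $\overline K_{r_a}$.} By \eqref{eq:Kpq}, $\partial_{pq}K_a(p,q)=\frac12[r_a(p-q)+r_a(p+q)]$. The series \eqref{eq:K-def} gives $K_a(p,0)=K_a(0,q)=0$. Integrating twice from the lower edges of $[0,1]^2$ therefore gives $K_a=\overline K_{r_a}$ on $[0,1]^2$. Consequently $k_{r_a}=k(a)$, $P_{r_a}=P(a)$, $\tau_{r_a}=\tau(a)$ and $j_{r_a}=j(a)$, by comparing \eqref{kptjdef} with \eqref{krcon}.

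\textbf{Step 2: split $r_a$ and check each piece is admissible.} Set $c\colonequals r_a(0)=e^{-a}/I_0(a)>0$ and write
\[
r_a=c\cdot 1+ca\,r^{(1)}+r_{\mathrm{rem}},
\]
with $r^{(1)}(u)=1-\cos(\pi u)$ now regarded on all of $\R$. Each summand must satisfy the hypotheses of \eqref{eq:homogeneous-arc}: continuous, nonnegative, even, $2$-periodic, and nondecreasing on $[0,1]$.
\begin{itemize}
\item The constant $c$ satisfies these trivially.
\item The function $1-\cos(\pi u)$ is even and $2$-periodic, and it increases on $[0,1]$.
\item For $r_{\mathrm{rem}}$, evenness and periodicity are inherited from $\cos(\pi u)$. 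Nonnegativity follows from $e^x\ge1+x$. Monotonicity on $[0,1]$ was verified just before the corollary.
\end{itemize}
By linearity in $r$,
\[
\sum_i K_a(p_i,q_i)-3K_a(b,b)=\sum_{\text{pieces}}\Big(\sum_i\overline K_{r}(p_i,q_i)-3k_{r}\Big).
\]

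\textbf{Step 3: bound each piece and sum.} Apply \eqref{eq:homogeneous-arc} to the constant piece and to $r_{\mathrm{rem}}$. For the middle piece, apply Lemma~\ref{lem:first-mode-gain} multiplied by $ca\ge0$. Compared with \eqref{eq:homogeneous-arc} for that piece, the middle piece yields the additional amount
\[
ca\Big(2j_{(1)}\ip{x}{y}+\tfrac12 j_{(1)}(\norm{x}^2+\norm{y}^2)\Big).
\]
By \eqref{eq:theta-def}, $\frac{ca\,j_{(1)}}{2}=\theta(a)$, so this extra term equals $4\theta(a)\ip{x}{y}+\theta(a)(\norm{x}^2+\norm{y}^2)$. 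Summing the three homogeneous bounds reassembles the constants $\tau_{r_a}=\tau(a)$ and $j_{r_a}=j(a)$, again by linearity of \eqref{krcon} in $r$. Adding the extra term then gives exactly \eqref{eq:endpoint-joint-arc}.

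\textbf{Main obstacle.} There is no deep step here; the argument is bookkeeping. The point needing most care is Step 1 together with the legitimacy of applying \eqref{eq:homogeneous-arc} to pieces that are merely nondecreasing rather than strictly increasing. That was handled in the text by the $\varepsilon$-perturbation argument with $r+\varepsilon r^{(1)}$. One must also keep track of the normalization so that the factor in front of $r^{(1)}$ is precisely $ca$, which yields $\theta(a)$ rather than a multiple of it.
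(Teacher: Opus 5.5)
Your proposal is correct and is essentially the paper's argument. The paper likewise splits $r_a$ via \eqref{eq:angular-decomposition} into the constant, first-mode and remainder pieces, applies \eqref{eq:homogeneous-arc} to the first and last pieces and Lemma~\ref{lem:first-mode-gain} (scaled by $r_a(0)a$) to the middle one, and reassembles by linearity in $r$ using $r_a(0)a\,j_{(1)}/2=\theta(a)$. Your explicit checks that $K_a=\overline K_{r_a}$ and that each piece satisfies the hypotheses of \eqref{eq:homogeneous-arc} only fill in details the paper states in passing.
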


Note that this is an improvement to \eqref{eq:joint-arc}.  The first gained term comes from the right side of \eqref{eq:first-mode-gain} where $r_{a}(0)a2j_{(1)}=4\theta(a)$ via \eqref{eq:theta-def}, and the second gained term comes from $r_a(0)aj_{(1)}/2=\theta(a)$.

\section{The radial operator}

We first derive the radial law used in this section.  Recall that
$(\xi,\eta)$ is a $\sigma$-correlated pair of standard Gaussian vectors in
$\R^2$ and
\[
 t\colonequals\frac{\|\xi\|^2}{2},\qquad s\colonequals\frac{\|\eta\|^2}{2}.
\]
Thus the corresponding Euclidean radii are $\sqrt{2t}$ and $\sqrt{2s}$.
Equivalently, if $Z$ is an independent standard Gaussian vector, then
$\eta=\sigma\xi+\sqrt d\,Z$, where $d=1-\sigma^2$.  Since
$\|\xi\|^2$ and $\|\eta\|^2$ are each $\chi^2_2$-distributed, both $t$ and $s$
have the exponential probability law
\[
 \dd\nu(r)=e^{-r}\dd r,\qquad r\ge0.
\]
\[
 \E t=\E s=1,\qquad \operatorname{Var}(t)=\operatorname{Var}(s)=1,
 \qquad \operatorname{Cov}(t,s)=\sigma^2.
\]

The joint Gaussian density of $(\xi,\eta)$ is
\[
 \frac1{(2\pi)^2d}
 \exp\left(-\frac{\|\xi\|^2+\|\eta\|^2-2\sigma\langle\xi,\eta\rangle}{2d}\right).
\]
Suppose we parametrize $\xi,\eta$ as
\[
 \xi=\sqrt{2t}(\cos\theta,\sin\theta),\qquad
 \eta=\sqrt{2s}(\cos\phi,\sin\phi),\qquad\theta,\phi\in[0,2\pi).
\]
Observe $\langle\xi,\eta\rangle=2\sqrt{ts}\cos(\theta-\phi)$.  So the joint density of
$(t,s,\theta,\phi)$, relative to
$\dd t\dd s\dd\mu(\theta)\dd\mu(\phi)$ is
\begin{equation}\label{eq:polar-joint-density}
 \frac1d\exp\left(-\frac{t+s}{d}
       +\frac{2\sigma\sqrt{ts}}d\cos(\theta-\phi)\right).
\end{equation}
Put
\begin{equation}\label{adef}
 a=a(t,s)
 \colonequals\frac{2\sigma\sqrt{ts}}d.
\end{equation}
Since
$I_0(a)=\int_{\mathbb T} e^{a\cos u}\dd\mu(u)$ by \eqref{imdef}, integrating
\eqref{eq:polar-joint-density} over the two angular variables gives the
ordinary joint density of $(t,s)$ relative to $\dd t\,\dd s$:
\begin{equation}\label{eq:radial-lebesgue-density}
 f_{t,s}(t,s)=\frac1d e^{-(t+s)/d}
 I_0\left(\frac{2\sigma\sqrt{ts}}d\right),\qquad t,s\ge0.
\end{equation}
On the other hand, $\nu\otimes\nu$ has Lebesgue density
$e^{-(t+s)}$.  Therefore the Radon--Nikodym derivative of the joint radial
law with respect to $\nu\otimes\nu$ is
\begin{equation}\label{eq:R0-def}
 R_0(t,s)
 =\frac{f_{t,s}(t,s)}{e^{-(t+s)}}
 =\frac1d e^{-\sigma^2(t+s)/d}
 I_0\left(\frac{2\sigma\sqrt{ts}}d\right),\qquad\forall\,t,s\geq0.
\end{equation}
Thus, for every measurable $D\subseteq[0,\infty)^2$,
\[
 \Pr[(t,s)\in D]=\iint_D R_0(t,s)\dd\nu(t)\dd\nu(s).
\]
In particular, since each marginal is $\nu$,
\begin{equation}\label{r0int}
 \int_0^\infty R_0(t,s)\dd\nu(s)=1
 \quad\text{for $\nu$-almost every $t$},
\end{equation}
and likewise with $t$ and $s$ interchanged.
Equivalently, the complete radial-angular factorization is
\[
 \dd\mathbb P
 =R_0(t,s)\dd\nu(t)\dd\nu(s)\,
   \frac{e^{a\cos(\theta-\phi)}}{I_0(a)}
   \dd\mu(\theta)\dd\mu(\phi).
\]

We now apply \eqref{eq:endpoint-joint-arc} at each pair $(t,s)$ of radii.  Put 
\begin{align}
 \mathsf K_0(t,s)&\colonequals R_0(t,s)\tau(a)
 \stackrel{\eqref{eq:tau-one}}{=}\frac{e^{-\sigma^2(t+s)/d}}{2d}
 \int_0^{2/3}(2+3v)e^{-a\cos(\pi v)}\dd v,\label{eq:radial-K}\\
 J_0(t)&\colonequals \int_0^\infty R_0(t,s)j(a)\dd\nu(s)
 \stackrel{\eqref{eq:j-positive}}{=}\int_{0}^{\infty}\frac{e^{-\sigma^2(t+s)/d}}{2d}\int_0^L (3u-1)e^{-a\cos(\pi u)}\dd u \dd\nu(s),\label{eq:J-def}\\
 \mathsf B(t,s)&\colonequals R_0(t,s)\theta(a)
 \stackrel{\eqref{eq:theta-def}\wedge\eqref{adef}}{=}\frac{j_{(1)}}{2d}a\,e^{-\sigma^2(t+s)/d-a},\label{bdef}\\
 B_{(1)}(t)&\colonequals \int_0^\infty\mathsf B(t,s)\dd\nu(s)
 \stackrel{\eqref{adef}}{=}\sigma\sqrt{t}\frac{j_{(1)}}{d^2} \int_{0}^{\infty}\sqrt{s} e^{-2\sigma\sqrt{ts}/d}e^{-\sigma^2(t+s)/d - s}\dd s,\label{eq:B1-def}\\
 \mathsf K_*(t,s)&\colonequals \mathsf K_0(t,s)+4\mathsf B(t,s),
 \qquad J_*(t)\colonequals J_0(t)+B_{(1)}(t).\label{eq:star-kernels}
\end{align}
If $\mathcal K_*$ is the
integral operator with kernel $\mathsf K_*$ and $M_{J_*}$ is multiplication
by $J_*$, then integration of \eqref{eq:endpoint-joint-arc} in $t,s$ gives the
quadratic expression
\begin{equation}\label{eq:integrated-star-form}
 \sum_{i=1}^3\Big(
 \ip{x_i}{\mathcal K_*y_i}
 +\ip{x_i}{M_{J_*}x_i}+\ip{y_i}{M_{J_*}y_i}\Big),
\end{equation}
where $x_i\colonequals p_i-b$, $y_i\colonequals q_i-b$, and $p_i,q_i\colon(0,\infty)\to[0,1]$ are defined after \eqref{eq:bilinear-target}, for all $i=1,2,3$, and here $\langle\cdot,\cdot\rangle$ denotes the $L_2$ inner product on $(0,\infty)$ with respect to the exponential measure $\nu$.

\subsection{Operator norm inequality}

Our first task is to bound the first term in \eqref{eq:integrated-star-form} in Lemma \ref{speclem} below.

We record an elementary weighted Hilbert--Schmidt estimate used twice
below.  If $J>0$ almost everywhere and $\mathsf L\colon(0,\infty)^2\to\R$ is a symmetric bounded kernel,
put
\[
 h(\mathsf L,J)
\colonequals\Big(\int_{0}^{\infty}\int_{0}^{\infty}\frac{\mathsf L(t,s)^2}{J(t)J(s)}\dd\nu(t)\dd\nu(s)\Big)^{1/2}.
\]
Then every bounded real $z\colon(0,\infty)\to\R$ satisfies
\begin{equation}\label{eq:weighted-HS}
 |\ip{z}{\mathcal Lz}|
 \le h(\mathsf L,J)\ip{z}{M_Jz}.
\end{equation}
Indeed, with $f\colonequals\sqrt Jz$ and
$\mathsf H(s,t)\colonequals\mathsf L(s,t)/\sqrt{J(t)J(s)}$, the Cauchy--Schwarz inequality on the space $(0,\infty)^2$ gives
\begin{equation}\label{hsineq}
 |\ip{z}{\mathcal Lz}|=|\ip{f}{\mathcal Hf}|
 \le\norm{\mathsf H}_{L_2(\nu\otimes\nu)}\norm{f}_{L_2(\nu)}^2.
\end{equation}

For the endpoint certificate, define
\begin{align}
 u_*&\colonequals\mathcal K_*\1,\qquad C_*\colonequals\ip{\1}{u_*},\label{usdef}\\
 \mathsf R_*(t,s)&\colonequals\mathsf K_*(t,s)-\frac{u_*(t)u_*(s)}{C_*},\qquad\forall\,t,s>0,
 \label{eq:Rstar-def}\\
 \mathsf E_*(t,s)&=\mathsf K_*(t,s)-u_*(s)-u_*(t)+C_*\qquad\forall\,t,s>0.\label{eq:Estar-def}
\end{align}
Then $R_{*}$ has zero row and column integrals, and
if $\int_{0}^{\infty} z\,d\nu=0$, 
$
  \langle z,E_{*}z\rangle
  =
  \langle z,K_{*}z\rangle.
$
Set
\begin{align}
 h_{\rm lower}
 =h_{\rm lower}(\sigma)&\colonequals\Big(\int_{0}^{\infty}\int_{0}^{\infty}\frac{\mathsf R_*(t,s)^2}{J_*(t)J_*(s)}
             \dd\nu(t)\dd\nu(s)\Big)^{1/2},\label{eq:hall-def}\\
 h_{\rm upper}
 =h_{\rm upper}(\sigma)&\colonequals\Big(\int_{0}^{\infty}\int_{0}^{\infty}\frac{\mathsf E_*(t,s)^2}{J_*(t)J_*(s)}
             \dd\nu(t)\dd\nu(s)\Big)^{1/2}.\label{eq:hmz-def}
\end{align}

\begin{lemma}[Spectral bounds]\label{speclem}
Let $z\colon(0,\infty)\to\R$ be bounded.  Then
\begin{equation}\label{eq:endpoint-lower-operator}
 \ip{z}{\mathcal K_*z}\ge-h_{\rm lower}\ip{z}{M_{J_*}z}.
\end{equation}

\begin{equation}\label{eq:endpoint-upper-operator}
 \ip{z}{\mathcal K_*z}\le h_{\rm upper}\ip{z}{M_{J_*}z}
 \qquad\text{when $\int_{0}^{\infty} z(s)\dd\nu(s)=0$}.
\end{equation}
\end{lemma}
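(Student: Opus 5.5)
The plan is to reduce both inequalities to the weighted Hilbert--Schmidt estimate \eqref{eq:weighted-HS}, applied to a modified kernel. For the lower bound we use $\mathsf R_*$, and for the upper bound we use $\mathsf E_*$. In each case we must relate $\ip{z}{\mathcal K_*z}$ to $\ip{z}{\mathcal R_* z}$ or $\ip{z}{\mathcal E_* z}$ by an exact identity or an inequality in the right direction. Throughout we need $J_*>0$ almost everywhere, so that $h(\cdot,J_*)$ makes sense. This follows from $j(a)>0$ for $a>0$ (Lemma \ref{lem:joint-arc}), since $a(t,s)>0$ for $t,s>0$, together with $R_0>0$ and $\theta(a)\ge0$. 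We also need the kernels to be symmetric and bounded, or at least to make the integrals finite; where $h$ is infinite the statement is vacuous.

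\emph{Lower bound.} First, $\mathsf K_*\ge0$ pointwise: $\tau(a)>0$ is visible from the integral formula in \eqref{eq:radial-K}, and $\mathsf B\ge0$. Next we show that the operator $\mathcal K_*$ is positive semidefinite on $L_2(\nu)$. The cleanest route is the Fourier/Laguerre structure: $R_0(t,s)e^{a\cos u}/I_0(a)$ is the joint density of a Gaussian pair, so kernels of the form $\int R_0\,\frac{e^{-a\cos(\pi v)}}{I_0(a)}w(v)\dd v$ should be checked via \eqref{eq:bessel-generating}, which expands $e^{-a\cos\pi v}$ in $I_m(a)\cos(m\pi v)$. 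Each $R_0 I_m(a)/I_0(a)$ is a positive semidefinite kernel, since it is the $m$-th angular Fourier mode of the Ornstein--Uhlenbeck kernel and so diagonal in a Laguerre basis with nonnegative eigenvalues $\sigma^{2j+m}$. The signs of the resulting coefficients $\int_0^{2/3}(2+3v)\cos(m\pi v)(-1)^m\dd v$ are not obviously nonnegative, however. So instead I would use the rank-one subtraction directly. For any $z$, write $z=c\1+z_0$ in the $\mathcal K_*$-geometry, with $c=\ip{\1}{\mathcal K_*z}/C_*$. This gives the identity
\[
\ip{z}{\mathcal K_*z}=\ip{z}{\mathcal R_*z}+\frac{\ip{u_*}{z}^2}{C_*}\ge\ip{z}{\mathcal R_*z},
\]
which holds because $C_*=\ip{\1}{\mathcal K_*\1}>0$ by positivity of $\mathsf K_*$. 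This uses no semidefiniteness at all, only $C_*>0$. Then \eqref{eq:weighted-HS} with $\mathsf L=\mathsf R_*$ gives $\ip{z}{\mathcal R_*z}\ge-h_{\rm lower}\ip{z}{M_{J_*}z}$, which is \eqref{eq:endpoint-lower-operator}.

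\emph{Upper bound.} If $\int z\dd\nu=0$, then $\ip{z}{\1}=0$, so the terms $u_*(s)$, $u_*(t)$ and $C_*$ in \eqref{eq:Estar-def} contribute $\ip{z}{u_*}\ip{\1}{z}$ and similar expressions, all of which vanish. This yields the exact identity $\ip{z}{\mathcal E_*z}=\ip{z}{\mathcal K_*z}$ already noted before the lemma. Applying \eqref{eq:weighted-HS} with $\mathsf L=\mathsf E_*$ gives \eqref{eq:endpoint-upper-operator}. Symmetry of $\mathsf R_*$ and $\mathsf E_*$ follows from the symmetry of $R_0$ and $a(t,s)$ in $(t,s)$.

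\emph{Main obstacle.} The delicate points are integrability and technical rather than conceptual. We need $u_*$ to be finite, which holds because $\mathsf K_*$ is integrable against $\nu$ via \eqref{r0int} and boundedness of $\tau,\theta$. We also need $\ip{z}{u_*}$ to be well defined for bounded $z$, and the Cauchy--Schwarz step \eqref{hsineq} to be valid when $\mathsf L$ is merely in the weighted $L_2$ space rather than bounded. I would handle this by truncating $z$ to compact support in $(0,\infty)$ and passing to the limit by monotone or dominated convergence. If $\int z^2J_*\dd\nu$ or $h$ is infinite, the inequality holds trivially.
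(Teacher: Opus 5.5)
Your argument is correct and is essentially the paper's proof. For \eqref{eq:endpoint-lower-operator} you drop the nonnegative rank-one term $\ip{u_*}{z}^2/C_*$, using $C_*>0$, and apply \eqref{eq:weighted-HS} to $\mathsf R_*$; for \eqref{eq:endpoint-upper-operator} the $u_*$ and $C_*$ terms of $\mathsf E_*$ integrate to zero against mean-zero $z$. The detour through positive semidefiniteness and the decomposition $z=c\1+z_0$ is unnecessary, since $\ip{z}{\mathcal K_*z}=\ip{z}{\mathcal R_*z}+\ip{u_*}{z}^2/C_*$ is immediate from \eqref{eq:Rstar-def}, and your integrability concern is settled directly by the Cauchy--Schwarz step \eqref{hsineq}, which gives absolute convergence whenever the weighted Hilbert--Schmidt norm is finite.
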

\begin{proof}
We first identify the rank-one part.  By definition,
\[
 u_*(t)=\int_0^\infty\mathsf K_*(t,s)\dd\nu(s),\qquad
 C_*=\int_0^\infty u_*(t)\dd\nu(t)
   =\int_0^\infty\int_0^\infty\mathsf K_*(t,s)\dd\nu(t)\dd\nu(s)>0.
\]
The residual $R_*$ has zero row and column integrals:
\[
 \int_0^\infty\mathsf R_*(t,s)\dd\nu(s)
 =u_*(t)-\frac{u_*(t)}{C_*}\int_0^\infty u_*(s)\dd\nu(s)=0,
\]
and similarly in the other variable.  With the convention
$(u_*\otimes u_*)z=u_*\,\ip{u_*}{z}$, we have
\[
 \mathcal K_*=\mathcal R_*+\frac{u_*\otimes u_*}{C_*},
 \qquad
 \ip{z}{\mathcal K_*z}
 =\ip{z}{\mathcal R_*z}+\frac{|\ip{z}{u_*}|^2}{C_*}.
\]
The normalization here is $C_*=\ip{\1}{u_*}$, rather than
$\norm{u}_{L_2(\nu)}^2$: it is chosen so that $\mathcal R_*\1=0$ and is not
asserted to make $(u_*\otimes u_*)/C_*$ an orthogonal projection.
The rank-one term is nonnegative, so its omission together with
\eqref{eq:weighted-HS} and \eqref{hsineq} with $J=J_*$ prove \eqref{eq:endpoint-lower-operator}.

If $\int_{0}^{\infty} z(s)\dd\nu(s)=0$, then the $u_*,C_*$ terms in \eqref{eq:Estar-def} vanish by Fubini's Theorem:
\[
\int_{0}^{\infty}\int_{0}^{\infty}[-u_*(t)-u_*(s)+C_*]z(t)z(s)\dd\nu(t)\dd\nu(s)=0.
\]
Consequently, \eqref{eq:endpoint-upper-operator} holds by \eqref{eq:weighted-HS}, \eqref{hsineq} and \eqref{eq:Estar-def}.
\end{proof}

\subsection{Eigenvalue Bounds}

We now bound the quantities \eqref{eq:hall-def} and \eqref{eq:hmz-def} from Lemma \ref{speclem}.

We next give the formulas and directed bounds used to certify
$h_{\rm upper}<2$ and $h_{\rm lower}<2$.
Recall that $0<\sigma\leq1/2$, and
\begin{equation}\label{eq:endpoint-parameters}
 d=d_{\sigma}\colonequals1-\sigma^2
 \qquad
 \kappa =\kappa_{\sigma}\colonequals\frac{\sigma^2}{d},
 \qquad
 q=q_{\sigma,t}\colonequals\frac{\sigma\sqrt t}{\sqrt{d}}
\end{equation}
For the uncorrected radial terms, put
\begin{equation}\label{bvdef}
 b_v(t)
 \colonequals q\cdot\cos(\pi v),\qquad\forall\,v\in[0,1],\,t\geq0.
\end{equation}
\begin{equation}\label{eq:F-def}
 F_v(t)\colonequals
 e^{-\kappa t}
 \left[1-\sqrt\pi b_v(t)e^{b_v(t)^2}
                  \operatorname{erfc}(b_v(t))\right],\qquad\forall\,t\geq0.
\end{equation}
Here
$
 \operatorname{erfc}(z)\colonequals\frac2{\sqrt\pi}
       \int_z^\infty e^{-u^2}\dd u$, $\forall\,z\in\R
$
is the complementary error function.
The same Gaussian integration as in the previous section gives 
\begin{lemma}\label{ujlem}
$u_*=u_0+4B_{(1)}$ by \eqref{usdef} and $J_*=J_0+B_{(1)}$ by \eqref{eq:star-kernels} where
\begin{equation}\label{eq:base-uJ}
 u_0(t)\colonequals\mathcal K_0\mathbf{ 1}=\frac12\int_0^{2/3}(2+3v)F_v(t)\dd v,\qquad
 J_0(t)\stackrel{\eqref{eq:J-def}}{=}\frac12\int_0^{2/3}(3v-1)F_v(t)\dd v,\qquad\forall\,t\geq0.
\end{equation}
\end{lemma}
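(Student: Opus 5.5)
The statement is a pair of explicit formulas, so I would compute $u_0=\mathcal K_0\1$ and $J_0$ directly from the kernels \eqref{eq:radial-K} and \eqref{eq:J-def}. The decompositions $u_*=u_0+4B_{(1)}$ and $J_*=J_0+B_{(1)}$ then follow at once from linearity and \eqref{eq:star-kernels}: we have $\mathcal K_*\1=\mathcal K_0\1+4\int\mathsf B(\cdot,s)\dd\nu(s)=u_0+4B_{(1)}$ by \eqref{eq:B1-def}, and $J_*=J_0+B_{(1)}$ holds by definition. The remaining content is the single Gaussian integral
\[
\int_0^\infty \frac{e^{-\sigma^2(t+s)/d}}{d}\,e^{-a(t,s)\cos(\pi v)}\dd\nu(s)=F_v(t),
\]
with $a=2\sigma\sqrt{ts}/d$ as in \eqref{adef}. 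Once this is established, Fubini's theorem applied inside \eqref{eq:radial-K} (weight $(2+3v)/2$ on $[0,2/3]$) and inside \eqref{eq:J-def} (weight $(3u-1)/2$) gives \eqref{eq:base-uJ}. Fubini is justified because the integrand is nonnegative in the first case and bounded by an integrable function in the second.

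To compute the integral, I would write $\dd\nu(s)=e^{-s}\dd s$ and combine exponents. This gives $e^{-\kappa t}\,d^{-1}\int_0^\infty e^{-s/d}e^{-2c\sqrt{s}}\dd s$, where $-\sigma^2 s/d-s=-s/d$ and $c=\sigma\sqrt t\cos(\pi v)/d$. Next substitute $s=d w^2$, so that $\dd s=2dw\dd w$. The integral becomes $e^{-\kappa t}\int_0^\infty 2w\,e^{-w^2-2bw}\dd w$, with $b=c\sqrt d=q\cos(\pi v)=b_v(t)$ by \eqref{eq:endpoint-parameters} and \eqref{bvdef}.

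Completing the square gives $e^{-w^2-2bw}=e^{b^2}e^{-(w+b)^2}$. Write $2w=2(w+b)-2b$. The first piece integrates to $e^{b^2}e^{-b^2}=1$. The second piece is $-2b\,e^{b^2}\int_b^\infty e^{-u^2}\dd u=-\sqrt\pi\,b\,e^{b^2}\operatorname{erfc}(b)$. Together these give exactly $F_v(t)$ in \eqref{eq:F-def}. The computation is valid for every real $b$, including negative $b$ when $v>1/2$, since $\operatorname{erfc}$ is defined on all of $\R$.

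I expect no real obstacle. The only care needed is bookkeeping: the factor $1/(2d)$ in \eqref{eq:radial-K} has to be matched with the $1/d$ that cancels against the Jacobian $2dw$, which leaves the stated $1/2$. One also has to confirm that the integration range $[0,2/3]$ and the weights are carried through correctly from \eqref{eq:tau-one} and \eqref{eq:j-positive}.
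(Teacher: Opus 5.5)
Your proposal is correct and follows essentially the same route as the paper. The paper also evaluates the inner $s$-integral by combining exponents, substituting $s=dy^2$ and completing the square, which gives $d\,e^{\kappa t}F_v(t)$; it then exchanges the order of integration in \eqref{eq:radial-K} and \eqref{eq:J-def}. Your explicit Fubini justification and your check of $u_*=u_0+4B_{(1)}$ via \eqref{eq:B1-def} supply details the paper leaves implicit.
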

\begin{proof}
Using $d=1-\sigma^2$ and $a$ from \eqref{adef}, then substituting $s=d\cdot y^2$, we get
\begin{equation}\label{ivc}
\begin{aligned}
\mathcal I_v(t)
&\colonequals
\int_0^\infty
e^{-\sigma^2s/d}
e^{-a\cos(\pi v)}\dd \nu(s)
\stackrel{\eqref{adef}}{=}\int_0^\infty
\exp\left(
-\frac{s}{d}
-\frac{2\sigma\sqrt{ts}}d\cos(\pi v)
\right)\dd s\\
&
=d\int_0^\infty
\exp\left(
-y^2
-2\sigma\sqrt{t}y\cos(\pi v)
\right)2y\dd y
=2d\int_0^\infty y e^{-y^{2} - 2b_v(t)y}\dd y\\
&=2d\int_0^\infty y e^{-(y+b_v(t))^2 + b_v(t)^2}\dd y
=2de^{b_v(t)^2}\int_{b_v(t)}^\infty (y-b_v(t)) e^{-y^2}\dd y\\
&=de^{b_v(t)^2}[e^{-b_v(t)^2}-b_v(t)\sqrt{\pi}\mathrm{erfc}(b_v(t))]
=d[1-\sqrt{\pi}b_v(t)e^{b_v(t)^2}\mathrm{erfc}(b_v(t))]
=de^{\kappa t}F_v(t).
\end{aligned}
\end{equation}

\begin{align*}
u_0(t)
&=\int_0^\infty\mathsf K_0(t,s)\dd\nu(s)
\stackrel{\eqref{eq:radial-K}}{=}
\frac{e^{-\sigma^2t/d}}{2d}
\int_0^{2/3}(2+3v)
\left[
\int_0^\infty
e^{-\sigma^2s/d}
e^{-a\cos(\pi v)}\dd\nu(s)
\right]dv\\
&\stackrel{\eqref{ivc}}{=}\frac{e^{-\sigma^2t/d}}{2d}
\int_0^{2/3}(2+3v)
de^{\kappa t}F_v(t)\dd v
\stackrel{\eqref{eq:endpoint-parameters}}{=}\frac12\int_0^{2/3}(2+3v)F_v(t)\dd v.
\end{align*}

Likewise, applying the same calculation to $J_0$,
\[
J_0(t)
\stackrel{\eqref{eq:J-def}}{=}
\int_{0}^{\infty}\frac{e^{-\sigma^2(t+s)/d}}{2d}\int_0^L (3u-1)e^{-a\cos\pi u}\dd u \dd\nu(s)
=\frac12\int_0^{2/3}(3v-1)F_v(t)\dd v.
\]

\end{proof}

From \eqref{eq:base-uJ} and \eqref{eq:F-def}, $F_v(x^2)=1 - \sqrt{\pi}(\sigma/\sqrt{d})x\cos(\pi v)+O(x^2)$ uniformly in $v$, so
\begin{equation}\label{j0as}
J_0(x^2)=\frac{1}{2}\int_{0}^{2/3}(3v-1)(-\sqrt{\pi}(\sigma/\sqrt{d})x\cos\pi v + O(x^2))
\stackrel{\eqref{eq:first-mode-def}}{=}x\sqrt{\pi}\frac{\sigma}{\sqrt{d}}j_{(1)}+O(x^2).
\end{equation}

We now derive a similar asymptotic expansion for $B_{(1)}$.  

If $t=x^2$ and
$q=\sigma x/\sqrt{d}$, direct integration in the second radial variable gives
\begin{equation}\label{eq:B1-closed}
 B_{(1)}(x^2)\stackrel{\eqref{eq:B1-def}}{=}e^{-\kappa x^2}\frac{j_{(1)}q}{2}
 \left[\sqrt\pi(1+2q^2)e^{q^2}\operatorname{erfc}(q)-2q\right].
\end{equation}
\begin{proof}
Using the substitution $y=\sqrt{1+\sigma^2/d}\sqrt{s}=\sqrt{s/d}$ by \eqref{eq:endpoint-parameters},
\begin{flalign*}
B_{(1)}(t)
&=\sigma\sqrt{t}\frac{j_{(1)}}{d^2} \int_{0}^{\infty}\sqrt{s} e^{-2\sigma\sqrt{ts}/d}e^{-\sigma^2(t+s)/d - s}\dd s\\
&=\sqrt{t}j_{(1)}\frac{\sigma}{d^2} e^{-\frac{\sigma^2}{d}t}\,2d^{3/2}\int_{0}^{\infty} y^2 e^{-2y\sigma\sqrt{t}/\sqrt{d}}e^{-y^2}\dd y\\
&=\sqrt{t}j_{(1)}\frac{2\sigma}{\sqrt{d}}\int_{0}^{\infty} y^2 e^{-(y + \sigma\sqrt{t/d})^2}\dd y
=\sqrt{t}j_{(1)}\frac{2\sigma}{\sqrt{d}}
\int_{\sigma\sqrt{t/d}}^{\infty} (y-\sigma \sqrt{t/d})^2 e^{-y^2}\dd y\\
&=2j_{(1)}q\int_{\sigma\sqrt{t/d}}^{\infty} \Big(y^2 - 2y\sigma\sqrt{t/d}+\sigma^2 t/d\Big) e^{-y^2}\dd y\\
&=2j_{(1)}q
\Big(\frac{\sqrt{\pi}}{4}\mathrm{erfc}(q) + (1/2)qe^{-q^2} - qe^{-q^2}+q^2\frac{\sqrt{\pi}}{2}\mathrm{erfc}(q)\Big)\\
\end{flalign*}

\end{proof}
So, $\lim_{x\to0^+}B_{(1)}(x^2)/x\stackrel{\eqref{eq:B1-closed}}{=}\frac{j_{(1)}\sigma}{2\sqrt{d}}\sqrt{\pi}$, $\lim_{x\to0^+}J_0(x^2)/x\stackrel{\eqref{j0as}}{=}\frac{j_{(1)}\sigma}{\sqrt{d}}\sqrt{\pi}$, and by Lemma \ref{ujlem},
$$
\lim_{x\to0^+}J_*(x^2)/x=j_{(1)}\frac{3}{2}\sqrt{\pi}\frac{\sigma}{\sqrt{d}}
\stackrel{\eqref{eq:first-mode-def}}{>}0.
$$
Then, for any $x>0$,
\begin{equation}\label{eq:Jstar-origin}
 \widetilde J_*(x)\colonequals\frac{J_*(x^2)}x,\qquad
 \widetilde J_*(0)=j_{(1)}\frac{3}{2}\sqrt{\pi}\frac{\sigma}{\sqrt{d}}
 \stackrel{\eqref{eq:first-mode-def}}{>}0.
\end{equation}

Consequently, $\widetilde J_*$ is continuous and positive near zero ($J_*(x)>0$ $\forall$ $x>0$ by \eqref{eq:Jstar-lower} below).  The change of variables $t=x^2$, $s=y^2$ therefore transforms both
\eqref{eq:hall-def} and \eqref{eq:hmz-def} into
\begin{equation}\label{eq:endpoint-HS-xy}
 \Big(4\int_0^\infty\int_0^\infty e^{-x^2-y^2}
 \frac{[\mathsf L(x^2,y^2)]^2}
      {\widetilde J_*(x)\widetilde J_*(y)}\dd x\dd y\Big)^{1/2},
\end{equation}
with $\mathsf L=\mathsf R_*$ or $\mathsf E_*$.  We can now state the eigenvalue bounds for \eqref{eq:hall-def} and \eqref{eq:hmz-def}
using \eqref{eq:endpoint-HS-xy}.

\begin{lemma}[Uniform Eigenvalue Bound]\label{lem:certificate}
Uniformly for all $\sigma\in[1/43,1/2]$, we have
\begin{equation}\label{eq:cert-bounds}
 h_{\rm upper}(\sigma)<1.967<2,\qquad h_{\rm lower}(\sigma)<0.824<2.
\end{equation}
\end{lemma}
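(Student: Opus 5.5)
The plan is a computer-assisted certification of the two double integrals in \eqref{eq:endpoint-HS-xy}, uniform in $\sigma\in[1/43,1/2]$. We mirror the approach of Proposition \ref{prop:certificate}: rigorous interval arithmetic on a compact region, plus an analytic tail bound.

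\textbf{Step 1: explicit integrands.} Every ingredient of the integrands has an explicit form. By Lemma \ref{ujlem} and \eqref{eq:B1-closed}, $u_*$ and $J_*$ are one-dimensional integrals over $v\in[0,2/3]$ of the explicit function $F_v$ from \eqref{eq:F-def}, plus the closed form for $B_{(1)}$. The kernel $\mathsf K_*$ is given by \eqref{eq:radial-K} and \eqref{bdef}. The constant $C_*$ is then a single integral of $u_*$.

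\textbf{Step 2: removing the singularity at the origin.} In the variables $t=x^2$, $s=y^2$, the factor $\widetilde J_*(x)=J_*(x^2)/x$ extends continuously to $x=0$ with the positive value \eqref{eq:Jstar-origin}. We need a certified positive lower bound for $\widetilde J_*$ on all of $[0,\infty)$ (the forward reference \eqref{eq:Jstar-lower}).
\begin{itemize}
\item Since $J_0\ge0$ by \eqref{eq:j-positive}, it suffices to use $J_*\ge B_{(1)}$.
\item The closed form \eqref{eq:B1-closed} gives $\widetilde J_*(x)\ge c\,e^{-\kappa x^2}$ for small $x$, with $c$ explicit. This uses the standard bounds on the Mills ratio $\sqrt\pi e^{q^2}\operatorname{erfc}(q)$.
\item For large $x$, combine this with the lower bound $J_0(t)\gtrsim \sigma\sqrt t$ coming from \eqref{j0as}.
\end{itemize}
With this lower bound in hand, the integrand $e^{-x^2-y^2}\mathsf L(x^2,y^2)^2/(\widetilde J_*(x)\widetilde J_*(y))$ is bounded and continuous on $[0,5]^2$.

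\textbf{Step 3: the compact region $[0,5]^2$.}
\begin{itemize}
\item Partition $[0,5]^2$ into small cells.
\item On each cell, enclose $F_v$, $\operatorname{erfc}$ and the $v$-integrals using monotonicity in $x$ and $v$. Note that $b_v$ is monotone in both variables, and $z\mapsto ze^{z^2}\operatorname{erfc}(z)$ is monotone.
\item This gives rigorous upper bounds for $\mathsf L^2$ and lower bounds for $\widetilde J_*$ on each cell.
\item Sum the resulting cell bounds, as in the accompanying code.
\end{itemize}

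\textbf{Step 4: uniformity in $\sigma$.} Partition $[1/43,1/2]$ into subintervals and use monotonicity of each building block in $\sigma$ to enclose over the whole subinterval. The lower endpoint $1/43$ is presumably chosen because smaller $\sigma$ needs separate treatment: both $\mathsf L$ and $J_*$ scale like $\sigma$ near $\sigma=0$, so the ratio degenerates.

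\textbf{Step 5: the tail.} On $[0,\infty)^2\setminus[0,5]^2$:
\begin{itemize}
\item Use $|\mathsf K_*|\lesssim R_0$ together with $\tau\le1$ and $\theta\le j_{(1)}/2$.
\item Use $|u_*|\le1+O(1)$, by \eqref{r0int}.
\item Use the lower bound $\widetilde J_*(x)\gtrsim \sigma e^{-\kappa x^2}$.
\end{itemize}
Since $\kappa\le1/3$, the factor $e^{-x^2-y^2}$ beats $e^{2\kappa(x^2+y^2)}$ from the denominator. The tail integral is therefore at most a Gaussian tail of size about $e^{-c\cdot25}$, which is negligible.

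\textbf{Main obstacle.} The hardest part is $h_{\rm upper}$. Its bound $1.967$ sits close to $2$, so the cell enclosures for $\mathsf E_*$ must be fine, especially near $\sigma=1/2$ and near the origin where $\widetilde J_*$ is smallest. I would first try monotone enclosures of the $v$-integrands on a uniform grid, then adaptively refine the cells that contribute most to the total.
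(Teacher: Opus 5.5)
Your overall architecture (interval enclosures on $[0,5]^2$ over subintervals of $\sigma$, plus an analytic tail) matches the paper's. However, the tail estimate in Step 5 is wrong, and it fails exactly near $\sigma=1/2$, where the bound on $h_{\rm upper}$ is tightest.

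The envelope $|\mathsf K_*|\lesssim R_0$, coming from $\tau\le1$ and $\theta\le j_{(1)}/(2e)$, overlooks that $R_0(x^2,y^2)=\frac1d e^{-\kappa(x^2+y^2)}I_0(2\sigma xy/d)$ is unbounded, since $I_0(a)\sim e^a/\sqrt{2\pi a}$. Square this and multiply by $e^{-x^2-y^2}$ and by the factor $e^{\kappa(x^2+y^2)}$ from the denominator. The resulting exponent is $-(x^2+y^2-4\sigma xy)/d$, which at $\sigma=1/2$ equals $-(x-y)^2/d$. So there is no Gaussian decay along the diagonal, only the polynomial factor from $I_0$. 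For $\sigma$ slightly below $1/2$ the decay rate $(1-2\sigma)/d$ degenerates, so no fixed cutoff makes the tail small uniformly.

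The missing observation is that the arc variable only ranges over $v\in[0,2/3]$, where $-\cos(\pi v)\le 1/2$. Hence
\[
 \mathsf K_0(t,s)=\frac{e^{-\kappa(t+s)}}{2d}\int_0^{2/3}(2+3v)e^{-a\cos(\pi v)}\dd v\le\frac1d e^{-\kappa(t+s)}e^{a/2},
\]
that is, $\tau(a)$ kills half of the exponential growth of $I_0(a)$. Similarly, $R_0\theta(a)=\frac{j_{(1)}}{2d}ae^{-\kappa(t+s)-a}$ together with $ae^{-3a/2}\le 2/(3e)$ gives $4\mathsf B\le\frac{4j_{(1)}}{3de}e^{-\kappa(t+s)+a/2}$. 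Since $a/2=\sigma xy/d$, the tail exponent becomes $-(x^2+y^2-2\sigma xy)/d\le-(x^2+y^2)/(1+\sigma)$. Completing the square and applying Mills' inequality then does give a negligible tail.

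Your justification of the lower bound $\widetilde J_*(x)\gtrsim\sigma e^{-\kappa x^2}$ also does not hold up. First, \eqref{j0as} is an expansion as $x\to0$ with an uncontrolled $O(x^2)$ remainder, so it says nothing for large $x$. Second, $J_0(t)\gtrsim\sigma\sqrt t$ is false outright, because $0\le J_0\le1/2$. The bound $J_*\ge B_{(1)}$ is valid everywhere, but by \eqref{eq:B1-closed} it only yields $\widetilde J_*(x)\ge\frac{j_{(1)}\sigma}{2\sqrt d}e^{-\kappa x^2}S(q)$ with $S(q)=-\psi'(q)\sim q^{-3}$.

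The paper gets the clean bound by integrating by parts in $v$. Since $3v-1=-\frac{\dd}{\dd v}[v(1-3v/2)]$ and $v(1-3v/2)$ vanishes at $v=0$ and $v=2/3$, one has $J_0(x^2)=\frac12\int_0^{2/3}v(1-3v/2)\,\partial_vF_v(x^2)\dd v$. Here $\partial_vF_v(x^2)=e^{-\kappa x^2}(-\psi'(b_v))\,\pi q\sin(\pi v)\ge0$. Restricting to $v\in[1/2,2/3]$, where $b_v\le0$ and hence $-\psi'(b_v)\ge\sqrt\pi$, gives $\widetilde J_*(x)\ge c_0\frac{\sigma}{\sqrt d}e^{-\kappa x^2}$ for all $x\ge0$.

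The same representation exhibits the factor $q/x=\sigma/\sqrt d$ explicitly. This gives a nonsingular formula for $\widetilde J_*$ near $x=0$, sharper than your $B_{(1)}$ fallback, which undershoots $\widetilde J_*(0)$ by a factor of $3$.
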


\begin{proof}

The accompanying program \path{plur_cert.py} uses Arb balls for
transcendental evaluations and outward-rounded binary64 interval arithmetic
for the remaining operations.  It subdivides
$[1/43,1/2]$ into $640$ nonuniform $\sigma$-intervals and $[0,5]$ into
$1000$ intervals in each radial variable.  Here the nonuniform intervals $[\sigma_i,\sigma_{i+1}]$ use $\sigma_i=(1/43)+(1/2 - 1/43)(i/640)^p$ for each $0\leq i\leq 640$, where $p=1.3$.  The auxiliary tables use $64$
second-order interval midpoint panels in the angular variable, together with
$4096$, $8192$, and $65536$ nonuniform intervals (with $p=2$) for $q$, $a$, and the
exponential factor, respectively.

On every box $I_i\times I_j$ in the integrand domain and every $\sigma$-interval, the program
encloses $\widetilde J_*$ away from zero and encloses
$\mathsf R_*^2$ and $\mathsf E_*^2$ from above.  It therefore bounds the
box's contribution to \eqref{eq:endpoint-HS-xy} by
\[
 4|I_i||I_j|
 \frac{\sup_{x\in I_i}e^{-x^2}\sup_{y\in I_j}e^{-y^2}
       \sup_{(x,y)\in I_i\times I_j}\mathsf L(x^2,y^2)^2}
      {\inf_{x\in I_i}\widetilde J_*(x)
       \inf_{y\in I_j}\widetilde J_*(y)},
\]
where $\mathsf L$ is either $\mathsf R_*$ or $\mathsf E_*$, and we additionally take a supremum over $\sigma$ values in the given $\sigma$-interval.  The identity
\[
 C_*-1
 \stackrel{\eqref{usdef}}{=}\int_0^\infty 2xe^{-x^2}\bigl(u_*(x^2)-1\bigr)\dd x
\]
is used instead of integrating $C_*$ directly; this preserves the exact
cancellation at $\sigma=0$ and makes the interval enclosure uniform down to
$\sigma=1/43$.  The enclosure for $C_*$ occurs at lines 599, 618 and 621 of the code.

For clarity, we also record the analytic tail bounds.  From the $\ell=m=0$ term in \eqref{eq:bessel-power-series}, we have $I_0(a)\geq1$ for all $a\in\R$.  Then $\theta(a)\stackrel{\eqref{eq:theta-def}}{\le} j_{(1)}/(2e)\stackrel{\eqref{eq:first-mode-def}}{<}1/50$ and $0\leq u_{0}(t)\leq\int_{0}^{\infty}R_{0}(t,s)\tau(a)d\nu(s)\leq 1$ by \eqref{eq:tau-one} and \eqref{r0int}.  Also, $0\leq j\leq 1/2$ by \eqref{eq:j-positive} and $\int_{0}^{1}r_a(u)du=1$ with $r_{a}\geq0$ by \eqref{radef}, so $0\leq J_0\leq1/2$ by \eqref{eq:J-def}.  Then $0\leq B_{(1)}(t)\leq1/50$ by \eqref{eq:B1-def}.  Lemma \ref{ujlem} then implies
\begin{equation}\label{a0bd}
 0\le u_*(t)\le1+\frac{4}{50}=\frac{27}{25},\qquad
 0\le J_*(t)\le\frac{1}{2}+\frac{1}{50}=\frac{13}{25},\qquad
 0\leq C_*\stackrel{\eqref{usdef}}{\leq}\frac{27}{25}.
 \end{equation}
The $u_*$ estimate is used in lines 618 and 716 in the code.

We include the integration-by-parts proof of the lower bound needed in the
tail estimate.  Define
\begin{equation}\label{psidef}
 \psi(b)=1-\sqrt\pi be^{b^2}\operatorname{erfc}(b)
 \stackrel{\eqref{ivc}}{=}\int_0^\infty2w e^{-w^2-2bw}\dd w,\qquad\forall\,b\in\R.
\end{equation}
Thus $F_v(x^2)=e^{-\kappa x^2}\psi(b_v(x^2))$ by \eqref{ivc}.  Since
$3v-1=-\frac{\dd}{\dd v}[v(1-3v/2)]$ and $v(1-3v/2)$
vanishes at $v=0$ and $v=2/3$, integration by parts in \eqref{eq:base-uJ} gives
\[
 J_0(x^2)=\frac12\int_0^{2/3}v(1-3v/2)
                    \partial_vF_v(x^2)\dd v,\qquad\forall\,x\geq0.
\]
On $1/2\le v\le2/3$ we have $b_v(x^2)\le0$ by \eqref{bvdef}, and differentiating \eqref{psidef} shows that
\[
 -\psi'(b_v(x^2))
 =\int_0^\infty4w^2e^{-w^2-2b_v(x^2)w}\dd w
 \ge\int_0^\infty4w^2e^{-w^2}\dd w=\sqrt\pi.
\]
Also, $\sin(\pi v)\ge\sqrt3/2$ on this interval and for any $x\geq0$,
\[
 \partial_vb_v(x^2)\stackrel{\eqref{bvdef}}{=}-\pi x q\sin(\pi v)
 \leq -\frac{\pi xq\sqrt{3}}{2},
 \qquad
 \int_{1/2}^{2/3}v(1-3v/2)\dd v=\frac5{432}.
\]
\begin{equation}\label{pvfv}
\partial_v F_v(x^2)
=e^{-\kappa x^2}\psi'(b_v(x^2))\partial_v b_v(x^2)
=e^{-\kappa x^2}\psi'(b_v(x^2))\partial_v b_v(x^2)
\geq e^{-\kappa x^2} \sqrt{\pi}\frac{\pi x q\sqrt{3}}{2}.
\end{equation}

The $J_0$ integrand is nonnegative on all of $[0,2/3]$; restricting it to
$[1/2,2/3]$ therefore proves
\begin{equation}\label{eq:Jstar-lower}
 J_*(x^2)\stackrel{\eqref{eq:star-kernels}}{\ge} J_0(x^2)
 \stackrel{\eqref{pvfv}}{\ge} c_0\frac{\sigma}{\sqrt{d}}\cdot x e^{-\kappa x^2},\,\forall\,x\geq0,\qquad
 c_0\colonequals\frac{5\pi\sqrt\pi\sqrt{3}}{4\cdot 432}.
\end{equation}
Moreover,
\begin{equation}\label{eq:Kstar-envelope}
 \mathsf K_*(x^2,y^2)
 \le\frac{1+4j_{(1)}/(3e)}{d}\exp\left[-\kappa(x^2+y^2)+\frac{\sigma}{d}xy\right],\qquad\forall\,x,y\geq0.
\end{equation}
Indeed, $K_0$ has the same upper bound with $1/d$ in front of the exponential:
\begin{flalign*}
\mathsf K_0(x^2, y^2)
&\stackrel{\eqref{eq:radial-K}\wedge\eqref{adef}}{=}\frac{1}{2d}e^{-(x^2 + y^2)\sigma^2 /d}
 \int_0^{2/3}(2+3v)e^{-2(\sigma/d)xy\cos(\pi v)}\dd v\\
 &\leq\frac{1}{2d}e^{-(x^2 + y^2)\sigma^2 /d}e^{(\sigma/d)xy}
 \int_0^{2/3}(2+3v)\dd v
 =\frac{1}{d}e^{-(x^2 + y^2)\sigma^2/d}e^{(\sigma/d)xy},\qquad\forall\,x,y\geq0.
\end{flalign*}
\[
 4\mathsf B(x^2,y^2)
 \stackrel{\eqref{bdef}}{=}\frac{4j_{(1)}\sigma}{d^2}xy
 e^{-(x^2+y^2)(\sigma^2 /d)-2\sigma xy/d}
 \leq 
  \frac{4j_{(1)}}{3de}
 e^{-(x^2+y^2)\sigma^2 /d+\sigma xy/d},
 \qquad\forall\,x,y\geq0,
\]
using $ae^{-(3/2)a}\leq 2/(3e)$ $\forall$ $a\geq0$, with $a=2\sigma xy/d$.  So, the $\mathsf K_0$ and $\mathsf B$ bounds prove \eqref{eq:Kstar-envelope} by \eqref{eq:star-kernels}.  

Equations
\eqref{eq:Jstar-lower}--\eqref{eq:Kstar-envelope}, completion of the
square, and Mills' inequality bound the complement of $[0,5]^2$.  To see this we estimate
 
\begin{equation}\label{ksineq}
\begin{aligned}
&\iint_{[0,\infty)^2\setminus[0,5]^2}e^{-x^2 - y^2}\frac{[K_*(x^2, y^2)]^2}{\widetilde J_*(x)\widetilde J_*(y)}\dd x \dd y\\
&\qquad\stackrel{\eqref{eq:Jstar-lower}\wedge\eqref{eq:Kstar-envelope}}{\leq}
\frac{d}{\sigma^2}c_0^{-2}\frac{(1+4j_{(1)}/(3e))^2}{d^2}\iint_{[0,\infty)^2\setminus[0,5]^2}
e^{-x^2 - y^2}
e^{-2\kappa(x^2 +y^2)+2\sigma xy/d}
e^{\kappa x^2}e^{\kappa y^2}\dd x\dd y\\
&\qquad= \frac{(1+4j_{(1)}/(3e))^2}{\sigma^2 d c_0^2}\iint_{[0,\infty)^2\setminus[0,5]^2}
e^{-(\kappa+1)(x^2 +y^2)+2\sigma xy/d}\dd x\dd y\\
&\qquad\stackrel{\eqref{eq:endpoint-parameters}}{=} \frac{(1+4j_{(1)}/(3e))^2}{\sigma^2 d c_0^2}\iint_{[0,\infty)^2\setminus[0,5]^2}
\exp\Big(\frac{-x^2 -y^2+2\sigma xy}{d}\Big)\dd x\dd y\\
&\qquad\leq
\frac{(1+4j_{(1)}/(3e))^2}{\sigma^2 d c_0^2}2\int_{5}^{\infty}\int_{0}^{\infty}\exp\Big(-\frac{x^2 + y^2}{1+\sigma}\Big)\dd x\dd y\\
&\qquad\leq
\frac{(1+4j_{(1)}/(3e))^2}{\sigma^2 d c_0^2}\sqrt{\pi(1+\sigma)}\frac{e^\frac{-5^2}{ 1+\sigma}(1+\sigma)}{2\cdot 5}
=\frac{(1+4j_{(1)}/(3e))^2}{\sigma^2 d c_0^2}\frac{\sqrt{\pi}(1+\sigma)^{3/2}e^{-\frac{5^2}{1+\sigma}}}{10}.
\end{aligned}
\end{equation}
The penultimate inequality used \eqref{eq:endpoint-parameters} and
\[
\frac{x^2 +y^2-2\sigma xy}{d}
=\frac{(x^2 +y^2)(1-\sigma)+\sigma(x-y)^2}{d}
\geq\frac{x^2 +y^2}{1+\sigma}
\]

Inequality \eqref{ksineq} is used at line 701 of the code.  By definition of $\mathsf E_*$ from \eqref{eq:Estar-def}, we have
\begin{flalign*}
|\mathsf E_*(x^2,y^2)|^2
&=|\mathsf K_*(x^2,y^2) - u_*(x^2)-u_*(y^2)+C_*|^2\\
&\leq2|\mathsf K_*(x^2,y^2)|^2 
+2(u_*(x^2)+u_*(y^2)-C_*)^2
\stackrel{\eqref{a0bd}}{\leq}
2|\mathsf K_*(x^2,y^2)|^2 +2(81/25)^2.
\end{flalign*}
We can then use this inequality to upper bound $4\iint_{[0,\infty)^2\setminus[0,5]^2}e^{-x^2 - y^2}\frac{[\mathsf E_*(x^2, y^2)]^2}{\widetilde J_*(x)\widetilde J_*(y)}\dd x \dd y$ in lines 714, 728 and 701 of the code, together with \begin{equation}\label{gauineq}
\iint_{[0,\infty)^2\setminus[0,5]^2}
\frac{e^{-x^2 - y^2}\dd x \dd y}{\widetilde{J}_*(x)\widetilde{J}_*(y)}
\leq\frac{d}{c_0^2\sigma^2}
2\int_{5}^{\infty}\int_{0}^{\infty}
e^{-(1-\kappa)(x^2 + y^2)}\dd x \dd y
\leq\frac{d\sqrt{\pi}e^{-(1-\kappa)\cdot 5^2}}{10c_0^2\sigma^2(1-\kappa)^{3/2}},
\end{equation}
recalling $1-\kappa=(1-2\sigma^2)/d>0$ by \eqref{eq:endpoint-parameters} since $0<\sigma\leq1/2$.  For an interval range of values of $\sigma$, the code estimates a lower bound $C_*\geq c_{\rm lo}$.  Then by definition of $\mathsf R_*$ from \eqref{eq:Rstar-def},
\begin{flalign*}
|\mathsf R_*(x^2,y^2)|^2
&=|\mathsf K_*(x^2,y^2) - u_*(x^2)u_*(y^2)/C_*|^2
\leq2|\mathsf K_*(x^2,y^2)|^2 
+2(u_*(x^2)u_*(y^2)/C_*)^2\\
&\stackrel{\eqref{a0bd}}{\leq}
2|\mathsf K_*(x^2,y^2)|^2 +2\cdot(27/25)^4 / (c_{\rm lo})^2.
\end{flalign*}

We can then use this inequality to upper bound $4\iint_{[0,\infty)^2\setminus[0,5]^2}e^{-x^2 - y^2}\frac{[\mathsf R_*(x^2, y^2)]^2}{\widetilde J_*(x)\widetilde J_*(y)}\dd x \dd y$ in lines 714, 731 and 701 of the code, together with \eqref{ksineq} and \eqref{gauineq}.

These estimates and \eqref{eq:cert-bounds} via \eqref{eq:hall-def}, \eqref{eq:hmz-def} are certified by the output of \verb!plur_cert.py!:

\begin{verbatim}
sigma interval                  [0.023255813953, 0.500000000000]
x cutoff / panels              5 / 1000
sigma panels                   640
uniform C_* enclosure          [0.683223606173, 0.994400009077]
upper H^2 finite worst          3.85644837674 on [0.4990318404,0.5000000000]
upper H^2 analytic tail         0.00724743483992
upper H^2 certified             3.86369581158
upper H certified               1.96562860469
lower H^2 finite worst          0.677401054168 on [0.0321787765,0.0325673567]
lower H^2 analytic tail         9.62724411884e-05
lower H^2 certified             0.677497326609
lower H certified               0.823102257687
runtime                         217.83 seconds
CERTIFIED: uniform upper mean-zero H<2 and unrestricted lower H<2.
\end{verbatim}

\end{proof}

\section{MAX-3-CUT Hardness}

\begin{proof}[Proof of Theorem \ref{thm:main}]
Let $\sigma\in(0,1/2]$.  The case $0<\sigma<1/43$ is covered by \cite{Heilman2023}.  So, it suffices to let $1/43\leq \sigma\leq 1/2$.  Apply the strengthened inequality \eqref{eq:endpoint-joint-arc} for each
pair $(t,s)$ of radial variables and integrate in $t,s$.  For
$(\Theta,-\Theta)$, all six radial arc lengths equal
$b=1/3$, corresponding to $3K_a(b,b)$.  
Recall that $x_i\colonequals p_i-b$, $y_i\colonequals q_i-b$, and $p_i,q_i\colon(0,\infty)\to[0,1]$ are defined after \eqref{eq:bilinear-target}, for all $i=1,2,3$, and below $\langle\cdot,\cdot\rangle$ denotes the $L_2$ inner product on $(0,\infty)$ with respect to the exponential measure $\nu$.
Circular rearrangement with \eqref{eq:integrated-star-form} gives
\begin{equation}\label{eq:integrated-lower}
 Q_\sigma^{(2)}(A,B)-Q_\sigma^{(2)}(\Theta,-\Theta)
 \ge\sum_{i=1}^3\Big(
 \ip{x_i}{\mathcal K_*y_i}
 +\ip{x_i}{M_{J_*}x_i}+\ip{y_i}{M_{J_*}y_i}
 \Big),
\end{equation}
with $\mathsf K_*$ defined in \eqref{eq:star-kernels}.  For each $1\le i\le 3$, set $z_i^+\colonequals(x_i+y_i)/\sqrt2$ and
$z_i^-\colonequals(x_i-y_i)/\sqrt2$.  By \eqref{eq:equal-radial-means},
$\int_{0}^{\infty} z_i^-(r)\dd\nu(r)=0$.  The self-adjointness of $\mathcal K_*$ by \eqref{eq:star-kernels} shows the
$i$-th summand in \eqref{eq:integrated-lower} is
\[
 \frac12\ip{z_i^+}{\mathcal K_*z_i^+}
 -\frac12\ip{z_i^-}{\mathcal K_*z_i^-}
 +\ip{z_i^+}{M_{J_*}z_i^+}+\ip{z_i^-}{M_{J_*}z_i^-}.
\]
Equations \eqref{eq:endpoint-lower-operator} and
\eqref{eq:endpoint-upper-operator}, together with
Lemma~\ref{lem:certificate}, bound this from below by
\begin{equation}\label{finaleq}
 \left(1-\frac{h_{\rm lower}}2\right)\ip{z_i^+}{M_{J_*}z_i^+}
 +\left(1-\frac{h_{\rm upper}}2\right)
   \ip{z_i^-}{M_{J_*}z_i^-}\ge0.
\end{equation}
This proves the bilinear inequality \eqref{eq:bilinear-target} since \eqref{eq:integrated-lower} is $\geq0$.  Theorem~\ref{thm:main} follows.
\end{proof}
\begin{remark}
The uniqueness part of Theorem \ref{thm1} follows by checking that, in the above proof, equality implies that $p=q=(1/3,1/3,1/3)$ a.e., and then the circular rearrangement argument implies: for almost every $r>0$, each set
$\Omega_i\cap rS^1$ is an arc of Haar measure $1/3$; and for each $1\leq i\leq 3$, the arc's midpoint does not depend on $r$, two distinct midpoints differ by an angle of $2\pi/3$.  Hence the partition agrees, up to
rotation and permutation of the cells, with the standard simplex
partition $\Theta$.
\end{remark}

\appendix
\section{Positive correlation Standard Simplex Inequality}
\label{sec:positive-appendix}

In this appendix we prove the three-set Standard Simplex Conjecture for
positive correlations in $(0,2/5]$.  Positive correlation is a
maximization problem, so the circular rearrangement uses aligned rather
than opposing arcs.  We retain the first angular mode separately, just as
in the negative-correlation proof, but its contribution now decreases the
cross coefficient and increases the diagonal penalty.

\begin{theorem}[positive-correlation Standard Simplex inequality]
\label{thm:positive-companion}
Let $n\geq2$, let $0<\rho\leq2/5$, and let
$\Omega=(\Omega_1,\Omega_2,\Omega_3)$ be a measurable partition of
$\R^n$ satisfying
$
 \gamma_n(\Omega_i)=\frac13
$
$\forall$ $1\leq i\leq 3$.  Then
\begin{equation}\label{eq:positive-main}
 Q_\rho^{(n)}(\Omega,\Omega)
 \leq Q_\rho^{(2)}(\Theta,\Theta).
\end{equation}
\end{theorem}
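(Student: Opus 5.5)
The plan is to mirror the negative-correlation argument, with signs flipped appropriately. First, reduce to $n=2$. For $\rho>0$, maximizing $Q_\rho^{(n)}(\Omega,\Omega)$ under the equal-measure constraints is handled by the dimension-reduction results of \cite{HeilmanTarter2021} (via the bilinear version in \cite{Heilman2022}) together with the cylinder embedding. We will work with pairs of partitions $A,B$ of $\R^2$ satisfying $\gamma_2(A_i)=\gamma_2(B_i)=1/3$. It then suffices to show
\[
Q_\rho^{(2)}(A,B)\le Q_\rho^{(2)}(\Theta,\Theta).
\]
In polar coordinates with $t=\|\xi\|^2/2$ and $s=\|\eta\|^2/2$, the joint angular density is now $e^{a\cos(\theta-\phi)}/I_0(a)$ with $a=2\rho\sqrt{ts}/(1-\rho^2)\ge0$. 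By the circular Riesz rearrangement inequality \cite{BaernsteinTaylor1976}, on each pair of circles the overlap of $A_i$ and $B_i$ is at most that of \emph{aligned} arcs with the same angular measures $p_i(r),q_i(r)$. So define $K^+_a(p,q)$ as the overlap of the centered arcs $E_p,E_q$. Its Fourier formula is \eqref{eq:K-def} without the factor $(-1)^m$. The analogue of \eqref{eq:Kpq} is
\[
\partial_{pq}K_a^+(p,q)=\tfrac12\bigl[r^+(p-q)+r^+(p+q)\bigr],\qquad r^+(u)=\frac{e^{a\cos(\pi u)}}{I_0(a)},
\]
and here $r^+$ is \emph{decreasing} on $[0,1]$.

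The second step is a joint arc \emph{upper} bound, playing the role of Lemma~\ref{lem:joint-arc}:
\[
\sum_i K^+_a(p_i,q_i)-3K^+_a(b,b)\le \tau^+(a)\ip{x}{y}-j^+(a)\bigl(\|x\|^2+\|y\|^2\bigr),
\]
with $\tau^+=4P^+-3k^+$ and $j^+=3k^+-P^+\ge0$. The proof runs as before, with all inequalities reversed, since $r^+$ is now decreasing. The boundary edges are handled directly. Any off-diagonal interior critical point is excluded by the sign of the second derivative in the $(1,-1)$ direction. On the diagonal, $g(p)=\Phi(p,p)$ is treated by concavity/convexity exactly as in \eqref{eq:g-second}. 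We then split
\[
r^+=r^+(0)\bigl[1-a\,r^{(1)}\bigr]+\text{remainder}.
\]
Applying a first-mode refinement (a sign-flipped analogue of Lemma~\ref{lem:first-mode-gain}, again relying on Lemma~\ref{lem:simplex-trig}) gains a $\theta$-type term. Integrating over $(t,s)$ against $R_0(t,s)\,\dd\nu\,\dd\nu$ then gives an analogue of \eqref{eq:integrated-lower}:
\[
Q_\rho^{(2)}(A,B)-Q_\rho^{(2)}(\Theta,\Theta)\le \sum_i\Bigl(\ip{x_i}{\mathcal K^+y_i}-\ip{x_i}{M_{J^+}x_i}-\ip{y_i}{M_{J^+}y_i}\Bigr).
\]

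Third, we use the measure constraint, which is stronger here than in the negative case: $\int x_i\,\dd\nu=\int y_i\,\dd\nu=0$. Hence $z_i^\pm=(x_i\pm y_i)/\sqrt2$ are both mean zero. Each summand becomes
\[
\tfrac12\ip{z^+}{\mathcal K^+z^+}-\tfrac12\ip{z^-}{\mathcal K^+z^-}-\ip{z^+}{M_{J^+}z^+}-\ip{z^-}{M_{J^+}z^-}.
\]
It therefore suffices to show, on mean-zero functions,
\[
\bigl|\ip{z}{\mathcal K^+z}\bigr|\le h\,\ip{z}{M_{J^+}z}\qquad\text{with } h<2.
\]
Following Lemma~\ref{speclem}, we would use the Hilbert--Schmidt quantity $h(\mathsf E^+,J^+)$ for the doubly centered kernel $\mathsf E^+$. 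This is a single bound, since both $z^\pm$ are centered. After the change of variables $t=x^2$, one must check that $J^+(x^2)/x$ stays positive at $0$, and prove an analytic tail bound outside $[0,5]^2$ as in \eqref{ksineq}.

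The main obstacle is this final certification, $h<2$ uniformly for $\rho\in(0,2/5]$. Unlike in Part~I, there is no monotonicity in $\rho$, so a computer-assisted interval check over a subdivision of $\rho$ is required. Small $\rho$ needs separate treatment: there the relevant quantities vanish linearly in $\rho$, so I would either cite \cite{Heilman2023} for $\rho$ near $0$ or normalize by $\rho$ to preserve the cancellation, as was done with $C_*-1$. The ceiling $2/5$ is presumably where the certified bound approaches $2$.
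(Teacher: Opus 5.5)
Your outline matches the paper's: planar reduction, aligned-arc rearrangement, a reversed joint-arc bound with $j^+=3k^+-P^+$, a first-mode gain, radial integration, and a weighted Hilbert--Schmidt certificate on mean-zero functions. However, the first-mode step fails as you wrote it.

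You expand $r^+$ at its \emph{maximum} $u=0$, i.e.\ $r^+(u)=r^+(0)e^{-ar^{(1)}(u)}$, so your remainder is $r^+(0)\bigl(e^{-ar^{(1)}(u)}-1+ar^{(1)}(u)\bigr)$. The map $x\mapsto e^{-x}-1+x$ increases on $[0,\infty)$, and $r^{(1)}$ increases on $[0,1]$. Hence this remainder is nonnegative but \emph{nondecreasing} on $[0,1]$. The aligned-arc upper bound (Lemma~\ref{lem:positive-aligned} in homogeneous form) requires a nonincreasing kernel. For a nondecreasing kernel the only available estimate is \eqref{eq:homogeneous-arc}, which is a \emph{lower} bound. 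So you cannot bound the remainder's contribution from above, and your sharpened inequality does not follow. Formally replacing $a$ by $-a$ in \eqref{eq:angular-decomposition} destroys the monotonicity of $r_{\mathrm{rem}}$, which used $a\ge0$. Nor is the remainder negligible: at $u=1$ it equals $\bigl(e^{-a}-(1-2a)e^{a}\bigr)/I_0(a)$.

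The fix, which is what the paper does, is to expand at the \emph{minimum} $u=1$, where $r^+(1)=e^{-a}/I_0(a)$:
\[
r^+(u)=\frac{e^{-a}}{I_0(a)}\bigl(1+a(1+\cos\pi u)\bigr)+\frac{e^{-a}}{I_0(a)}\bigl(e^{a(1+\cos\pi u)}-1-a(1+\cos\pi u)\bigr).
\]
Here every piece is nonnegative and nonincreasing on $[0,1]$. For the middle piece, the kernel $r(u)=1+\cos(\pi u)=2-r^{(1)}(u)$ gives $\overline K_r(p,q)=2pq-\overline K_{(1)}(p,q)$. Lemma~\ref{lem:first-mode-gain} then yields \eqref{eq:positive-first-mode}, and one obtains Corollary~\ref{cor:positive-sharpened}. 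The gain is $\theta(a)=\frac{j_{(1)}}{2}\frac{ae^{-a}}{I_0(a)}$, with cross coefficient $\tau_+-4\theta$ and penalty $w+\theta$.

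A secondary point concerns the certificate. Your bilinear $(A,B)$ formulation is harmless, since the weighted Hilbert--Schmidt bound is two-sided, though the paper only needs $A=B$. The real issue is that the margin is razor thin: $\mathfrak h_+^2<3.9993$, with the worst case near $\rho\approx0.387$ rather than at $2/5$. The paper achieves this not with the doubly centered kernel $\mathsf E^+$ but with a fitted polynomial centering $\ell_\rho(x)+\ell_\rho(y)$ adapted to the \emph{weighted} norm, which the unweighted double centering does not minimize. There is no guarantee that $h(\mathsf E^+,J^+)<2$ on all of $[1/10,2/5]$; the range $0<\rho<1/10$ is taken from \cite{Heilman2023}.
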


The interval $0<\rho<1/10$ is already contained in
\cite[Theorem~1.5]{Heilman2023}.  We therefore fix
\[
 \frac1{10}\leq\rho\leq\frac25,\qquad
 d\colonequals1-\rho^2,\qquad
 \kappa\colonequals\frac{\rho^2}{d}.
\]

\subsection{Sharpened joint arc inequality}

In this section we prove a positive correlation variant of Lemma \ref{lem:joint-arc}.

For any $p,q\in[0,1]$, let $E_p,E_q\subset\mathbb T$ be centered arcs of
normalized lengths $p,q$, and put
\begin{equation}\label{eq:positive-L-def}
 L_a(p,q)\colonequals
 \iint_{E_p\times E_q}
 \frac{e^{a\cos(\theta-\phi)}}{I_0(a)}
 \dd\mu(\theta)\dd\mu(\phi),\qquad a\geq0.
\end{equation}
The Fourier calculation leading to \eqref{eq:K-def} gives
\begin{align}
 L_a(p,q)
 &=pq+2\sum_{m\geq1}\frac{I_m(a)}{I_0(a)}
   \frac{\sin(\pi mp)\sin(\pi mq)}{\pi^2m^2},
 \label{eq:positive-L-Fourier}\\
 \partial_{pq}L_a(p,q)
 &=\frac{e^{a\cos(\pi(p-q))}
          +e^{a\cos(\pi(p+q))}}{2I_0(a)}.
 \label{eq:positive-L-mixed}
\end{align}
Recall that $b=1/3$ and set
\[
 k_+(a)\colonequals L_a(b,b),\qquad
 P_+(a)\colonequals\partial_1L_a(b,b),
\]
\begin{equation}\label{eq:positive-tau-w}
 \tau_+(a)\colonequals4P_+(a)-3k_+(a),\qquad
 w(a)\colonequals3k_+(a)-P_+(a).
\end{equation}

\begin{lemma}[aligned-arc bound]\label{lem:positive-aligned}
For every $a\geq0$ and $p,q\in\Delta_3$, put
$x=p-b\1$ and $y=q-b\1$.  Then
\begin{equation}\label{eq:positive-aligned}
 \sum_{i=1}^3L_a(p_i,q_i)-3L_a(b,b)
 \leq\tau_+(a)\ip{x}{y}
       -w(a)(\norm{x}^2+\norm{y}^2).
\end{equation}
Moreover, $w(a)>0$ when $a>0$.
\end{lemma}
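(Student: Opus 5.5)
I will mirror the proof of Lemma~\ref{lem:joint-arc}, with every inequality reversed. First the case $a=0$: then $L_0(p,q)=pq$, $k_+(0)=1/9$, $P_+(0)=1/3$, $\tau_+(0)=1$ and $w(0)=0$, so \eqref{eq:positive-aligned} holds with equality by the same simplex computation. For $a>0$ I reduce to the scalar inequality
\[
 L_a(p,q)\le \tau_+pq+w\{p(1-p)+q(1-q)\},\qquad p,q\in[0,1].
\]
Summing this over $i$ and using $\sum p_iq_i=\tfrac13+\ip{x}{y}$ and $\sum p_i(1-p_i)=\tfrac23-\norm{x}^2$ gives \eqref{eq:positive-aligned}. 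The constant is $(\tau_++4w)/3=3k_+$, because $\tau_++4w=9k_+$ by \eqref{eq:positive-tau-w}.

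Set $s_a(u)\colonequals e^{a\cos(\pi u)}/I_0(a)$. This function is even and $2$-periodic, it decreases on $[0,1]$, and $\int_0^1 s_a=1$. Integrating \eqref{eq:positive-L-mixed} as in \eqref{eq:kP-integrals} gives
\[
P_+=\tfrac12\int_0^L s_a,\qquad k_+=\tfrac12\int_0^L(L-u)s_a,\qquad L=2/3.
\]
From these,
\[
w=\tfrac12\int_0^L(1-3u)s_a(u)\dd u,\qquad \tau_+=\int_0^L(1+3u/2)s_a,\qquad \tau_+-2w=\tfrac92\int_0^L u\,s_a(u)\dd u .
\]
\begin{itemize}
\item $w>0$ follows from $(1-3u)(s_a(u)-s_a(1/3))>0$ together with $\int_0^L(1-3u)\dd u=0$.
\item $\tau_+\ge1$ follows from the same substitution $v=L+3u^2/4$ used before, now with $s_a$ decreasing.
\item $\tau_+-2w\ge s_a(L)$ holds because $(9u/2)\dd u$ is a probability measure on $[0,L]$ and $s_a\ge s_a(L)$ there.
\end{itemize}

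Define $\Psi\colonequals\tau_+pq+w[p(1-p)+q(1-q)]-L_a(p,q)$. On the edges we have $L_a(p,0)=0$ and $L_a(p,1)=p$. Hence $\Psi(p,0)=wp(1-p)\ge0$ and $\Psi(p,1)=(\tau_+-1)p+wp(1-p)\ge0$, and the other two edges follow by symmetry. A negative minimum would therefore have to sit at an interior critical point.

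Put $G(\delta)=\int_0^\delta s_a$. The analogue of \eqref{eq:K-first-difference} is $\partial_1L_a-\partial_2L_a=G(q-p)$. At an off-diagonal critical point with $\delta=q-p\neq0$ this gives $G(\delta)=(\tau_++2w)\delta$. The Hessian in direction $(1,-1)$ then works out to $2s_a(\delta)-2\tau_+-4w$, which equals $2\big(s_a(\delta)-G(\delta)/\delta\big)$. This is strictly negative because $s_a$ decreases, so $G(\delta)/\delta>s_a(\delta)$. Such a point cannot be a minimum.

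On the diagonal, let $g(p)=\Psi(p,p)$. Then $g(0)=0$, $g(1)=\tau_+-1\ge0$, and $g(b)=g'(b)=0$ by the definitions of $k_+$ and $P_+$. Also $g''(p)=2[\tau_+-2w-s_a(2p)]$. Since $p\mapsto s_a(2p)$ decreases on $[0,1/2]$ and increases on $[1/2,1]$, the set where $g''\ge0$ is an interval $[s,1-s]$. The bound $\tau_+-2w\ge s_a(L)$ shows $s\le b$. On $[s,1-s]$, convexity and $g'(b)=0$ give $g\ge0$. On the two outer pieces $g$ is concave with nonnegative endpoint values, so $g\ge0$ there as well.

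I expect the main obstacle to be the curvature bound $\tau_+-2w\ge s_a(2/3)$, since it is what places the convex region around $b$. With the weight identity above it reduces to an average of a decreasing function, so it should go through. I also need to check the signs of the edge terms carefully.
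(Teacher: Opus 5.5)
Your proposal is correct and is essentially the paper's own proof. The paper likewise reduces to the scalar inequality and derives $w>0$, $\tau_+\ge 1$ and $\tau_+-2w\ge r_+(2/3)$ from the same integral representations, including the substitution $v=2/3+3u^2/4$; it then runs the same boundary, off-diagonal critical-point, and diagonal convexity/concavity analysis, differing only cosmetically in that it works with $\Phi_+=-\Psi$ and shows $\Phi_+\le 0$ (ruling out an interior local maximum) rather than $\Psi\ge 0$.
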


\begin{proof}
When $a=0$, one has $L_0(p,q)=pq$, $\tau_+(0)=1$, and
$w(0)=0$, so \eqref{eq:positive-aligned} is an equality.  Henceforth
assume $a>0$.
Put
\[
 r_+(u)\colonequals\frac{e^{a\cos(\pi u)}}{I_0(a)}.
\]
This function is even, $2$-periodic, and decreasing on $[0,1]$.
Integrating \eqref{eq:positive-L-mixed}, exactly as in
\eqref{eq:kP-integrals}, gives
\[
 P_+=\frac12\int_0^{2/3}r_+(u)\dd u,\qquad
 k_+=\frac12\int_0^{2/3}(2/3-u)r_+(u)\dd u.
\]
Consequently,
\[
 P_+-3k_+
 =\frac12\int_0^{2/3}(3u-1)r_+(u)\dd u<0,
\]
which proves $w>0$.  The proof of Lemma~\ref{lem:joint-arc} now
applies with all monotonicity inequalities reversed.  We give the
details needed for the reversal.  Since
$\int_0^1r_+(u)\dd u=1$, the substitution
$v=2/3+3u^2/4$ gives
\[
\begin{aligned}
 \tau_+
 &=\int_0^{2/3}(1+3u/2)r_+(u)\dd u\geq1,\\
 \tau_+-2w
 &=\frac92\int_0^{2/3}u r_+(u)\dd u
 \geq r_+(2/3).
\end{aligned}
\]
Indeed, the first inequality is equivalent to
\[
 \frac32\int_0^{2/3}u r_+(u)\dd u
 \geq\int_{2/3}^1r_+(v)\dd v,
\]
and after the displayed substitution its left side minus its right side
is
\[
 \frac32\int_0^{2/3}u
 [r_+(u)-r_+(2/3+3u^2/4)]\dd u\geq0.
\]
The second inequality follows because $(9u/2)\dd u$ is a probability
measure on $[0,2/3]$ and $r_+$ is decreasing.  Now set
\[
 \Phi_+(p,q)
 \colonequals L_a(p,q)-\tau_+pq
   -w(p(1-p)+q(1-q)),\qquad\forall\,p,q\in[0,1].
\]
It is nonpositive on the boundary of $[0,1]^2$.  If
$G_+(\delta)\colonequals\int_0^\delta r_+(u)\dd u$, the first-derivative formulas
corresponding to \eqref{eq:K-first-p}--\eqref{eq:K-first-q} show that an
off-diagonal interior critical point satisfies
\[
 G_+(\delta)=(\tau_++2w)\delta.
\]
At such a point,
\[
 \partial_{(1,-1)}^2\Phi_+
 =-2\left(r_+(\delta)-\frac{G_+(\delta)}{\delta}\right)>0,
\]
because $r_+$ is decreasing; hence it cannot be a local maximum.
On the diagonal, $g_+(p)\colonequals\Phi_+(p,p)$ for all $p\in[0,1]$ satisfies
\[
 g_+(0)=g_+(b)=g_+'(b)=0,\qquad
 g_+(1)=1-\tau_+\leq0,
\]
and
\[
 g_+''(p)=2(r_+(2p)-\tau_++2w).
\]
The set on which $g_+''\leq0$ is an interval $[s,1-s]$, and the second
inequality above gives $s\leq b$.  Concavity and $g_+'(b)=0$ show that
$g_+\leq g_+(b)=0$ on this middle interval.  On each complementary
interval $g_+$ is convex, so it lies below the chord joining its
nonpositive endpoint values.  Hence $g_+\leq0$ on $[0,1]$.
Thus $\Phi_+\leq0$ on the square.  Summing this scalar
inequality over the three coordinates and using the identities at the
end of the proof of Lemma~\ref{lem:joint-arc} proves
\eqref{eq:positive-aligned}.
\end{proof}

We now provide a positive correlation version of Lemma \ref{lem:first-mode-gain}.  Write
\[
 H(p,q)\colonequals
 \frac{\sin(\pi p)\sin(\pi q)}{\pi^2}.
\]
Since $K^{(1)}(p,q)=pq-H(p,q)$ by \eqref{eq:first-mode-K}, the
first-mode inequality \eqref{eq:first-mode-gain} implies
\begin{equation}\label{eq:positive-first-mode}
 \sum_{i=1}^3(p_iq_i+H(p_i,q_i))
 -3(b^2+H(b,b))
\leq
 (2-\tau_{(1)}-2j_{(1)})\ip{x}{y}
 -\frac32j_{(1)}(\norm{x}^2+\norm{y}^2).
\end{equation}
Indeed, subtract \eqref{eq:first-mode-gain} from
$2\sum_{i=1}^3 p_iq_i-6b^2=2\langle x,y\rangle$.

Define $\theta(a)$ as in \eqref{eq:theta-def}.  The exact decomposition
\begin{equation}\label{eq:positive-angular-decomposition}
 \frac{e^{a\cos(\pi u)}}{I_0(a)}
 =
 \frac{e^{-a}}{I_0(a)}
 (1+a(1+\cos(\pi u)))+r_{+,{\rm rem}}(u)
\end{equation}
has
\[
 r_{+,{\rm rem}}(u)
 =\frac{e^{-a}}{I_0(a)}
 \left[e^{a(1+\cos(\pi u))}
       -1-a(1+\cos(\pi u))\right].
\]
The remainder is nonnegative and decreasing on $[0,1]$.  Apply
Lemma~\ref{lem:positive-aligned} homogeneously to the constant and
remainder terms in \eqref{eq:positive-angular-decomposition}, and apply
\eqref{eq:positive-first-mode} to $1+\cos(\pi u)$.  By linearity we get the following improvement to Lemma \ref{lem:positive-aligned}:

\begin{corollary}[sharpened aligned-arc bound]
\label{cor:positive-sharpened}
For every $a\geq0$ and $p,q\in\Delta_3$,
\begin{equation}\label{eq:positive-sharpened}
 \sum_{i=1}^3L_a(p_i,q_i)-3L_a(b,b)
 \leq(\tau_+(a)-4\theta(a))\ip{x}{y}
-(w(a)+\theta(a))
       (\norm{x}^2+\norm{y}^2).
\end{equation}
\end{corollary}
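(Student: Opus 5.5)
The plan is to reproduce, in the aligned setting, the linear-decomposition argument that gave \eqref{eq:endpoint-joint-arc}. For a continuous, nonnegative, even, $2$-periodic $r$, define $\overline L_r\colonequals\overline K_r$ by \eqref{eq:Kr-mixed}, together with $k_r,P_r,\tau_r$ as in \eqref{krcon}, and set $w_r\colonequals3k_r-P_r=-j_r$. All of these are linear in $r$. By \eqref{eq:positive-L-mixed}, and since $L_a(p,0)=L_a(0,q)=0$, we have $L_a=\overline L_{r_+}$, $\tau_+(a)=\tau_{r_+}$ and $w(a)=w_{r_+}$.

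\textbf{Homogeneous aligned bound.} First I record the homogeneous form of Lemma~\ref{lem:positive-aligned}: if $r$ is nonnegative and nonincreasing on $[0,1]$, then
\[
\sum_{i=1}^3\overline L_r(p_i,q_i)-3k_r\leq \tau_r\ip{x}{y}-w_r(\norm{x}^2+\norm{y}^2).
\]
This is justified exactly as \eqref{eq:homogeneous-arc} was. Replace $\tau_+\ge1$ by $\tau_r\geq M\colonequals\int_0^1 r$, equivalently normalize by $M$. If $r$ is only nonincreasing, perturb by $-\varepsilon r^{(1)}$ (plus a constant to keep it nonnegative) and let $\varepsilon\downarrow0$.

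\textbf{Splitting the kernel.} Write $c\colonequals e^{-a}/I_0(a)$ and split $r_+$ via \eqref{eq:positive-angular-decomposition} into three pieces.
\begin{enumerate}
\item[(i)] The constant $c$. Here $\overline L_c=cpq$, $\tau_c=c$ and $w_c=0$, so the bound holds with equality.
\item[(ii)] The first-mode term $ca(1+\cos\pi u)$. Since $1+\cos\pi u=2-r^{(1)}(u)$, linearity and \eqref{eq:first-mode-K} give
\[
\overline L_{1+\cos\pi u}=2pq-\overline K_{(1)}=pq+H,\qquad \tau_{1+\cos\pi u}=2-\tau_{(1)},\qquad w_{1+\cos\pi u}=j_{(1)}.
\]
For this piece I use the improved estimate \eqref{eq:positive-first-mode}, multiplied by $ca\ge0$.
\item[(iii)] The remainder $r_{+,\rm rem}$. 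It is nonnegative and decreasing, so the homogeneous aligned bound above applies to it.
\end{enumerate}

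\textbf{Summing the three bounds.} Adding the three inequalities gives the left side of \eqref{eq:positive-sharpened}, by linearity of $r\mapsto\overline L_r$. For the coefficient of $\ip{x}{y}$, the only change from the unimproved total $\tau_{r_+}=\tau_+(a)$ comes from piece (ii). There $ca(2-\tau_{(1)})$ is replaced by $ca(2-\tau_{(1)}-2j_{(1)})$, a loss of $2j_{(1)}ca=4\theta(a)$ by \eqref{eq:theta-def}. For the coefficient of $-(\norm{x}^2+\norm{y}^2)$, piece (ii) contributes $\tfrac32j_{(1)}ca$ instead of $j_{(1)}ca$, a gain of $\tfrac12 j_{(1)}ca=\theta(a)$. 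This yields exactly $(\tau_+(a)-4\theta(a))\ip{x}{y}-(w(a)+\theta(a))(\norm{x}^2+\norm{y}^2)$.

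\textbf{Main obstacle.} I expect the only delicate point to be the bookkeeping of signs and constants. Specifically, one must check that $\tau$ and $w$ for $1+\cos\pi u$ are $2-\tau_{(1)}$ and $+j_{(1)}$, using $\tau=1$ and $j=0$ for the constant kernel $1$. One must also confirm that the homogeneous aligned lemma genuinely needs only monotonicity and nonnegativity, and not strict decrease.
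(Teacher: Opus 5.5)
Your proposal is correct and is essentially the paper's argument. Like the paper, you split $r_+$ via \eqref{eq:positive-angular-decomposition}, apply Lemma~\ref{lem:positive-aligned} homogeneously to the constant and remainder pieces, apply \eqref{eq:positive-first-mode} to the $1+\cos(\pi u)$ piece, and sum by linearity. Your explicit bookkeeping ($1+\cos(\pi u)=2-r^{(1)}$ has $\tau=2-\tau_{(1)}$ and $w=j_{(1)}$, so the improvement costs $2j_{(1)}ca=4\theta(a)$ in the cross term and gains $\tfrac12 j_{(1)}ca=\theta(a)$ in the diagonal penalty) makes precise what the paper leaves implicit, and it checks out.
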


\subsection{Positive radial operator}

Define $R_0$ as in \eqref{eq:R0-def}.  Put
$a(t,s)=2\rho\sqrt{ts}/d$.  Define
\begin{align}
 \mathsf K_{0,+}(t,s)
 &\colonequals R_0(t,s)\tau_+(a(t,s))
 =\frac{e^{-\kappa(t+s)}}{2d}
   \int_0^{2/3}(2+3v)e^{a(t,s)\cos(\pi v)}\dd v,
 \label{eq:positive-K0}\\
 W_0(t)
 &\colonequals
 \int_0^\infty R_0(t,s)w(a(t,s))\dd\nu(s).
 \label{eq:positive-W0}
\end{align}
Use the same first-mode kernel $\mathsf B$ and its marginal
$B_{(1)}$ from \eqref{bdef}--\eqref{eq:B1-def}, with $\sigma$ replaced
by $\rho$, and set
\begin{equation}\label{eq:positive-star-kernels}
 \mathsf K_+(t,s)
 \colonequals\mathsf K_{0,+}(t,s)-4\mathsf B(t,s),
 \qquad
 J_+(t)\colonequals W_0(t)+B_{(1)}(t).
\end{equation}
Integrating \eqref{eq:positive-sharpened} in the two radial variables
therefore produces
\begin{equation}\label{eq:positive-radial-form}
 \sum_{i=1}^3
 \Big(\ip{x_i}{\mathcal K_+x_i}
       -2\ip{x_i}{M_{J_+}x_i}\Big),
\end{equation}
where $\mathcal K_+$ has kernel $\mathsf K_+$.

We now record the formulas used by the verification code.  Put
\[
 q=\frac{\rho x}{\sqrt d},\qquad
 \psi(z)=1-\sqrt\pi z e^{z^2}\operatorname{erfc}(z),
\]
\[
 S(q)\colonequals-\psi'(q)
 =\sqrt\pi(1+2q^2)e^{q^2}\operatorname{erfc}(q)-2q,
\]
\begin{equation}\label{eq:positive-H-def}
 H_+(q)\colonequals
 \frac{\pi}{2}\int_0^{2/3}
 v(1-3v/2)\sin(\pi v)
 [-\psi'(-q\cos(\pi v))]\dd v.
\end{equation}
Integration by parts in \eqref{eq:positive-W0}, followed by
\eqref{eq:B1-closed}, gives
\begin{equation}\label{eq:positive-J-formula}
 \widetilde J_+(x)
 \colonequals\frac{J_+(x^2)}x
 =
 e^{-\kappa x^2}\frac{\rho}{\sqrt d}
 \left[H_+(q)+\frac{j_{(1)}}2S(q)\right].
\end{equation}
Also, if
\[
 T_+(a)\colonequals
 \frac12\int_0^{2/3}(2+3v)e^{a\cos(\pi v)}\dd v,
\]
then
\begin{equation}\label{eq:positive-K-formula}
 \mathsf K_+(x^2,y^2)
 =
 \frac{e^{-\kappa(x^2+y^2)}}d
 \left[T_+\!\left(\frac{2\rho xy}{d}\right)
 -2j_{(1)}\frac{2\rho xy}{d}
  e^{-2\rho xy/d}\right].
\end{equation}

For a function $\ell_\rho$, additive centering does not change the
quadratic form on the mean-zero subspace.  We use
\begin{equation}\label{eq:positive-center}
 \ell_\rho(x)=\sum_{n=0}^4c_n(\rho)x^n,\qquad
 c_n(\rho)=\sum_{m=0}^4C_{nm}\rho^m,
\end{equation}
where $C$ is the binary64 matrix obtained by parsing the following
round-trip decimal strings:
\[
\resizebox{0.95\textwidth}{!}{$
\begin{pmatrix}
 .5001276609531384&-.04839192730072833& .3420752274381793&
 -.02840693452661011& .3929215251696178\\
 .0020711604332492072& .11278766504365427& .42921652164107776&
 -1.4633497642462785&2.392753763954644\\
 -.00657914712574693& .1533613436474771&-1.6052209990260096&
 4.917034579755538&-7.5025494501524435\\
 .00509284321106939&-.11817713264541818& .9939294369107601&
 -4.02653695935449&4.798761639440846\\
 -.00111590812093279& .02578645463758624&-.21531279149872498&
 .776584990984738&-.6977289068083498
\end{pmatrix}$}.
\]
These numbers merely specify an additive center; no approximation
property of them is assumed.  Set
\begin{equation}\label{eq:positive-HS}
 \mathfrak h_+^2(\rho)
 =
 4\int_0^\infty\int_0^\infty e^{-x^2-y^2}
 \frac{[\mathsf K_+(x^2,y^2)
 -\ell_\rho(x)-\ell_\rho(y)]^2}
 {\widetilde J_+(x)\widetilde J_+(y)}
 \dd x\dd y.
\end{equation}
The weighted Hilbert--Schmidt argument \eqref{eq:weighted-HS} gives,
for every bounded $z$ with $\int z\dd\nu=0$,
\begin{equation}\label{eq:positive-operator-bound}
 \ip{z}{\mathcal K_+z}
 \leq\mathfrak h_+(\rho)\ip{z}{M_{J_+}z}.
\end{equation}

\begin{lemma}[uniform directed positive certificate]
\label{lem:positive-certificate}
Uniformly for all $1/10\leq\rho\leq2/5$,
\[
 \mathfrak h_+^2(\rho)<3.9993,\qquad
 \mathfrak h_+(\rho)<1.99983<2.
\]
\end{lemma}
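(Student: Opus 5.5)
The plan is to follow the proof of Lemma~\ref{lem:certificate} closely, since the positive certificate is again a computer-assisted bound on a weighted Hilbert--Schmidt integral. The work splits into three parts: a rigorous enclosure of the integral \eqref{eq:positive-HS} on the compact box $[0,5]^2$, an analytic tail bound on $[0,\infty)^2\setminus[0,5]^2$, and uniformity in $\rho$ over $[1/10,2/5]$. Uniformity will come from subdividing the $\rho$-range into small intervals and enclosing every quantity over each interval.

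\textbf{Compact part.} On each $\rho$-interval I will enclose the following quantities.
\begin{itemize}
\item $\widetilde J_+$, using the closed form \eqref{eq:positive-J-formula}. Here $S(q)$ is an explicit erfc expression. $H_+(q)$ is a one-dimensional $v$-integral with nonnegative integrand, since $-\psi'>0$ by \eqref{psidef}; I enclose it by interval midpoint panels in $v$, with a second-order remainder.
\item $\mathsf K_+$, using \eqref{eq:positive-K-formula}, with $T_+(a)$ enclosed the same way in $v$.
\item The polynomial center $\ell_\rho$, evaluated in interval arithmetic directly from the matrix $C$.
\end{itemize}
The integrand is continuous on $[0,5]^2$ provided $\widetilde J_+$ stays positive near $0$. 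This holds because $\widetilde J_+(0)=\frac{\rho}{\sqrt d}\bigl[H_+(0)+\frac{j_{(1)}}{2}\sqrt\pi\bigr]>0$, where
\[
H_+(0)=\frac{\pi}{2}\int_0^{2/3} v(1-3v/2)\sin(\pi v)\,\dd v\cdot\sqrt\pi/2>0.
\]
I then bound each cell of a grid by $|I_i||I_j|$ times the supremum of $e^{-x^2-y^2}[\mathsf K_+-\ell_\rho(x)-\ell_\rho(y)]^2$ divided by the infimum of $\widetilde J_+(x)\widetilde J_+(y)$, exactly as in the proof of Lemma~\ref{lem:certificate}.

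\textbf{Tail part.} I need a lower bound on $J_+$ and an upper envelope for $\mathsf K_+$, analogous to \eqref{eq:Jstar-lower} and \eqref{eq:Kstar-envelope}.
\begin{itemize}
\item \emph{Lower bound on $J_+$.} Every term in \eqref{eq:positive-J-formula} is nonnegative, so $\widetilde J_+(x)\ge e^{-\kappa x^2}\frac{\rho}{\sqrt d}H_+(q)$. Restricting to $v\in[0,1/2]$ makes $-q\cos(\pi v)\le0$, and there $-\psi'\ge\sqrt\pi$ by the computation before \eqref{pvfv}. This gives $\widetilde J_+(x)\ge c_+\frac{\rho}{\sqrt d}e^{-\kappa x^2}$ for an explicit constant $c_+>0$. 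This bound is better than in the negative case because it does not vanish at $x=0$.
\item \emph{Envelope for $\mathsf K_+$.} Since $e^{a\cos\pi v}\le e^{a}$, we get $T_+(a)\le e^a$. The subtracted $\mathsf B$ term has absolute value at most $\frac{2j_{(1)}}{ed}e^{-\kappa(x^2+y^2)}$. Hence
\[
|\mathsf K_+(x^2,y^2)|\le \frac{C}{d}\,e^{-\kappa(x^2+y^2)+2\rho xy/d}.
\]
\item \emph{The center.} $\ell_\rho$ is a polynomial, so $\ell_\rho(x)^2e^{-x^2}$ has an explicit Gaussian tail.
\item \emph{Assembling.} Using $(A+B+C)^2\le 3(A^2+B^2+C^2)$, the tail integral is bounded by Gaussian integrals over $\{x>5\}\cup\{y>5\}$ with exponent $-(x^2+y^2-4\rho xy)/d+\dots$, bounded by Mills' inequality. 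I expect this tail to be of order $e^{-c\cdot 25}$, far below the available slack.
\end{itemize}

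\textbf{Main obstacle.} The hard step is the slack. The target $3.9993<4$ leaves a margin of roughly $10^{-4}$ relative, so the compact enclosure must be very fine near the worst $\rho$, probably near $\rho=2/5$. Interval overestimation from the cancellation between $\mathsf K_+$ and the center $\ell_\rho(x)+\ell_\rho(y)$ is dangerous. I would first try nonuniform grids, refined in $\rho$ toward $2/5$ and in $x,y$ near the origin where $\mathsf K_+$ varies most. I would also evaluate $\mathsf K_+-\ell_\rho(x)-\ell_\rho(y)$ via a mean-value or Taylor enclosure on each cell, rather than naive interval evaluation, to limit dependency blowup.

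Finally, $\mathfrak h_+<\sqrt{3.9993}<1.99983$ follows immediately.
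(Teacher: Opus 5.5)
Your outline matches the paper's proof: an interval-arithmetic enclosure of \eqref{eq:positive-HS} on a compact box, made uniform by subdividing the $\rho$-range, with analytic tails built from the same lower bound \eqref{eq:positive-J-lower} (with $c_+=(6-\pi)/(4\pi^{3/2})$) and the same envelope \eqref{eq:positive-K-envelope}. (A harmless slip: in $H_+(0)$ the factor is $-\psi'(0)=\sqrt\pi$, not $\sqrt\pi/2$.)

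\textbf{Where it fails: the cutoff.} You take the cutoff $5$ from Lemma~\ref{lem:certificate} and assert that the tail off $[0,5]^2$ is far below the slack. It is not.
\begin{itemize}
\item After dividing the squared envelope by the two factors $c_+\frac{\rho}{\sqrt d}e^{-\kappa x^2}$, the tail integrand is $\frac{(1+2j_{(1)}/e)^2}{d\rho^2c_+^2}\exp[-(x^2+y^2-4\rho xy)/d]$.
\item Integrating out $y$ leaves a decay rate in $x$ of only $(1-4\rho^2)/d$. This is $3/7$ at $\rho=2/5$, and about $0.47$ near $\rho\approx0.387$, where the compact part is worst.
\item The prefactor is about $5\times10^2$. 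Include the factor $4$ in \eqref{eq:positive-HS}, the factor $2$ or $3$ from splitting the square, the union bound, and the factor $\sqrt{\pi d}$. Mills' inequality at $5$ then gives a tail of a few times $10^{-2}$ near $\rho\approx0.39$ and about $10^{-1}$ at $\rho=2/5$.
\item The paper's compact enclosure is already $3.99929$ on the slab $[0.38625,0.3875]$. So the tail budget is below $10^{-5}$ for the stated bound $3.9993$, and below $10^{-3}$ even for $\mathfrak h_+^2<4$.
\end{itemize}

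\textbf{Why $5$ worked before.} In the negative case, $\cos(\pi v)\geq-1/2$ on $[0,2/3]$ gives the envelope factor $e^{\sigma xy/d}$. The tail exponent there is $-(x^2+y^2-2\sigma xy)/d$, whose rate is at least $1/(1+\sigma)\geq2/3$. In the positive case the only available bound is $e^{a\cos\pi v}\leq e^{a}$, which doubles the cross term, and the rate degenerates as $\rho\to1/2$.

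\textbf{How to fix it.} The paper enlarges the compact region to $[0,7]^2$, with $1600$ panels per coordinate. This brings the certified kernel tail down to about $3.3\times10^{-7}$ and the additive tail to about $3\times10^{-12}$. Alternatively, you could keep a smaller box and sharpen the lower bound on $\widetilde J_+$. The bound \eqref{eq:positive-J-lower} throws away the growth $-\psi'(b)\geq\sqrt\pi\,e^{b^2}$ for $b\leq0$. In fact $J_+(x^2)\to1/2$, so $\widetilde J_+$ decays only like $1/(2x)$, not like $e^{-\kappa x^2}$. Either way, the tail budget has to be computed against the roughly $10^{-5}$ margin rather than assumed.

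Finally, the worst $\rho$ is near $0.387$, not at $2/5$. The paper locates it by adaptive dyadic bisection of $30$ initial $\rho$-panels, rather than by refining toward an endpoint chosen in advance.
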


\begin{proof}
The program \path{plur_cert_positive.py} uses Arb balls for all
transcendental tables and outward-rounded binary64 interval arithmetic
for the remaining operations.  It uses $64$ second-order midpoint panels
in the angular variable, $4096$ nonuniform $q$-intervals, $16384$
nonuniform $a$-intervals, and $65536$ nonuniform exponential
intervals.  Here the nonuniform intervals $[q_i,q_{i+1}]$ use $q_i=q_{\rm max}(i/4096)^2$ for each $0\leq i\leq 4096$.  The intervals for $a$ and the exponential function $z\mapsto e^{-z}$ are constructed similarly.  More specifically, $
q_{\max}=\frac{(2/5)7}{\sqrt{1-(2/5)^2}},
$
$
a_{\max}=\frac{2(2/5)7^2}{1-(2/5)^2},
$
$
z_{\max}=49.
$
The auxiliary meshes are
$
q_i=q_{\max}(i/4096)^2,
$
$
a_i=a_{\max}(i/16384)^2,
$
$
z_i=z_{\max}(i/65536)^2.
$
The square $0\leq x,y\leq7$ is divided into $1600$ radial panels in each coordinate.
The correlation $\rho$ intervals begin with 30 uniform intervals of width $.01$ in $[.1,.4]$.  If a slab does not certify $\mathfrak h_+^2(\rho)<4$, it is bisected dyadically until the upper bound is below $4$.

For completeness, we give the analytic tails.  Integration by parts in
\eqref{eq:positive-H-def}, followed by restriction to
$0\leq v\leq1/2$, gives
\begin{equation}\label{eq:positive-J-lower}
 \widetilde J_+(x)\geq
 c_+\frac{\rho}{\sqrt d}e^{-\kappa x^2},\qquad
 c_+\colonequals\frac{6-\pi}{4\pi^{3/2}}.
\end{equation}
Indeed, $-\psi'(-q\cos(\pi v))\geq\sqrt\pi$ on this interval and
\[
 \int_0^{1/2}v(1-3v/2)\sin(\pi v)\dd v
 =\frac{6-\pi}{2\pi^3}.
\]
Moreover, $ae^{-a}\leq1/e$ and $\cos(\pi v)\leq1$ give
\begin{equation}\label{eq:positive-K-envelope}
 |\mathsf K_+(x^2,y^2)|
 \leq\frac{1+2j_{(1)}/e}{d}
 e^{-\kappa(x^2+y^2)+(2\rho/d)xy}.
\end{equation}
After division by the two factors in
\eqref{eq:positive-J-lower}, the squared-kernel tail has exponential
factor
\[
 \exp\left[-\frac{x^2+y^2-4\rho xy}{d}\right].
\]
Since $\rho<1/2$,
\[
 \frac{x^2+y^2-4\rho xy}{d}
 =\frac{(y-2\rho x)^2}{d}
  +\frac{1-4\rho^2}{d}x^2.
\]
Extending the inner half-line to $\R$, using the union bound, and then
Mills' inequality controls the complement of $[0,7]^2$.  The additive
tails follow from \eqref{eq:positive-center},
\[
 \ell_\rho(x)^2\leq5\sum_{n=0}^4c_n(\rho)^2x^{2n},
\]
and the elementary Gaussian moment recurrence.  The certification output is
\begin{verbatim}
rho interval                    [0.100000000000, 0.400000000000]
x cutoff / panels              7 / 1600
initial rho panels             30
accepted rho slabs             90
total slab attempts            150
H_+^2 finite worst              3.99929063236 on [0.3862500000,0.3875000000]
H_+^2 kernel tail               3.33733813468e-07
H_+^2 additive tail             3.03117401848e-12
H_+^2 certified                 3.99929096609
H_+ certified                   1.99982273367
runtime                         141.11 seconds
CERTIFIED: uniform positive mean-zero H_+<2.
\end{verbatim}
This proves the lemma.
\end{proof}

\begin{proof}[Proof of Theorem~\ref{thm:positive-companion}]
The case $0<\rho<1/10$ was proven in \cite{Heilman2023}, so it suffices to consider $1/10\leq\rho\leq2/5$.  The positive-correlation
dimension-reduction theorem of
Heilman--Tarter~\cite{HeilmanTarter2021}, together with the cylinder
embedding, reduces the supremum to the planar problem.  Recall that $x_i\colonequals p_i-b$  and $p_i\colon(0,\infty)\to[0,1]$ are defined after \eqref{eq:bilinear-target}, for all $i=1,2,3$, and below $\langle\cdot,\cdot\rangle$ denotes the $L_2$ inner product on $(0,\infty)$ with respect to the exponential measure $\nu$.  The assumed condition $\gamma_{2}(A_i)=1/3$ for all $i=1,2,3$ implies
\[
 \int_0^\infty x_i(t)\dd\nu(t)=0,\qquad i=1,2,3.
\]
Circular rearrangement bounds each conditional self-overlap from above
by aligned circular arcs.  Applying Corollary~\ref{cor:positive-sharpened} and
integrating in the two radial variables gives
\[
 Q_\rho^{(2)}(A,A)-Q_\rho^{(2)}(\Theta,\Theta)
 \leq
 \sum_{i=1}^3
 \Big(\ip{x_i}{\mathcal K_+x_i}
       -2\ip{x_i}{M_{J_+}x_i}\Big).
\]
Equation \eqref{eq:positive-operator-bound} and
Lemma~\ref{lem:positive-certificate} bound the right side by
\[
 (\mathfrak h_+(\rho)-2)
 \sum_{i=1}^3\ip{x_i}{M_{J_+}x_i}\leq0.
\]
This proves \eqref{eq:positive-main}.
\end{proof}

\section{MAX-2-LIN(3) Hardness}

\begin{proof}[Proof of Theorem \ref{thm:max-2lin-3-hardness}]
The PCP verifier of \cite[Theorem~12 and page 35]{khot07}, with correlation parameter
$\rho$, has completeness
\[
 c(\rho)=\rho+\frac{1-\rho}{3}=\frac{1+2\rho}{3}.
\]
Moreover, additive folding \cite[Definition 22]{khot07} makes each function arising in the
soundness analysis balanced.  The positive-correlation Plurality Is
Stablest Theorem \ref{thm2} and the invariance principle \cite{mossel10} therefore bound the
soundness by
\[
 \operatorname{Stab}_\rho(\mathrm{Plur}_3)+o(1),
\]
where the three-candidate plurality stability is
\[
\operatorname{Stab}_\rho(\mathrm{Plur}_3)
 =
 \frac13+\frac{3}{4\pi^2}
 \left(
   \arccos(-\rho)^2-\arccos(\rho/2)^2
 \right).
\]
Taking $\rho=2/5$ gives $c(2/5)=3/5$ and
\[
\operatorname{Stab}_{2/5}(\mathrm{Plur}_3)
 =
 0.489434116402967\ldots.
\]
Absorbing the invariance-principle and reduction errors into
$\varepsilon$ gives the asserted gap.  Dividing the soundness by the
completeness gives
\[
 \frac{\operatorname{Stab}_{2/5}(\mathrm{Plur}_3)}{3/5}
 =
 0.815723527338279\ldots,
\]
as claimed.
\end{proof}

\noindent\textbf{Acknowledgements.}  Thanks to Elchanan Mossel for introducing me to the Plurality is Stablest Problem at MSRI (now SLMath) in 2011, and for helpful conversations over the years.  Thanks to Assaf Naor for helpful conversations over the years.  Thanks to Yeongwoo Hwang, Joe Neeman, Ojas Parekh, Kevin Thompson, and John Wright for helpful discussions.

ChatGPT 5.6 assisted in the preparation of this manuscript, including producing the spectral certificates in Propositions \ref{prop:certificate} and \ref{prop:k4-reduction} and Lemmas \ref{lem:certificate} and \ref{lem:positive-certificate}.

\end{document}